\documentclass[11pt,a4paper]{amsart}

\usepackage[utf8]{inputenc}
\usepackage[T1]{fontenc}
\usepackage[english]{babel}
\usepackage{amsmath,amssymb,amsthm}
\usepackage{mathtools}
\usepackage{hyperref}
\hypersetup{colorlinks=true,linkcolor=blue,citecolor=blue,urlcolor=blue}
\usepackage{geometry}
\usepackage{enumitem}
\usepackage{mathrsfs}

\theoremstyle{plain}
\newtheorem{theorem}{Theorem}[section]
\newtheorem{proposition}[theorem]{Proposition}
\newtheorem{lemma}[theorem]{Lemma}
\newtheorem{corollary}[theorem]{Corollary}
\newtheorem{conjecture}[theorem]{Conjecture}

\newtheorem{mainthm}{Theorem}
\newcommand{\mainthmletter}[1]{\renewcommand{\themainthm}{#1}}

\theoremstyle{definition}
\newtheorem{definition}[theorem]{Definition}
\newtheorem{example}[theorem]{Example}
\newtheorem{remark}[theorem]{Remark}
\newtheorem{assumption}[theorem]{Assumption}

\newcommand{\M}{\mathcal{M}}
\newcommand{\Mstar}{\mathcal{M}_*}
\newcommand{\Hcal}{\mathcal{H}}
\newcommand{\Lcal}{\mathcal{L}}
\newcommand{\Acal}{\mathcal{A}}
\newcommand{\Dcal}{\mathcal{D}}
\newcommand{\tr}{\tau}
\newcommand{\rhohat}{\widehat{\rho}}
\newcommand{\Jcal}{\mathcal{J}}
\newcommand{\dd}{\mathrm{d}}
\DeclareMathOperator{\Ent}{Ent}
\DeclareMathOperator{\grad}{grad}
\DeclareMathOperator{\dive}{div}
\DeclareMathOperator{\dom}{dom}

\title[Arnold--Nielsen Geometry for Complexity-Deformed Noncommutative Transport]
{Arnold--Nielsen Geometry for Complexity-Deformed Noncommutative Transport}

\author{Alberto Acevedo \and Antonio Falcó}
\address{Department of Mathematics, Physics and Technological Sciences, Universidad CEU Cardenal Herrera, Spain}
\email{alberto.acevedomelendez@uchceu.es}
\email{afalco@uchceu.es}

\date{\today}

\begin{document}

\begin{abstract}
We deform the Carlen--Maas--Wirth framework for noncommutative dynamical optimal transport by an Arnold--Nielsen type complexity operator. A positive state-independent operator $G$ compatible with the Hilbert bimodule structure of a noncommutative differential calculus $\partial\colon\M\to\Hcal$ can be absorbed into the calculus itself,
\[
\partial_G:=G^{1/2}\partial.
\]
The corresponding complexity-weighted transport problem is exactly the unweighted transport problem generated by $\partial_G$, whenever the deformed quadratic form remains Dirichlet. In finite dimensions we prove existence of minimizers for density-dependent Petz-class metrics and for fixed physical complexity weights, the latter without commutation between $G$ and the state-dependent mobility. On unitary orbits we identify the induced distance with a quotient metric coming from a right-invariant complexity geometry. This yields an exact Bell-state preparation result via Clairaut's relation and an exactly computed restricted-path upper bound for GHZ preparation; the Lindblad detailed-balance case is included only as entropy-gradient-flow background.
\end{abstract}

\keywords{Von Neumann algebras; noncommutative optimal transport; continuity equation; noncommutative differential calculus; Dirichlet forms; Lindblad equation; gradient flows}
\subjclass[2020]{46L51, 46L57, 49Q22, 81S22, 47D07}

\maketitle


\section{Introduction}
\label{sec:intro}

This paper develops an Arnold--Nielsen geometric deformation of noncommutative dynamical optimal transport. The starting point is that a complexity tensor should not merely penalize a fixed action: under a natural bimodule-compatibility condition it can be absorbed into the differential calculus that generates transport. This gives new Wasserstein-type geometries on von Neumann algebraic state spaces and, on unitary orbits, quotient metrics induced by right-invariant complexity geometries on compact Lie groups. The guiding principle is therefore geometric rather than only variational: complexity changes the differential structure, and the corresponding orbit dynamics becomes a quotient form of Arnold--Nielsen geodesic motion.

The classical model behind this viewpoint is the Benamou--Brenier formalism~\cite{BenamouBrenier2000}, which rewrites the squared Wasserstein-2 distance between two probability measures $\rho_0,\rho_1$ on $\mathbb{R}^n$ as the infimum of a kinetic action,
\begin{equation}
\label{eq:BB}
W_2(\rho_0,\rho_1)^2 = \inf_{\rho,v}\left\{ \int_0^1 \int_{\mathbb{R}^n} |v_t(x)|^2\,\rho_t(x)\,\dd x\,\dd t \; : \; \partial_t \rho_t + \dive(\rho_t v_t) = 0,\; \rho_{t=0}=\rho_0,\; \rho_{t=1}=\rho_1 \right\}.
\end{equation}
This dynamical formulation has motivated, over the last two decades, an active search for noncommutative analogues in which the pair $(\rho, v)$ is replaced by objects defined over a $C^*$-algebra or a von Neumann algebra $\M$. The work of Carlen and Maas~\cite{CarlenMaas2014,CarlenMaas2017,CarlenMaas2020,CarlenMaas2020Corr} and of Wirth~\cite{Wirth2018}, building on the theory of Dirichlet forms and derivations due to Cipriani and Sauvageot~\cite{CiprianiSauvageot2003}, shows that certain symmetric quantum Markov semigroups — in particular, those satisfying a detailed-balance condition — can be interpreted as gradient flows of relative entropy with respect to a transport metric built from a noncommutative differential calculus. In this framework the calculus
\[
\partial:\mathcal M\longrightarrow\mathcal H
\]
plays the role of a noncommutative gradient, the adjoint $\partial^*$ plays the role of a divergence, and the logarithmic mean $\widehat\rho$ supplies the noncommutative replacement for multiplying a velocity field by a density. The resulting functional inequalities and mixing-time estimates have since been used to control the entropy production of dissipative dynamics~\cite{DattaRouze2020} and, more recently, to certify quasi-optimal preparation of quantum Gibbs states~\cite{CapelEtAl2025}.

This dynamical, Dirichlet-form route is one of several inequivalent ways of quantizing the Benamou--Brenier and Kantorovich formulations of optimal transport. Static, coupling-based quantizations of the Wasserstein distance between density matrices have been developed by Caglioti, Golse and Paul~\cite{CagliotiGolsePaul2023}, by Cole, Eckstein, Friedland and \.Zyczkowski~\cite{ColeEtAl2021} via semidefinite programming over bipartite couplings, and, using a channel-based transport plan, by De Palma and Trevisan~\cite{DePalmaTrevisan2021}; a convex-regularization treatment of the resulting static problem is given in~\cite{CaputoEtAl2024}. None of these static formulations comes with a continuity equation of the type~\eqref{eq:continuity} below, which is specific to the dynamical route pursued here. The metric geometry of density matrices itself is classified, at the level of Riemannian (rather than transport) metrics, by Petz's theorem on monotone metrics~\cite{PetzSudar1996}, which underlies the Kubo--Ando family of density-dependent complexity metrics of Section~\ref{ssec:petz-existence}. Beyond density matrices, dynamical and static quantum transport distances have also been constructed on discrete quantum groups; in particular, Anshu, Jekel and Landry~\cite{AnshuJekelLandry2025} recently introduced Wasserstein-type distances on the quantum permutation group $S_n^+$, generalizing the Hamming metric on $S_n$ rather than the Riemannian geometry of $\M$-states pursued here — a structurally different but closely related instance of the same broad programme of endowing noncommutative probabilistic data with a transport geometry.

The central thesis of this paper is a single deformation principle:
\emph{a compatible complexity operator deforms the noncommutative differential calculus itself}. More precisely, for a state-independent positive operator $G$ compatible with the left and right module actions and the real structure of $\mathcal H$, the operator
\[
\partial_G:=G^{1/2}\partial
\]
is again a closed real derivation. When the associated quadratic form remains Markovian, it defines a new conservative Dirichlet form and hence a new dynamical transport geometry. In this sense complexity is internalized into the geometry, instead of being imposed only as a cost on an unchanged continuity equation.

This principle also has an Arnold--Nielsen interpretation. On unitary orbits, the same type of complexity tensor induces a right-invariant metric on the relevant compact Lie group, and the transport distance between two orbit states is the quotient distance from the identity to the stabilizer coset of the target. Thus the construction connects the Carlen--Maas--Wirth transport calculus with the geometric-mechanics idea that dynamics is geodesic motion for an invariant metric, and with Nielsen's view of quantum complexity as anisotropic geometry on unitary groups. The qubit and entangling examples, the Petz-class existence theory, and the dissipative sector are organized around making this thesis precise and delimiting exactly where it does and does not yet extend.

Concretely, the general formulation of the \emph{noncommutative continuity equation} developed below on a tracial von Neumann algebra $(\M,\tau)$ is complemented by a logically distinct unitary-orbit control problem governed by the Liouville--von Neumann equation $\dot\rho_t = -i[H_t,\rho_t]$ (Section~\ref{sec:hamiltonian}); it naturally incorporates a penalization operator $G_\rho$ acting as a \emph{complexity metric} on the directions of the bimodule $\Hcal$, penalizing certain derivations or degrees of locality (Section~\ref{sec:complexity}); it admits a distinguished class of state-independent, bimodule-compatible complexity operators for which the penalty can be absorbed into the differential calculus itself, producing a new transport geometry on a general finite von Neumann algebra (Section~\ref{sec:calculus-deformation}); and it admits the dissipative Lindblad dynamics as a special, and separately motivated, case, reinterpreted as a gradient flow of relative entropy with respect to a transport metric (Section~\ref{sec:lindblad}).

\subsection*{Organization}

Section~\ref{sec:transport} fixes the mixed-state transport formalism: states, the noncommutative continuity equation, the operator mean $\rhohat$, and the dynamical distance $W_\partial$. Section~\ref{sec:complexity-actions} introduces complexity metrics and the calculus deformation $\partial_G$, then compares the cost-only and calculus-deformed variational problems and checks the classical Benamou--Brenier limit. Section~\ref{sec:existence} proves finite-dimensional existence of minimizers for the unpenalized metric, for Petz-class density-dependent metrics, and for fixed physical weights. Sections~\ref{sec:orbits}--\ref{sec:bell-example} develop the unitary-orbit geometry, including the quotient metric theorem, the anisotropic qubit, Bell-state preparation, and the restricted GHZ upper bound. Section~\ref{sec:discussion-main} records the detailed-balance Lindblad background and the remaining open questions.

\subsection*{Prior work reused}

Sections~\ref{sec:duality}--\ref{sec:continuity} and~\ref{sec:rhohat}--\ref{sec:action} fix notation and record, in a form adapted to the present setting, constructions due to Cipriani and Sauvageot~\cite{CiprianiSauvageot2003} (the differential calculus $(\Hcal,\partial)$) and to Carlen and Maas~\cite{CarlenMaas2014,CarlenMaas2017,CarlenMaas2020,CarlenMaas2020Corr} and Wirth~\cite{Wirth2018} (the operator logarithmic mean $\rhohat$, the dynamical distance $W_\partial$, and their gradient-flow interpretation of Lindblad dynamics). Theorem~\ref{thm:gradient-flow} in Section~\ref{sec:lindblad} is an informal restatement of their results, included for completeness and not claimed as new; likewise, the joint-convexity property underlying Step~3 of Theorem~\ref{thm:existence} is cited, not reproved, from~\cite{CarlenMaas2017,CarlenMaas2020}.

Compared with the authors' dilation-based complexity framework~\cite{AcevedoFalco2026}, the present paper has a different mathematical object and new variational content. The earlier work studies geometric costs of unitary and weak-coupling dilations of quantum evolutions. Here the new contributions are the absorption of compatible complexity weights into a noncommutative differential calculus (Theorems~\ref{thm:deformed-derivation}--\ref{thm:complexity-deformation}), the finite-dimensional existence theorem for fixed noncommuting physical weights (Theorem~\ref{thm:existence-fixed-G}), and the exact Bell-state quotient-geodesic computation (Theorem~\ref{thm:bell-global}). We use~\cite{AcevedoFalco2026} only to motivate the choice of complexity weights in the Hamiltonian and Lindblad sectors and to import the dilation estimate behind Proposition~\ref{prop:lindblad-bound}; those ingredients are not presented here as new transport theorems.

\subsection*{Main results and contributions}

The main results are organized into four blocks, each stated and proved in full in the section indicated; the existence block contains several complementary finite-dimensional theorems. Everything else is either a worked example, an existence-theoretic refinement of one of these blocks, or a proposal explicitly flagged as such, and the hypotheses each result depends on are stated with it rather than left implicit.

Theorem~\ref{thm:deformed-derivation}, proved in Section~\ref{sec:calculus-deformation}, has an unconditional part and a conditional part. Under bimodule compatibility, reality and uniform positivity of $G$ (Definition~\ref{def:compatible-complexity}), the deformed operator $\partial_G=G^{1/2}\partial$ is a closed real derivation with the same domain as $\partial$. It defines a new conservative Dirichlet form only under the further, non-automatic hypothesis that the resulting quadratic form is Markovian (Assumption~\ref{ass:markov-compatible-G}, for which Proposition~\ref{prop:diagonal-deformation} gives a verifiable sufficient condition). Theorem~\ref{thm:complexity-deformation}, in the same section, shows that the complexity-weighted transport problem for such a $G$ is exactly the unweighted transport problem generated by $\partial_G$ — an equivalence about the deformed continuity equation $\dot\rho_t+\partial^*(G\rhohat_tv_t)=0$ and action~\eqref{eq:action-identification}, and not about the cost-only penalization of Section~\ref{sec:complexity}, which keeps the original continuity equation~\eqref{eq:continuity} fixed (Remark~\ref{rem:cost-versus-calculus-deformation}). Together with the bi-Lipschitz comparison for fixed cost weights on the unchanged continuity equation (Proposition~\ref{prop:fixed-G-comparison}) and, under Wirth's regularity and gradient-estimate hypotheses~\cite{Wirth2018}, a conditional geodesic-existence corollary (Corollary~\ref{cor:deformed-geodesics}), these two theorems constitute the central contribution of this paper: a compatible complexity operator deforms the calculus itself, rather than merely reweighting a fixed transport problem.

The existence results of Section~\ref{sec:existence} provide the variational support for the mixed-state part of the theory. Theorem~\ref{thm:existence} proves finite-dimensional existence for the unweighted metric $W_\partial$ by the direct method in momentum variables. Theorem~\ref{thm:existence-petz} extends this to the Petz class of density-dependent complexity metrics built from range-regular Kubo--Ando means. Theorem~\ref{thm:existence-fixed-G} then treats fixed physical weights: even when $G$ is diagonal in a basis unrelated to the eigenbasis of $\rho_t$, and the momentum cost is not jointly convex, minimizers still exist by lower semicontinuity and convexity in the momentum variable. Kubo--Ando means without range-regularity, such as the arithmetic mean, instead define an alternative mobility rather than a complexity weight on the original one (Example~\ref{ex:arithmetic-mean}).

Theorem~\ref{thm:orbit-geometry}, proved in Section~\ref{sec:bridge}, identifies the distance between two states on the same unitary orbit with the distance, in a right-invariant complexity geometry on the ambient compact Lie group, from the identity to the coset of the stabilizer of the initial state, and shows that this distance is attained, by compactness and the Hopf--Rinow theorem. This unitary-orbit sector is motivated by, but not derived from, the three theorems above: it is governed directly by the Liouville--von Neumann equation (Remark~\ref{rem:spectrum-preserved}) and does not invoke the operator mean $\rhohat$. We illustrate Theorem~\ref{thm:orbit-geometry} with two exact examples: a genuine global-optimality theorem, via Clairaut's relation, for an entangling $\mathfrak{su}(2)$ subsystem connecting a product state to a Bell state (Theorem~\ref{thm:bell-global}), and its extension to an exactly computed, exponentially weighted upper bound — not a lower bound, and not a proof of exponential complexity — for a restricted $N$-qubit GHZ preparation path (Theorem~\ref{thm:ghz-generic}, Section~\ref{ssec:ghz-scope}).

Beyond these main result blocks, we propose, without fully resolving, dilation-motivated complexity weights $G_\rho$ for the Hamiltonian sector (Section~\ref{sec:bridge}) and, more tentatively, for the Lindblad sector (Section~\ref{ssec:lindblad-complexity}), obtained from the geometric complexity of unitary dilations of~\cite{AcevedoFalco2026}; the Lindblad proposal is stated as a definition and a dilation-based cost bound (Proposition~\ref{prop:lindblad-bound}), not as an intrinsic existence theory (Remark~\ref{rem:lindblad-scope}). Section~\ref{sec:discussion} delimits precisely what remains open for the calculus-deformation, existence, unitary-orbit and Lindblad components.

\section{Noncommutative dynamical transport}
\label{sec:transport}

This section fixes the mixed-state transport formalism used throughout the paper: the predual duality between observables and states, the noncommutative differential calculus and its associated continuity equation, the operator mean $\rhohat$, and the resulting action functional and dynamical distance $W_\partial$ in both velocity and momentum variables. None of the results in this section is new; see the ``Prior work reused'' paragraph of Section~\ref{sec:intro}.

\subsection{States and dual dynamics}
\label{sec:duality}

Let $\M$ be a von Neumann algebra and let $\Mstar$ be its predual. We consider a family of normal states
\[
\omega_t \in \Mstar^+, \qquad \omega_t(\mathbf{1}) = 1.
\]

\begin{assumption}
\label{ass:trace}
There exists a normal, faithful, semifinite trace $\tau$ on $\M$, so that every normal state admits a density in $L^1(\M,\tau)$: there is $\rho_t \in L^1(\M,\tau)_+$ with $\tau(\rho_t)=1$ such that
\[
\omega_t(a) = \tau(\rho_t\, a), \qquad a \in \M.
\]
\end{assumption}

Under Assumption~\ref{ass:trace}, the dynamics can be described equivalently at the level of states $\omega_t$ or of densities $\rho_t$.

An evolution of observables is given by a family (strongly continuous, in the appropriate sense) of linear maps
\[
\Lcal_t \colon \M \longrightarrow \M.
\]
The Heisenberg equation for an observable $a_t$ is
\begin{equation}
\label{eq:heisenberg}
\frac{\dd}{\dd t} a_t = \Lcal_t(a_t),
\end{equation}
and the dual equation, on the predual $\Mstar$, is
\begin{equation}
\label{eq:dual-state}
\dot\omega_t = \Lcal_{t*}\,\omega_t,
\end{equation}
where $\Lcal_{t*}\colon \Mstar \to \Mstar$ is the predual of $\Lcal_t$, characterized by
\[
(\Lcal_{t*}\omega)(a) = \omega(\Lcal_t a), \qquad a \in \M,\; \omega \in \Mstar.
\]
In terms of densities, \eqref{eq:dual-state} reads
\begin{equation}
\label{eq:dual-density}
\dot\rho_t = \Lcal_{t*}(\rho_t).
\end{equation}

\subsection{Noncommutative continuity equation}
\label{sec:continuity}

\subsubsection{Noncommutative differential calculus}

Let $\Hcal$ be an $\M$-Hilbert bimodule and let
\[
\partial \colon \M \longrightarrow \Hcal
\]
be a (closable, densely defined) derivation, that is,
\begin{equation}
\label{eq:leibniz}
\partial(ab) = (\partial a)\,b + a\,(\partial b), \qquad a,b \in \dom(\partial).
\end{equation}
We denote by $\partial^*\colon \Hcal \to \M$ its adjoint with respect to the natural inner products induced by $\tau$ on $\M \subset L^2(\M,\tau)$ and on $\Hcal$; $\partial^*$ plays the role of \emph{noncommutative divergence}.

\begin{definition}[Continuity equation]
\label{def:continuity}
Given a velocity field $v_t \in \Hcal$ and a density $\rho_t \in L^1(\M,\tau)_+$, we say that the pair $(\rho_t,v_t)_{t\in[0,T]}$ satisfies the \emph{noncommutative continuity equation} if
\begin{equation}
\label{eq:continuity}
\dot\rho_t + \partial^*\bigl(\rhohat_t\, v_t\bigr) = 0,
\end{equation}
where $\rhohat_t$ is the noncommutative multiplication operator associated with $\rho_t$ (Section~\ref{sec:rhohat}); implicit in~\eqref{eq:continuity} is the requirement $\rhohat_tv_t\in\dom(\partial^*)$ for a.e.\ $t$, which we take as part of the definition of an \emph{admissible curve} rather than as an automatic consequence of $v_t\in\Hcal$. In the finite-dimensional setting of Sections~\ref{sec:action} onward, $\partial^*$ is bounded and everywhere defined, so this requirement is automatic and imposes no restriction; in the general semifinite setting it is a genuine constraint, which the weak form of the next subsection (eq.~\eqref{eq:weak-form}) is designed to bypass, since it only requires $v_t\in\Hcal$ and $a\in\dom(\partial)$, not $\rhohat_tv_t\in\dom(\partial^*)$.
\end{definition}

Equation~\eqref{eq:continuity} is the noncommutative analogue of the classical continuity equation from transport theory. In the commutative specialization $\M = L^\infty(\mathbb{R}^n)$, $\Hcal = L^2(\mathbb{R}^n;\mathbb{R}^n)$, $\partial = \nabla$, integration by parts gives $\partial^*=-\dive$, so~\eqref{eq:continuity} becomes $\dot\rho_t-\dive(\rho_tv_t)=0$, i.e.\ the standard continuity equation for the relabeled velocity field $-v_t$. This sign convention is fixed throughout.

\subsubsection{Weak form}

The most robust formulation — since it does not require every state to admit a bounded density in $\M$ — is the weak form: for every self-adjoint $a=a^*\in\dom(\partial)$, using the adjoint convention $\langle\partial a,\xi\rangle_{\Hcal} = \langle a,\partial^*\xi\rangle_{L^2(\M,\tau)}$, a direct computation from Definition~\ref{def:continuity} and the self-adjointness of $\rhohat_t$ gives
\begin{equation}
\label{eq:weak-form}
\frac{\dd}{\dd t}\,\omega_t(a) = -\bigl\langle \partial a,\, \rhohat_t\, v_t \bigr\rangle_{\Hcal},
\end{equation}
Writing $\langle\partial a,\rhohat_tv_t\rangle$ rather than $\langle v_t,\rhohat_t\partial a\rangle$ avoids any need to invoke conjugate-symmetry of the (sesquilinear, in general) inner product on $\Hcal$ to pass between the two, an issue that only disappears automatically when both arguments are real. Identity~\eqref{eq:weak-form} is the noncommutative analogue of
\[
\frac{\dd}{\dd t}\int a\,\rho_t\,\dd x = -\int \nabla a \cdot v_t\, \rho_t\,\dd x,
\]
i.e.\ the same relabeled classical continuity equation described above.

\subsection{The noncommutative multiplication \texorpdfstring{$\rhohat$}{rho-hat}}
\label{sec:rhohat}

In the commutative case, ``multiplying the velocity field by the density'' is simply $\rho v$. In the noncommutative case this operation is not uniquely determined, since in general $\rho v \neq v\rho$. We adopt the \emph{operator logarithmic mean}:
\begin{equation}
\label{eq:rhohat-def}
\rhohat(v) = \int_0^1 \rho^{s}\, v\, \rho^{1-s}\, \dd s = \int_0^1 L_{\rho^s}R_{\rho^{1-s}}(v)\, \dd s,
\end{equation}
where $L_a,R_a$ denote left and right multiplication by $a\in\M$ on $\Hcal$. This presupposes that $\Hcal$ carries commuting left and right actions of $\M$ under which $L_a,R_a$ are well defined for the powers of $\rho$ appearing above. If $\rho\in L^1(\M,\tau)_+$ is not bounded, then $\rho^s$ need not lie in $\M$, and $L_{\rho^s},R_{\rho^{1-s}}$ need not be bounded operators on $\Hcal$. Thus~\eqref{eq:rhohat-def} must, in general, be understood as a closed quadratic form, finite exactly where the integral converges, rather than as an everywhere-defined bounded operator.

We do not pursue this domain analysis in general. Every result of this paper that uses $\rhohat$ substantively (Sections~\ref{sec:action}--\ref{sec:complexity} onward) is proved for $\M=M_n(\mathbb C)$, where $\rho$ is automatically bounded and~\eqref{eq:rhohat-def} is an everywhere-defined bounded operator (Example~\ref{ex:finite-dim}). The structural results of Section~\ref{sec:calculus-deformation} are the exception: they concern the calculus itself and hold on a general finite von Neumann algebra. Wherever $\rhohat$ appears explicitly there (Definition~\ref{def:G-continuity}), the same domain caveat applies. The definition above is stated for a general semifinite $(\M,\tau)$ only to fix notation and to make transparent which finite-dimensional identities are, in principle, extendable.

\begin{example}[Finite dimension]
\label{ex:finite-dim}
If $\rho = \sum_k \lambda_k \lvert k\rangle\langle k\rvert$ is the spectral decomposition of $\rho$, then, in the basis $\{|k\rangle\}$,
\begin{equation}
\label{eq:rhohat-matrix}
\bigl(\rhohat(v)\bigr)_{jk} = \Lambda(\lambda_j,\lambda_k)\, v_{jk}, \qquad \Lambda(x,y) = \frac{x-y}{\log x - \log y},
\end{equation}
where $\Lambda$ is the \emph{logarithmic mean} of $x,y>0$ (with $\Lambda(x,x)=x$ by continuity).
\end{example}

\begin{remark}[Non-tracial case]
\label{rem:modular}
When $\M$ is not semifinite, or does not admit a global normal faithful trace, the construction of $\rhohat$ must incorporate the modular operator $\Delta_\sigma$ associated with a reference state $\sigma$, replacing the scalar logarithmic mean with its version in terms of modular frequencies. This is the natural extension of \eqref{eq:rhohat-def} to the Tomita--Takesaki framework~\cite{Takesaki2003}.
\end{remark}

\subsection{Action functional and dynamical distance}
\label{sec:action}

\begin{definition}[Transport action]
\label{def:action}
For $(\rho_t,v_t)_{t\in[0,T]}$ satisfying \eqref{eq:continuity}, define
\begin{equation}
\label{eq:action}
\Acal(\rho,v) = \frac{1}{2}\int_0^T \bigl\langle v_t,\, \rhohat_t\, v_t \bigr\rangle_{\Hcal}\,\dd t.
\end{equation}
\end{definition}

\begin{definition}[Dynamical distance $W_\partial$]
\label{def:Wdist}
For normal states $\rho_0,\rho_1$ with density,
\begin{equation}
\label{eq:Wdist}
W_\partial(\rho_0,\rho_1)^2 = \inf\Bigl\{\, 2T\,\Acal(\rho,v) \; : \; \dot\rho_t + \partial^*(\rhohat_t v_t)=0,\; \rho_{t=0}=\rho_0,\; \rho_{t=T}=\rho_1 \,\Bigr\}.
\end{equation}
In particular, for $T=1$,
\begin{equation}
\label{eq:Wdist-T1}
W_\partial(\rho_0,\rho_1)^2 = \inf_{\rho,v} \int_0^1 \bigl\langle v_t,\, \rhohat_t\, v_t\bigr\rangle_{\Hcal}\,\dd t,
\end{equation}
the infimum taken over admissible curves with fixed endpoints.
\end{definition}

Equations~\eqref{eq:Wdist}--\eqref{eq:Wdist-T1} constitute the direct noncommutative analogue of the Benamou--Brenier formula~\cite{BenamouBrenier2000}.

\begin{remark}[$W_\partial$ is, in general, an extended pseudometric]
\label{rem:Wdist-pseudometric}
Calling $W_\partial$ a ``distance'' presumes finiteness, point separation, and independence of the parametrization, none of which is automatic from Definition~\ref{def:Wdist} alone. First, the infimum in~\eqref{eq:Wdist} can be $+\infty$ if $\rho_0,\rho_1$ lie in different accessibility components of the admissible curves, e.g.\ different unitary orbits when $\Hcal$ only implements inner automorphisms. Second, the $2T$ normalization makes the value invariant under affine time-reparametrization, by the usual Cauchy--Schwarz argument for Benamou--Brenier-type functionals.

Third, a nontrivial kernel creates conserved moments and hence accessibility obstructions. By the weak form~\eqref{eq:weak-form}, if $a\in\ker\partial$, then $\tau(\rho_ta)$ is conserved along every admissible curve. States with different values of $\tau(\rho_0a)$ for some $a\in\ker\partial$ cannot be connected by any admissible curve, so the state space decomposes into components indexed by these conserved moments. Under the ergodicity condition $\ker\partial=\mathbb C1$, no nontrivial linear conserved observable arises from the calculus. Together with accessibility of $\rho_1$ from $\rho_0$ (Assumption~\ref{ass:existence}), $W_\partial$ is then a genuine metric on the accessible component, possibly taking the value $+\infty$ outside it. We use the term ``dynamical distance'' throughout in this qualified sense.
\end{remark}

\begin{definition}[Momentum variable and momentum-space cost]
\label{def:momentum}
For $(\rho_t,v_t)_{t\in[0,T]}$ satisfying the continuity equation~\eqref{eq:continuity}, set $m_t:=\rhohat_t v_t\in\Hcal$, the \emph{momentum} conjugate to $v_t$. The continuity equation becomes the linear equation $\dot\rho_t+\partial^*m_t=0$, and the action~\eqref{eq:action} rewrites in momentum variables as $\Acal(\rho,v)=\int_0^T\ell(\rho_t,m_t)\,\dd t$, where
\[
\ell(\rho,m) := \begin{cases} \dfrac12\,m^*\rhohat_\rho^{-1}m, & m\in\operatorname{ran}(\rhohat_\rho),\\[4pt] +\infty, & \text{otherwise,}\end{cases}
\]
the noncommutative analogue of the classical Legendre-type substitution $L(\rho,v)=\tfrac12\rho|v|^2 \leftrightarrow \ell(\rho,m)=\tfrac{|m|^2}{2\rho}$ underlying the original Benamou--Brenier formula~\cite{BenamouBrenier2000}. This reformulation, and the resulting joint convexity of $\ell$, drive the existence theory of Section~\ref{sec:existence}.
\end{definition}

\section{Complexity-weighted actions and deformed calculi}
\label{sec:complexity-actions}

We now introduce the complexity metric $G_\rho$ as a penalization of the transport action of Section~\ref{sec:transport} (Section~\ref{sec:complexity}), and isolate the distinguished, state-independent case in which the penalty can instead be absorbed into the differential calculus itself (Section~\ref{sec:calculus-deformation}), yielding Theorem~\ref{thm:deformed-derivation} and Theorem~\ref{thm:complexity-deformation} of the Introduction. Section~\ref{sec:synthesis} collects both constructions as a single variational problem, and Section~\ref{ssec:bb-recovery} checks that the whole apparatus reduces exactly to the classical Benamou--Brenier problem in the commutative, unpenalized case.

\subsection{Complexity metric and penalized action}
\label{sec:complexity}

To connect the previous formulation with a notion of geometric complexity on $\Hcal$, we introduce a positive penalization operator
\[
G_\rho \colon \Hcal \longrightarrow \Hcal, \qquad G_\rho \geq 0,
\]
which penalizes certain derivations, degrees of locality, or directions of the bimodule. Since $G_{\rho_t}$ and $\rhohat_t$ need not commute, the naively ordered form $\langle v,G_{\rho_t}\rhohat_t v\rangle$ is not guaranteed to be self-adjoint, let alone real and nonnegative, for a general $\rho$-dependent $G_\rho$; we instead define the penalized action with the two operators symmetrized around $v$,
\begin{equation}
\label{eq:action-G}
\Acal_G(\rho,v) = \frac{1}{2}\int_0^T \bigl\langle \rhohat_t^{1/2}v_t,\, G_{\rho_t}\,\rhohat_t^{1/2} v_t \bigr\rangle_{\Hcal}\,\dd t,
\end{equation}
which is manifestly real and nonnegative for every $G_\rho\geq0$, with no commutativity hypothesis required. (When $[G_\rho,\rhohat_\rho]=0$ — e.g.\ for the Petz-class metrics of Section~\ref{ssec:petz-existence} — this reduces to the naively ordered expression, since $\rhohat_t^{1/2}G_{\rho_t}\rhohat_t^{1/2}=G_{\rho_t}\rhohat_t$ on commuting operators.) The continuity equation~\eqref{eq:continuity} remains unchanged: only the optimality criterion in the minimization of $\Acal_G$ changes, not the dynamical constraint.

\begin{example}[Inner derivations and finite-dimensional form]
\label{ex:internal-derivations}
If the differential calculus is given by a finite family of inner derivations
\[
\partial_j a = [X_j, a], \qquad j = 1,\dots,m,
\]
and $v_t = (v_{1,t},\dots,v_{m,t})$, the dynamical equation~\eqref{eq:continuity} takes the form
\begin{equation}
\label{eq:continuity-multi}
\dot\rho_t + \sum_{j=1}^m \partial_j^*\bigl(\rhohat_t^{(j)} v_{j,t}\bigr) = 0.
\end{equation}
In a tracial setting, for suitably antisymmetric derivations one has $\partial_j^* = -\partial_j$, so that
\begin{equation}
\label{eq:continuity-multi-tracial}
\dot\rho_t - \sum_{j=1}^m \bigl[X_j,\, \rhohat_t^{(j)} v_{j,t}\bigr] = 0.
\end{equation}
This is an explicit candidate transport equation compatible with a diagonal complexity metric $G_\rho = \operatorname{diag}(g_1,\dots,g_m)$, $g_j \geq 0$, weighting each generator $X_j$ independently.
\end{example}

\begin{remark}[Cost penalization versus deformation of the calculus]
\label{rem:cost-versus-calculus-deformation}
The functional~\eqref{eq:action-G} penalizes the cost of an admissible curve while leaving the continuity equation~\eqref{eq:continuity} unchanged, as already noted above; this is the natural formulation for a general, possibly state-dependent family $G_\rho$, including the Petz-class metrics of Section~\ref{ssec:petz-existence}. For the distinguished class of state-independent penalties compatible with the left and right bimodule actions, Section~\ref{sec:calculus-deformation} develops a second, logically distinct construction: the penalty is absorbed directly into the derivation, $\partial_G:=G^{1/2}\partial$, deforming both the cost \emph{and} the continuity equation, and yielding — whenever the resulting quadratic form remains a Dirichlet form — a genuinely new transport geometry, rather than merely a new cost on the original admissible curves.
\end{remark}

\subsection{Complexity-compatible deformations of the differential calculus}
\label{sec:calculus-deformation}

The complexity metric $G_\rho$ of Section~\ref{sec:complexity} reweights the cost of an admissible curve while leaving~\eqref{eq:continuity} unchanged (Remark~\ref{rem:cost-versus-calculus-deformation}). We now isolate a distinguished, \emph{state-independent} class of penalizations for which the penalty can instead be absorbed into the differential calculus $\partial$ itself, producing a new derivation $\partial_G$, a new conservative quadratic form, and — whenever this form is again Markovian — a genuinely new noncommutative transport geometry on a finite von Neumann algebra, in the sense of Cipriani--Sauvageot~\cite{CiprianiSauvageot2003} and Wirth~\cite{Wirth2018}. Throughout this section $G$ denotes a fixed operator on $\Hcal$, independent of $\rho$, in contrast with the state-dependent $G_\rho$ of Section~\ref{sec:complexity} and Section~\ref{ssec:petz-existence}; the two classes are compared in Remark~\ref{rem:deformation-vs-petz} below.

\begin{assumption}
\label{ass:calculus-deformation}
$(\M,\tau)$ is a finite von Neumann algebra with separable predual and faithful normal trace, and $\partial\colon\dom(\partial)\subset L^2(\M,\tau)\to\Hcal$ is a closed real derivation associated with a conservative symmetric Dirichlet form $\mathcal E(a)=\|\partial a\|_\Hcal^2$, as in Section~\ref{sec:continuity} and in the sense of Cipriani--Sauvageot~\cite{CiprianiSauvageot2003}. We write $\Jcal$ for the canonical antiunitary real structure on $\Hcal$ for which $\partial(a^*)=-\Jcal(\partial a)$, $a\in\dom(\partial)$.
\end{assumption}

\begin{definition}[Calculus-compatible complexity operator]
\label{def:compatible-complexity}
An operator $G\in\mathcal B(\Hcal)$ is a \emph{calculus-compatible complexity operator} if it is positive, self-adjoint, and
\begin{enumerate}[label=\textnormal{(\roman*)}]
	\item $cI_\Hcal\leq G\leq CI_\Hcal$ for some $0<c\leq C<\infty$;
	\item $[G,L_a]=[G,R_a]=0$ for every $a\in\M$ ($G$ is a real $\M$-bimodule endomorphism);
	\item $G\Jcal=\Jcal G$.
\end{enumerate}
\end{definition}

By continuous functional calculus applied to the single positive operator $G$, conditions (ii)--(iii) hold verbatim with $G$ replaced by $G^{1/2}$ or $G^{-1/2}$. The left and right module representations commute \emph{with each other} — $[L_a,R_b]=0$ for every $a,b\in\M$, a standing bimodule axiom, not to be confused with $[L_a,L_b]$ or $[R_a,R_b]$, which need not vanish when $\M$ is noncommutative — and by (ii) $G$ lies in the commutant of both representations. For $\rho\in L^1(\M,\tau)_+$ possibly unbounded, $\rho$ need not itself lie in $\M$, so condition~(ii) cannot be applied directly with $a=\rho$; instead, writing $\rho=\int_0^\infty\lambda\,\dd E^\rho(\lambda)$ for its spectral resolution, each spectral projection $E^\rho(B)$ ($B\subset(0,\infty)$ Borel) does lie in $\M$, since $\rho$ is affiliated with $\M$, so $[G,L_{E^\rho(B)}]=[G,R_{E^\rho(B)}]=0$ by (ii) applied to $a=E^\rho(B)$. Hence $G$ commutes strongly with the (possibly unbounded) affiliated operators $L_\rho,R_\rho$ obtained from these projections by functional calculus, and consequently with $\rhohat_\rho=\int_0^1L_{\rho^s}R_{\rho^{1-s}}\,\dd s$ (Section~\ref{sec:rhohat}), itself a joint functional calculus of $L_\rho,R_\rho$: for every $s\in[0,1]$, $[G,L_{\rho^s}]=[G,R_{\rho^{1-s}}]=0$, hence
\begin{equation}
\label{eq:G-commutes-rhohat}
[G,\rhohat_\rho]=0, \qquad\text{and consequently}\qquad [G^{1/2},\rhohat_\rho^{1/2}]=0,
\end{equation}
the second identity because commuting positive operators have commuting square roots.

\begin{definition}[Complexity-deformed calculus]
\label{def:deformed-calculus}
For a calculus-compatible complexity operator $G$, define
\[
\partial_G := G^{1/2}\partial, \qquad \dom(\partial_G):=\dom(\partial), \qquad \mathcal E_G(a):=\|\partial_Ga\|_\Hcal^2=\langle\partial a,G\partial a\rangle_\Hcal, \quad a\in\dom(\partial).
\]
\end{definition}

\mainthmletter{A}
\begin{mainthm}[Compatible deformation of the calculus]
\label{thm:deformed-derivation}
Under Definition~\ref{def:compatible-complexity}, $\partial_G$ is again a closed real derivation, $\dom(\partial_G)=\dom(\partial)$, and
\[
\sqrt c\,\|\partial a\|_\Hcal \leq \|\partial_Ga\|_\Hcal \leq \sqrt C\,\|\partial a\|_\Hcal, \qquad a\in\dom(\partial).
\]
If, in addition, $\mathcal E_G$ satisfies the Markov compatibility hypothesis of Assumption~\ref{ass:markov-compatible-G} below, then $\mathcal E_G$ is a conservative symmetric Dirichlet form on $L^2(\M,\tau)$.
\end{mainthm}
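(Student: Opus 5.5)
The plan is to verify the four structural properties of $\partial_G$ one at a time. Each follows from one of the three conditions in Definition~\ref{def:compatible-complexity}. The functional calculus for the single operator $G$ lets us pass those conditions to $G^{1/2}$ and $G^{-1/2}$. Two-sided bounds transfer as well: $\sqrt c\,I\le G^{1/2}\le\sqrt C\,I$.

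\emph{Norm bounds.} The norm comparison is immediate from
\[
\|\partial_Ga\|_\Hcal^2=\langle\partial a,G\partial a\rangle_\Hcal\in\bigl[c\|\partial a\|^2,\,C\|\partial a\|^2\bigr].
\]

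\emph{Derivation and reality.} For the Leibniz rule, take $a,b\in\dom(\partial)$ and apply $G^{1/2}$ to~\eqref{eq:leibniz}. Condition~(ii) for $G^{1/2}$ gives $G^{1/2}R_b=R_bG^{1/2}$ and $G^{1/2}L_a=L_aG^{1/2}$. Hence
\[
\partial_G(ab)=(\partial_Ga)\,b+a\,(\partial_Gb).
\]
Reality follows from condition~(iii) together with the fact that $\Jcal$ commutes with $G^{1/2}$:
\[
\partial_G(a^*)=G^{1/2}\bigl(-\Jcal\partial a\bigr)=-\Jcal\,\partial_Ga.
\]
Some care is needed here because $\Jcal$ is antilinear. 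Continuous functional calculus commutes with an antiunitary $\Jcal$ only for real-valued functions, but $t\mapsto t^{1/2}$ is real on the spectrum, so the step goes through. This can be checked by approximating $t^{1/2}$ uniformly on $[c,C]$ by polynomials with real coefficients.

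\emph{Closedness.} Take $a_n\to a$ in $L^2(\M,\tau)$ with $\partial_Ga_n\to\xi$. Since $G^{-1/2}$ is bounded, $\partial a_n=G^{-1/2}\partial_Ga_n\to G^{-1/2}\xi$. Closedness of $\partial$ then gives $a\in\dom(\partial)$ and $\partial a=G^{-1/2}\xi$, i.e.\ $\partial_Ga=\xi$. The domain equality holds by definition. It is also forced by the norm equivalence: the graph norms of $\partial$ and $\partial_G$ are equivalent, so $\mathcal E_G$ is a closed, densely defined form on the same domain as $\mathcal E$.

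\emph{Dirichlet form property.} Under Assumption~\ref{ass:markov-compatible-G}, the only remaining points are symmetry, conservativity and the Markov property. Symmetry and closedness are already established. The Markov property is exactly what the assumption supplies; in the Cipriani--Sauvageot formulation it is the contraction property $\mathcal E_G(a\wedge 1)\le\mathcal E_G(a)$ for self-adjoint $a$. Conservativity means $1\in\dom(\mathcal E_G)$ with $\mathcal E_G(1)=0$; here the finiteness of $\tau$ makes $1\in L^2$. It follows from $\partial 1=0$, which the Leibniz rule forces via $\partial1=2\partial1$, and which is in any case part of the conservativity in Assumption~\ref{ass:calculus-deformation}. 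Hence $\partial_G1=0$.

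\emph{Main obstacle.} I expect the Markov step to be the main difficulty, since bimodule compatibility alone does not make $G$ respect normal contractions. The plan is therefore not to prove it but to invoke the hypothesis as stated, and at most to note that the Cipriani--Sauvageot correspondence between real closed derivations and Dirichlet forms then applies to $(\Hcal,\partial_G)$.
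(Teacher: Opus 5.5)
Your proposal is correct and follows essentially the same route as the paper: the Leibniz rule and reality come from commuting $G^{1/2}$ past $L_a$, $R_b$ and $\Jcal$, the norm bounds and closedness come from $cI\le G\le CI$ (equivalent graph norms), and the Dirichlet-form conclusion is taken directly from Assumption~\ref{ass:markov-compatible-G}. Your extra checks fill in details the paper leaves implicit: that $\Jcal$ commutes with $G^{1/2}$ because $t^{1/2}$ is a uniform limit of real-coefficient polynomials on $[c,C]$, the explicit sequential closedness argument, and $\partial 1=0$ for conservativity.
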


\begin{proof}
For $a,b\in\dom(\partial)$, using~\eqref{eq:leibniz} and $[G^{1/2},L_a]=[G^{1/2},R_b]=0$,
\[
\partial_G(ab) = G^{1/2}\bigl((\partial a)b+a(\partial b)\bigr) = (\partial_Ga)b + a(\partial_Gb),
\]
so $\partial_G$ is again a derivation. Reality follows from $G^{1/2}\Jcal=\Jcal G^{1/2}$: $\partial_G(a^*)=G^{1/2}\partial(a^*)=-G^{1/2}\Jcal(\partial a)=-\Jcal G^{1/2}(\partial a)=-\Jcal(\partial_Ga)$. The norm bounds follow from $cI\leq G\leq CI$ applied to the vector $\partial a$; they give equivalent graph norms for $\partial$ and $\partial_G$, hence the same domain, and closedness of $\partial_G$ from that of $\partial$. The bimodule compatibility of Definition~\ref{def:compatible-complexity} guarantees that $\mathcal E_G$ is a closed quadratic form with the same form domain as $\mathcal E$; the Dirichlet-form conclusion, when Assumption~\ref{ass:markov-compatible-G} holds, is then the content of that assumption together with the closedness just established.
\end{proof}

The first part of Theorem~\ref{thm:deformed-derivation} is unconditional; the Dirichlet-form conclusion is not, since it does not follow automatically from bimodule compatibility alone. The Markov contraction property $\mathcal E_G(F(a))\leq\mathcal E_G(a)$, for every normal contraction $F\colon\mathbb R\to\mathbb R$ and $a=a^*$, involves the nonlinear map $F$ and need not be preserved under an arbitrary bounded reweighting $G$ of $\mathcal E$. We therefore isolate it as a separate, explicitly checkable hypothesis.

\begin{assumption}[Markov compatibility]
\label{ass:markov-compatible-G}
$\mathcal E_G$ is a conservative symmetric Dirichlet form on $L^2(\M,\tau)$. We write $\mathscr L_G:=\partial_G^*\partial_G=\partial^*G\partial$ for its generator and $(P_t^G)_{t\geq0}$ for the associated symmetric, conservative quantum Markov semigroup.
\end{assumption}

The following gives a natural, verifiable class for which Assumption~\ref{ass:markov-compatible-G} holds.

\begin{proposition}[Diagonal deformations]
\label{prop:diagonal-deformation}
Suppose $\Hcal=\bigoplus_{j=1}^m\Hcal_j$ and $\partial=\bigoplus_{j=1}^m\partial_j$ on a common domain, with each $\mathcal E_j(a):=\|\partial_ja\|_{\Hcal_j}^2$ a conservative symmetric Dirichlet form, and with each $\Hcal_j$ carrying the restriction of the same left/right $\M$-bimodule action and real structure. If $G=\bigoplus_{j=1}^mg_jI_{\Hcal_j}$ with $0<c\leq g_j\leq C$, then $G$ is a calculus-compatible complexity operator (Definition~\ref{def:compatible-complexity}) and
\[
\mathcal E_G(a) = \sum_{j=1}^m g_j\mathcal E_j(a)
\]
is again a conservative symmetric Dirichlet form; Assumption~\ref{ass:markov-compatible-G} holds.
\end{proposition}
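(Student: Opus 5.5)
The proof splits into two independent parts: first that $G$ is calculus-compatible in the sense of Definition~\ref{def:compatible-complexity}, and then that $\mathcal E_G$ is a conservative symmetric Dirichlet form.

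\textbf{Compatibility of $G$.} Write $P_j$ for the orthogonal projection of $\Hcal$ onto $\Hcal_j$, so that $G=\sum_j g_jP_j$. Then $G$ is bounded and self-adjoint, and since $c\le g_j\le C$ on every summand we get $cI\le G\le CI$, which is condition (i). Each $\Hcal_j$ carries the restriction of the same left and right actions, so $L_a$ and $R_a$ map $\Hcal_j$ into $\Hcal_j$, i.e.\ they are block diagonal. Hence $[L_a,P_j]=[R_a,P_j]=0$, and therefore $[G,L_a]=[G,R_a]=0$, which is (ii). The same argument applied to the antiunitary $\Jcal$, which also restricts to each summand, gives $\Jcal P_j=P_j\Jcal$. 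Since the $g_j$ are real, antilinearity of $\Jcal$ causes no trouble, and $G\Jcal=\Jcal G$, which is (iii).

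\textbf{Dirichlet-form property.} Because the summands are orthogonal,
\[
\mathcal E_G(a)=\langle\partial a,G\partial a\rangle=\sum_j g_j\|\partial_ja\|^2=\sum_j g_j\mathcal E_j(a).
\]
Symmetry and nonnegativity are then immediate. Closedness follows from the graph-norm equivalence in Theorem~\ref{thm:deformed-derivation}, or directly from the two-sided comparison $c\,\mathcal E\le\mathcal E_G\le C\,\mathcal E$. For the Markov property, take $a=a^*$ in the common domain and a normal contraction $F$. The hypothesis that each $\mathcal E_j$ is a Dirichlet form gives $F(a)\in\dom(\mathcal E_j)$ and $\mathcal E_j(F(a))\le\mathcal E_j(a)$ for every $j$. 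Multiplying by $g_j>0$ and summing yields $\mathcal E_G(F(a))\le\mathcal E_G(a)$. Conservativity comes from $\partial_j\mathbf 1=0$ for each $j$, so $\mathcal E_G(\mathbf 1)=0$; equivalently, $\mathbf 1\in\ker\partial=\ker\partial_G$ and hence $\mathscr L_G\mathbf 1=0$. This is exactly Assumption~\ref{ass:markov-compatible-G}.

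\textbf{Main obstacle.} The delicate step is the domain bookkeeping in the Markov argument. One must know that $F(a)$ lies in the \emph{common} domain of all the $\partial_j$, so that the termwise inequalities can be summed for a single form. I would handle this with the hypothesis that the decomposition is on a common domain, so that $\dom(\partial)=\bigcap_j\dom(\partial_j)$ and the form domain of $\mathcal E_G$ equals that of $\mathcal E=\sum_j\mathcal E_j$. If this is only known on a core, I would instead invoke the standard fact that a finite positive combination of closed Markovian forms, taken on the intersection of their domains, is again a closed Markovian form.
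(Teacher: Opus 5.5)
Your proof is correct and follows the paper's own argument. Compatibility holds because $G$ acts as a real scalar on each summand, and $L_a$, $R_a$, $\Jcal$ all preserve each summand; the Dirichlet property holds because the Markov contraction inequality and conservativity hold termwise for $\sum_j g_j\mathcal E_j$. Your explicit treatment of closedness (via $c\,\mathcal E\le\mathcal E_G\le C\,\mathcal E$) and of the domain containing $F(a)$ spells out details that the paper covers only by assuming a ``common form domain.''
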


\begin{proof}
Bimodule compatibility and reality of $G$ are immediate, since $G$ is scalar on each summand, on which $L_a,R_a,\Jcal$ act identically for every $a\in\M$. A finite positive linear combination of conservative symmetric Dirichlet forms sharing a common form domain is again one: the contraction property $\mathcal E_G(F(a))=\sum_jg_j\mathcal E_j(F(a))\leq\sum_jg_j\mathcal E_j(a)=\mathcal E_G(a)$ holds termwise, and conservativeness ($\mathcal E_j(1)=0$, since any derivation kills the identity) likewise holds termwise.
\end{proof}

\begin{example}[Weighted families of derivations]
\label{ex:weighted-derivations-vna}
Assume each $X_j=X_j^*\in\M$ is such that the closable real derivation $\partial_ja=i[X_j,a]$ generates a conservative symmetric Dirichlet form $\mathcal E_j(a)=\|\partial_ja\|_2^2$ — the standing hypothesis of Example~\ref{ex:internal-derivations}, satisfied whenever $t\mapsto e^{itX_j}$ is a one-parameter group of inner automorphisms with a $\tau$-symmetric generator. Under this hypothesis, Proposition~\ref{prop:diagonal-deformation} applies to the finite family $\partial_ja=i[X_j,a]$, and in particular to the Lindblad bimodule $\Hcal=\Hcal_H\oplus\bigoplus_{k=1}^m\Hcal_k$ of Definition~\ref{def:lindblad-complexity}, with weights $g_H=1$, $g_k=\|L_k\|_{op}^2$ (excluding, as required there, any $L_k=0$): in both cases $G$ is diagonal, state-independent, and scalar on each summand of a common bimodule structure, hence calculus-compatible. For the Lindblad case, this Dirichlet-form hypothesis should be checked on the canonical generators singled out by the Carlen--Maas/Cipriani--Sauvageot construction of Theorem~\ref{thm:gradient-flow} (Remark~\ref{rem:gksl-representation}); an arbitrary GKSL decomposition $\{L_k\}$ need not split the total Dirichlet form into individually Dirichlet summands. This is the class of penalizations to which Theorem~\ref{thm:complexity-deformation} below applies.
\end{example}

We now formulate the deformed transport problem, working with the momentum variable $u_t\in\Hcal$ for the deformed calculus, as in Section~\ref{sec:action}.

\begin{definition}[Complexity-deformed continuity equation and transport distance]
\label{def:G-continuity}
A pair $(\rho_t,u_t)_{t\in[0,T]}$ satisfies the continuity equation for the deformed calculus $\partial_G$ if
\begin{equation}
\label{eq:G-continuity-u}
\dot\rho_t + \partial_G^*(\rhohat_tu_t) = 0.
\end{equation}
Writing $u_t=G^{1/2}v_t$ and using $\partial_G^*=\partial^*G^{1/2}$ together with~\eqref{eq:G-commutes-rhohat}, equation~\eqref{eq:G-continuity-u} is equivalent to
\begin{equation}
\label{eq:G-continuity-v}
\dot\rho_t + \partial^*\bigl(G\rhohat_tv_t\bigr) = 0.
\end{equation}
Define the unweighted action for $\partial_G$,
\begin{equation}
\label{eq:G-action-u}
\Acal_{\partial_G}(\rho,u) := \frac12\int_0^T\langle u_t,\rhohat_tu_t\rangle_\Hcal\,\dd t,
\end{equation}
and the induced extended pseudometric $W_{\partial_G}(\rho_0,\rho_1)^2:=\inf\{2T\,\Acal_{\partial_G}(\rho,u)\}$ over pairs satisfying~\eqref{eq:G-continuity-u} with the given endpoints — Definition~\ref{def:Wdist} written for $\partial_G$ in place of $\partial$.
\end{definition}

Equation~\eqref{eq:G-continuity-v} is not the original continuity equation~\eqref{eq:continuity}: it uses the deformed mobility $G\rhohat_tv_t$ in place of $\rhohat_tv_t$. This is the essential difference between the calculus deformation of this section and the cost-only deformation of Section~\ref{sec:complexity} (Remark~\ref{rem:cost-versus-calculus-deformation}).

\mainthmletter{B}
\begin{mainthm}[Complexity deformation principle]
\label{thm:complexity-deformation}
Let $G$ be a calculus-compatible complexity operator satisfying Assumption~\ref{ass:markov-compatible-G}. The map $v_t\mapsto u_t:=G^{1/2}v_t$ is a bijection between pairs $(\rho,v)$ satisfying~\eqref{eq:G-continuity-v} and pairs $(\rho,u)$ satisfying~\eqref{eq:G-continuity-u}, under which
\begin{equation}
\label{eq:action-identification}
\frac12\int_0^T\bigl\langle\rhohat_t^{1/2}v_t,\,G\rhohat_t^{1/2}v_t\bigr\rangle_\Hcal\,\dd t \;=\; \Acal_{\partial_G}(\rho,u).
\end{equation}
In particular, the extended pseudometric induced by minimizing the left-hand side of~\eqref{eq:action-identification} subject to~\eqref{eq:G-continuity-v} coincides with $W_{\partial_G}$: the complexity-deformed transport problem is exactly the unweighted noncommutative transport problem generated by the derivation $\partial_G$.
\end{mainthm}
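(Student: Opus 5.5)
The proof is essentially algebraic and proceeds in three steps. All of them rest on the fact, recorded in~\eqref{eq:G-commutes-rhohat}, that $G$, and hence $G^{1/2}$ and $G^{-1/2}$, commute with $\rhohat_\rho$ and with $\rhohat_\rho^{1/2}$. Throughout we use that $G^{1/2}$ is a bounded, invertible, self-adjoint operator on $\Hcal$ with bounded inverse, by condition~(i) of Definition~\ref{def:compatible-complexity}.

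\textbf{Step 1 (adjoint).} First identify the adjoint of the deformed derivation: $\partial_G^*=\partial^*G^{1/2}$, with $\dom(\partial_G^*)=G^{-1/2}\dom(\partial^*)$. This follows from the general fact that $(BA)^*=A^*B^*$ when $B$ is bounded, so $(G^{1/2}\partial)^*=\partial^*G^{1/2}$, together with $\dom(\partial_G)=\dom(\partial)$ from Theorem~\ref{thm:deformed-derivation}. Concretely, $\xi\in\dom(\partial_G^*)$ holds iff $a\mapsto\langle\partial a,G^{1/2}\xi\rangle$ is $L^2$-bounded, that is, iff $G^{1/2}\xi\in\dom(\partial^*)$.

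\textbf{Step 2 (bijection).} Next show that $v\mapsto u=G^{1/2}v$ is a bijection of admissible pairs. Since $G^{1/2}$ is invertible on $\Hcal$, it maps measurable curves $t\mapsto v_t$ bijectively onto curves $t\mapsto u_t$. By commutation,
\[
\partial_G^*(\rhohat_tu_t)=\partial^*G^{1/2}\rhohat_tG^{1/2}v_t=\partial^*(G\rhohat_tv_t).
\]
The domain requirement $\rhohat_tu_t\in\dom(\partial_G^*)$ is equivalent to $G\rhohat_tv_t\in\dom(\partial^*)$. Hence~\eqref{eq:G-continuity-u} holds for $(\rho,u)$ exactly when~\eqref{eq:G-continuity-v} holds for $(\rho,v)$, with the same endpoints and time interval.

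\textbf{Step 3 (actions).} For the action identity~\eqref{eq:action-identification}, compute pointwise in $t$, again using commutation:
\[
\langle\rhohat_t^{1/2}v_t,G\rhohat_t^{1/2}v_t\rangle=\langle\rhohat_t^{1/2}G^{1/2}v_t,\rhohat_t^{1/2}G^{1/2}v_t\rangle=\langle u_t,\rhohat_tu_t\rangle.
\]
Integrating over $[0,T]$ gives~\eqref{eq:action-identification}. Both sides lie in $[0,\infty]$, so the identity also holds when either side is infinite. Taking the infimum of $2T$ times either side over corresponding admissible families then yields equality of the two extended pseudometrics, including the value $+\infty$ when no admissible curve exists. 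The Markov hypothesis (Assumption~\ref{ass:markov-compatible-G}) is not needed for the identity itself; it only guarantees, via Theorem~\ref{thm:deformed-derivation}, that $W_{\partial_G}$ is a transport distance of Carlen--Maas--Wirth type built from a genuine Dirichlet form.

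\textbf{Main obstacle.} The delicate point is justifying the commutation manipulations when $\rho$ is unbounded, where $\rhohat_t$ and $\rhohat_t^{1/2}$ are only closed forms or unbounded positive operators. In that case I would interpret all quantities through the spectral resolutions of $L_\rho$ and $R_\rho$. The argument preceding~\eqref{eq:G-commutes-rhohat} shows that $G^{1/2}$ commutes strongly with these, so $G^{1/2}$ maps $\dom(\rhohat_t^{1/2})$ onto itself and $\rhohat_t^{1/2}G^{1/2}\subseteq G^{1/2}\rhohat_t^{1/2}$. By invertibility the two actually coincide, which legitimizes the computations above. In the finite-dimensional setting everything is bounded and the argument is purely algebraic.
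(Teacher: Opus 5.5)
Your proposal is correct and follows the paper's proof: bijectivity comes from the bounded invertibility of $G^{1/2}$; the equivalence of~\eqref{eq:G-continuity-u} and~\eqref{eq:G-continuity-v} comes from $\partial_G^*=\partial^*G^{1/2}$ together with the commutation~\eqref{eq:G-commutes-rhohat}; the action identity is the same pointwise computation $\langle u_t,\rhohat_tu_t\rangle_\Hcal=\langle\rhohat_t^{1/2}v_t,G\rhohat_t^{1/2}v_t\rangle_\Hcal$, followed by taking infima. Your extra bookkeeping (the domain identity $\dom(\partial_G^*)=G^{-1/2}\dom(\partial^*)$, the remark that the Markov hypothesis plays no role in the identity itself, and the strong-commutation discussion for unbounded $\rho$) makes explicit what the paper leaves implicit, without changing the argument.
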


\begin{proof}
Bijectivity of $v\mapsto u=G^{1/2}v$ is immediate since $G^{1/2}$ is boundedly invertible ($cI\leq G\leq CI$), and $\eqref{eq:G-continuity-u}\Leftrightarrow\eqref{eq:G-continuity-v}$ was shown above. For the action,
\[
\langle u_t,\rhohat_tu_t\rangle_\Hcal = \langle G^{1/2}v_t,\rhohat_tG^{1/2}v_t\rangle_\Hcal = \langle G^{1/2}\rhohat_t^{1/2}v_t,G^{1/2}\rhohat_t^{1/2}v_t\rangle_\Hcal = \langle\rhohat_t^{1/2}v_t,G\rhohat_t^{1/2}v_t\rangle_\Hcal,
\]
using~\eqref{eq:G-commutes-rhohat} in the second equality. Integrating gives~\eqref{eq:action-identification}; taking the infimum over the corresponding admissible curves on both sides, which agree termwise under the bijection just established, identifies the two pseudometrics.
\end{proof}

Since $u\mapsto\Acal_{\partial_G}(\rho,u)$, subject to~\eqref{eq:G-continuity-u}, is literally Definitions~\ref{def:action} and~\ref{def:Wdist} written for $\partial_G$ in place of $\partial$, Theorem~\ref{thm:complexity-deformation} transfers every structural statement already available for $(\Hcal,\partial)$ to $(\Hcal,\partial_G)$. In particular, the transport metric built by the Dirichlet-form route of Wirth~\cite{Wirth2018} for the semigroup $(P_t^G)_{t\geq0}$ of Assumption~\ref{ass:markov-compatible-G} is, wherever both sides are finite, the same object as the complexity-deformed distance of~\eqref{eq:G-continuity-v}--\eqref{eq:action-identification}.

\begin{corollary}[Geodesicity under a gradient estimate]
\label{cor:deformed-geodesics}
Under Assumption~\ref{ass:markov-compatible-G}, suppose in addition that $\tau$ is energy dominant for $\mathcal E_G$ in the sense of Wirth~\cite[Def.~2.5]{Wirth2018} and that $(P_t^G)_{t\geq0}$ satisfies the Bakry--\'Emery-type gradient estimate $\mathrm{GE}(K,\infty)$ of~\cite[Def.~6.1]{Wirth2018} for some $K\in\mathbb R$. Then every pair of normal states of finite relative entropy with $W_{\partial_G}(\rho_0,\rho_1)<\infty$ (equivalently, by Theorem~\ref{thm:complexity-deformation}, at finite complexity-deformed distance in the sense of Definition~\ref{def:G-continuity}) is joined by a constant-speed $W_{\partial_G}$-geodesic, and the relative entropy is $K$-convex along $W_{\partial_G}$-geodesics.
\end{corollary}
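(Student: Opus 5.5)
The corollary follows by reduction to Wirth's theorem applied to the deformed calculus. Theorem~\ref{thm:complexity-deformation} identifies the complexity-deformed problem with the unweighted transport problem for $\partial_G$. So it suffices to check that $(\Hcal,\partial_G)$ with the form $\mathcal E_G$ meets every hypothesis of Wirth's geodesic-existence and entropic-convexity results~\cite{Wirth2018}, and that Wirth's metric for $(P_t^G)$ is the $W_{\partial_G}$ of Definition~\ref{def:G-continuity}.

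\textbf{Step 1: the Dirichlet form and its calculus.} Under Assumption~\ref{ass:markov-compatible-G}, $\mathcal E_G$ is a conservative symmetric Dirichlet form on the finite von Neumann algebra $(\M,\tau)$ of Assumption~\ref{ass:calculus-deformation}. By the Cipriani--Sauvageot construction it therefore has a canonical first-order calculus $(\widetilde\Hcal,\widetilde\partial)$. By Theorem~\ref{thm:deformed-derivation}, $\partial_G$ is a closed real derivation into the $\M$-bimodule $\Hcal$, with $\|\partial_Ga\|^2=\mathcal E_G(a)$. By the uniqueness of this calculus up to bimodule isometry, $\widetilde\Hcal$ is therefore identified with the closed sub-bimodule of $\Hcal$ generated by $\{(\partial_Ga)b\}$.

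\textbf{Step 2: restricting momenta to the sub-bimodule.} Momenta orthogonal to this sub-bimodule do not affect the continuity equation~\eqref{eq:G-continuity-u}, since they lie in $\ker\partial_G^*$. They also only increase the action, because $\rhohat$ is built from $L_\rho,R_\rho$ and so preserves the sub-bimodule and its complement. Hence restricting momenta to $\widetilde\Hcal$ does not change the infimum, and $W_{\partial_G}$ coincides with Wirth's metric $\mathcal W$ for $(P_t^G)$ with the logarithmic mean. This identification is the step where I would be most careful: Wirth defines admissible curves weakly, through $\tau$-energy dominance and an $L^2$-type regularity of $t\mapsto\rho_t$. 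One has to check that these curves match the admissible curves of Definition~\ref{def:Wdist}, at least up to the domain caveat on $\rhohat$ stated in Section~\ref{sec:rhohat}. I expect this bookkeeping to be the main obstacle. The plan is to work with the weak form~\eqref{eq:weak-form} for $\partial_G$, which is exactly Wirth's formulation, and to take that as the definition in the general case.

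\textbf{Step 3: applying Wirth's theorems.} With energy dominance and $\mathrm{GE}(K,\infty)$ assumed, Wirth's results give two conclusions. First, the relative entropy satisfies an $\mathrm{EVI}_K$ along the semigroup. Second, since $\mathcal W$ is then a length metric on the domain of finite entropy and is lower semicontinuous, any two states of finite entropy at finite distance are joined by a constant-speed geodesic along which the entropy is $K$-convex. Transporting these statements through the identification $\mathcal W=W_{\partial_G}$ gives the corollary. The parenthetical equivalence then follows directly from Theorem~\ref{thm:complexity-deformation}.
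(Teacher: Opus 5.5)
Your proposal is correct and follows the paper's route: the paper applies Wirth's $\mathrm{GE}(K,\infty)\Rightarrow\mathrm{EVI}_K$, geodesic-existence and $K$-convexity theorems to $\mathcal E_G$ and $(P_t^G)_{t\geq0}$ and transfers them through Theorem~\ref{thm:complexity-deformation}. Your Steps~1--2 (identifying the Cipriani--Sauvageot calculus of $\mathcal E_G$ with the sub-bimodule of $\Hcal$ generated by $\partial_G$, discarding its orthogonal complement, and using the weak form) usefully spell out the identification of Wirth's metric with $W_{\partial_G}$, which the paper takes for granted. One caution: in Step~3, a lower semicontinuous length metric does not by itself yield geodesics in infinite dimensions, because you also need compactness (e.g.\ of entropy sublevel sets) to extract exact midpoints, so take that conclusion directly from Wirth's geodesic-existence theorem rather than deriving it as you phrase it.
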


\begin{proof}
By~\cite[Thm.~6.21]{Wirth2018}, $\mathrm{GE}(K,\infty)$ for $\mathcal E_G$ implies that $(P_t^G)_{t\geq0}$ is an $\mathrm{EVI}_K$ gradient flow of the entropy for $W_{\partial_G}$; geodesic existence between finite-entropy states at finite $W_{\partial_G}$-distance is then~\cite[Thm.~7.7]{Wirth2018}, and $K$-convexity of the entropy along these geodesics is~\cite[Thm.~7.12]{Wirth2018} (which also records the equivalence with $\mathrm{EVI}_K$). Both are applied to $\mathcal E_G$ and $(P_t^G)_{t\geq0}$ in place of $\mathcal E$ and $(P_t)_{t\geq0}$, together with the identification of Theorem~\ref{thm:complexity-deformation}.
\end{proof}

\begin{proposition}[Bi-Lipschitz comparison for fixed cost weights]
\label{prop:fixed-G-comparison}
Let $G\in\mathcal B(\Hcal)$ be positive with $cI_\Hcal\leq G\leq CI_\Hcal$ (no bimodule compatibility required), and let $W_{\partial,G}^{\mathrm{cost}}$ denote the extended pseudometric obtained by minimizing~\eqref{eq:action-G} (with $G_\rho\equiv G$) subject to the unchanged continuity equation~\eqref{eq:continuity}. Then
\[
\sqrt c\,W_\partial(\rho_0,\rho_1) \leq W_{\partial,G}^{\mathrm{cost}}(\rho_0,\rho_1) \leq \sqrt C\,W_\partial(\rho_0,\rho_1)
\]
for every pair of normal states.
\end{proposition}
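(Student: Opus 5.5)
The plan is a pointwise comparison of integrands on a common set of admissible curves, followed by passing to the infimum. The essential observation is that $W_{\partial,G}^{\mathrm{cost}}$ and $W_\partial$ are infima over the \emph{same} class: pairs $(\rho,v)$ satisfying the unchanged continuity equation~\eqref{eq:continuity} with the prescribed endpoints. Only the integrand differs. This is in contrast with Theorem~\ref{thm:complexity-deformation}, where the constraint itself was deformed. Consequently no bimodule compatibility and no commutation of $G$ with $\rhohat_t$ should be needed; the symmetrized form~\eqref{eq:action-G} was designed precisely to avoid such commutation.

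First I fix an admissible pair $(\rho,v)$ on $[0,T]$. For a.e.\ $t$ I set $w_t:=\rhohat_t^{1/2}v_t$. The operator $\rhohat_t\geq0$ is self-adjoint, and in finite dimensions it is bounded, so $\rhohat_t^{1/2}$ exists by functional calculus. In general I would interpret $w_t$ via the closed quadratic form, finite exactly when the action is finite. Then
\[
\langle v_t,\rhohat_tv_t\rangle_\Hcal=\|w_t\|_\Hcal^2 .
\]
The operator inequality $cI\leq G\leq CI$ applied to the vector $w_t$ gives
\[
c\,\|w_t\|^2\leq\langle w_t,Gw_t\rangle\leq C\,\|w_t\|^2 .
\]
Integrating in $t$ yields
\[
c\,\Acal(\rho,v)\leq\Acal_G(\rho,v)\leq C\,\Acal(\rho,v),
\]
and the integrals are finite or infinite simultaneously.

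Second, I multiply by $2T$ and take the infimum over all admissible curves joining $\rho_0$ to $\rho_1$. Since the admissible set is identical for both problems, and the infimum is monotone and commutes with positive scalar multiplication, I obtain
\[
c\,W_\partial^2\leq (W_{\partial,G}^{\mathrm{cost}})^2\leq C\,W_\partial^2 .
\]
Taking square roots gives the claim. The convention $+\infty$ covers the case in which no admissible curve exists, and both sides are then infinite together.

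The only point needing care is the meaning of $\rhohat_t^{1/2}v_t$ outside finite dimensions, when $\rho_t$ is unbounded. I would handle it by working with the form $\|\rhohat_t^{1/2}v_t\|^2$ defined through the spectral resolution of the positive self-adjoint operator $\rhohat_t$, which is the same domain caveat as in Section~\ref{sec:rhohat}. If that interpretation is unavailable, the argument applies verbatim in the setting $\M=M_n(\mathbb C)$, where every object is bounded. Measurability of $t\mapsto\langle w_t,Gw_t\rangle$ follows from that of $t\mapsto\langle v_t,\rhohat_t v_t\rangle$ together with the two-sided bound. I do not expect a genuine obstacle here.
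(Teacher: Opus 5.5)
Your proposal is correct and follows the paper's proof. Both arguments take the pointwise bound $c\langle v_t,\rhohat_tv_t\rangle\leq\langle\rhohat_t^{1/2}v_t,G\rhohat_t^{1/2}v_t\rangle\leq C\langle v_t,\rhohat_tv_t\rangle$ from $cI\leq G\leq CI$, integrate it in $t$, and take the infimum over the identical admissible class of the unchanged continuity equation; your extra remarks on the unbounded case, measurability, and the $+\infty$ convention are harmless refinements that the paper leaves implicit.
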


\begin{proof}
For every admissible $(\rho,v)$ — the admissible class is identical for both problems, since~\eqref{eq:continuity} is unchanged — $c\langle v_t,\rhohat_tv_t\rangle_\Hcal\leq\langle\rhohat_t^{1/2}v_t,G\rhohat_t^{1/2}v_t\rangle_\Hcal\leq C\langle v_t,\rhohat_tv_t\rangle_\Hcal$ pointwise in $t$, directly from $cI\leq G\leq CI$. Integrating in $t$ and taking the infimum over the same admissible set on both sides of each inequality gives the claim.
\end{proof}

Unlike Theorem~\ref{thm:complexity-deformation} and Corollary~\ref{cor:deformed-geodesics}, Proposition~\ref{prop:fixed-G-comparison} requires no bimodule compatibility, since it never changes the admissible class of curves; conversely, it identifies no transport metric for a deformed calculus, and says nothing about geodesics. A direct bi-Lipschitz comparison between $W_{\partial_G}$ and $W_\partial$ themselves would require relating the — generally different — admissible classes of~\eqref{eq:continuity} and~\eqref{eq:G-continuity-v}, which we do not attempt here.

\begin{remark}[Relation to Remarks~\ref{rem:existence-residual} and~\ref{rem:lindblad-scope}]
\label{rem:deformation-vs-petz}
The complexity operator $G$ of this section is state-independent and commutes with $L_a,R_a$ for every $a\in\M$, whereas the Petz-class $G_\rho^{(f)}$ of Section~\ref{ssec:petz-existence} is state-dependent and diagonal in the (generally rotating) eigenbasis of $\rho$. In an irreducible, factorial bimodule the two classes typically intersect only in scalar multiples of the identity, since the bimodule endomorphism algebra $\operatorname{End}_{\M\text{-}\M}(\Hcal)=\{T\in\mathcal B(\Hcal):[T,L_a]=[T,R_a]=0\ \forall a\in\M\}$ reduces to $\mathbb C\,I_\Hcal$ by a Schur-type argument; in a bimodule with multiplicity, $\operatorname{End}_{\M\text{-}\M}(\Hcal)$ can be non-trivial and non-scalar elements may lie in both classes for special $\rho$. We do not assume irreducibility here, so in general the intersection depends on the commutant of the left and right module representations. In particular, fixed, $\rho$-independent complexity metrics diagonal in a basis of physical derivations are handled in two inequivalent ways in this paper. If they are bimodule-compatible, Theorem~\ref{thm:complexity-deformation} absorbs them into the derivation and studies the deformed continuity equation~\eqref{eq:G-continuity-v}. If they are merely uniformly positive, with no compatibility or commutativity assumption, Theorem~\ref{thm:existence-fixed-G} proves finite-dimensional existence for the cost-only problem with the original continuity equation~\eqref{eq:continuity}. Whether these two variational problems — cost-only deformation with the original mobility $\rhohat_tv_t$, versus joint deformation of cost and mobility via $G\rhohat_tv_t$ — share the same minimizers in any case of interest is left open.
\end{remark}

\subsection{Synthesis: the variational problem}
\label{sec:synthesis}

We collect the proposal as a variational problem with a PDE-type constraint. For a tracial von Neumann algebra $(\M,\tau)$, a differential calculus $(\Hcal,\partial)$, and a complexity metric $G_\rho \geq 0$:

\begin{equation}
\label{eq:synthesis-constraint}
\begin{cases}
\dot\rho_t + \partial^*\bigl(\rhohat_t\, v_t\bigr) = 0, & t \in (0,T), \\[4pt]
\rho_{t=0} = \rho_0, \qquad \rho_{t=T} = \rho_1, &
\end{cases}
\end{equation}
together with the minimization problem
\begin{equation}
\label{eq:synthesis-objective}
\inf_{\rho,v} \; \frac{1}{2}\int_0^T \bigl\langle \rhohat_t^{1/2}v_t,\, G_{\rho_t}\,\rhohat_t^{1/2} v_t \bigr\rangle_{\Hcal}\,\dd t.
\end{equation}

This formulation extends the kinematic scope of the theory beyond the unitary-orbit dynamics of Section~\ref{sec:hamiltonian}, by allowing the spectrum of $\rho_t$ to vary (Remark~\ref{rem:spectrum-preserved}) and operating on the full space of normal states of $\M$, not merely on a single unitary orbit; the Liouville--von Neumann sector is treated separately, as detailed below (Section~\ref{sec:hamiltonian}, Remark~\ref{rem:spectrum-preserved}) as a logically distinct, complementary orbit-control problem, not derived from~\eqref{eq:synthesis-constraint}--\eqref{eq:synthesis-objective} as a special case. Similarly, the Lindblad equation~\eqref{eq:lindblad} is not simply~\eqref{eq:synthesis-constraint} for an arbitrary choice of $v$: under the specific detailed-balance assumptions detailed below in Section~\ref{sec:lindblad}, the associated semigroup is realized as the entropy gradient flow of a compatible transport geometry of exactly this form (Theorem~\ref{thm:gradient-flow}), which is the precise sense in which the dissipative sector fits within this synthesis.

The formulation~\eqref{eq:synthesis-constraint}--\eqref{eq:synthesis-objective} is the \emph{fixed-dynamics cost deformation}: the continuity equation is held fixed while only the kinetic action is reweighted by $G_\rho$. For a calculus-compatible, state-independent operator $G$ (Definition~\ref{def:compatible-complexity}), Section~\ref{sec:calculus-deformation} develops a second construction,
\[
\dot\rho_t + \partial^*\bigl(G\rhohat_tv_t\bigr) = 0,
\]
which deforms the mobility along with the cost and is exactly the ordinary transport problem generated by the deformed derivation $\partial_G=G^{1/2}\partial$ (Theorem~\ref{thm:complexity-deformation}). The two constructions agree pointwise on the quadratic integrand of~\eqref{eq:synthesis-objective} but constrain different admissible curves, and should be regarded as complementary complexity deformations rather than as two derivations of the same object (Remark~\ref{rem:cost-versus-calculus-deformation}).

\subsection{Consistency check: recovery of the classical Benamou--Brenier problem}
\label{ssec:bb-recovery}

Before extending the formalism further, we record that it is a genuine generalization: the full variational problem of Sections~\ref{sec:continuity}--\ref{sec:complexity} — continuity equation, penalized action, and induced distance — collapses exactly to the classical Benamou--Brenier problem~\eqref{eq:BB} in the commutative, unpenalized case. Unlike the isolated specialization remark following Definition~\ref{def:continuity}, which only checked the continuity equation, the statement below covers the action and the distance as well, and derives $\rhohat=\rho$ from the commutativity of $\M$ rather than asserting it.

\begin{theorem}[Complexity transport as a generalized Benamou--Brenier problem]
\label{thm:bb-recovery}
Let $\Omega\subset\mathbb R^n$ be open, and specialize the framework of Sections~\ref{sec:continuity}--\ref{sec:complexity} to $\M=L^\infty(\Omega)$ with $\tau(a)=\int_\Omega a\,\dd x$, $\Hcal=L^2(\Omega;\mathbb R^n)$, $\partial=\nabla$ on $C_c^\infty(\Omega)$, and $G_\rho\equiv\mathrm{Id}_\Hcal$. Then:
\begin{enumerate}[label=\textnormal{(\roman*)}]
	\item $\rhohat_\rho$ is multiplication by $\rho$: $\rhohat_\rho(v)=\rho\,v$ for every $v\in\Hcal$ and every $\rho\in L^1(\Omega)_+$;
	\item $\partial^*=-\dive$ on $C_c^\infty(\Omega)$, and the continuity equation~\eqref{eq:continuity} reads $\dot\rho_t-\dive(\rho_tv_t)=0$;
	\item the penalized action~\eqref{eq:action-G} reduces to $\displaystyle\Acal_G(\rho,v)=\frac12\int_0^T\int_\Omega\rho_t(x)\,|v_t(x)|^2\,\dd x\,\dd t$;
	\item consequently $W_G(\rho_0,\rho_1)^2=W_2(\rho_0,\rho_1)^2$, the classical squared Wasserstein-2 distance of~\eqref{eq:BB}.
\end{enumerate}
\end{theorem}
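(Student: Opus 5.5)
I would prove the four items in order, since each one feeds into the next.

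For (i), the point is that in $\M=L^\infty(\Omega)$ acting on $\Hcal=L^2(\Omega;\mathbb R^n)$, the left and right actions are both pointwise multiplication: $L_a v=R_a v=a v$. For $\rho\in L^1(\Omega)_+$ I would define $L_{\rho^s}$ and $R_{\rho^{1-s}}$ through the spectral resolution of $\rho$ as a multiplication operator, exactly as in the discussion after Definition~\ref{def:compatible-complexity}. This gives $L_{\rho^s}R_{\rho^{1-s}}v=\rho^s\rho^{1-s}v=\rho v$ pointwise, for every $s\in[0,1]$. Integrating over $s$ in~\eqref{eq:rhohat-def} then yields $\rhohat_\rho v=\rho v$, understood as a closed multiplication operator, finite where $\int\rho|v|^2<\infty$. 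Equivalently, in the notation of Example~\ref{ex:finite-dim}, $\Lambda(x,x)=x$ on the diagonal.

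For (ii), I would integrate by parts. For $a\in C_c^\infty(\Omega)$ and smooth compactly supported $\xi$, $\langle\nabla a,\xi\rangle=\int\nabla a\cdot\xi=-\int a\,\dive\xi$, so $\partial^*=-\dive$ on such $\xi$, and distributionally in general. Substituting (i) into~\eqref{eq:continuity} gives $\dot\rho_t-\dive(\rho_tv_t)=0$. The weak form~\eqref{eq:weak-form} is the natural meaning when $\rho_tv_t$ is not smooth.

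For (iii), set $G=\mathrm{Id}$ in~\eqref{eq:action-G}. By (i), $\rhohat^{1/2}$ is multiplication by $\rho^{1/2}$, so the integrand is $\int\rho|v|^2$.

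For (iv), I would compare the two infima. With $G=\mathrm{Id}$, the distance $W_G$ is defined as in~\eqref{eq:Wdist}. Take $T=1$, which is legitimate by the reparametrization invariance noted in Remark~\ref{rem:Wdist-pseudometric}. The constraint is the continuity equation for the velocity $-v_t$, and $|{-v}|^2=|v|^2$, so $v\mapsto -v$ is a bijection between the admissible pairs of the two problems that preserves the cost. Hence the infima coincide with~\eqref{eq:BB}.

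The main obstacle is making the admissible classes genuinely identical. Benamou--Brenier uses distributional solutions with $\int\!\!\int\rho|v|^2<\infty$, while Definition~\ref{def:continuity} asks for $\rhohat_tv_t\in\dom(\partial^*)$. I would resolve this by adopting the weak form~\eqref{eq:weak-form} tested against $a\in C_c^\infty(\Omega)$ as the admissibility notion, as the paper itself suggests. That is exactly the distributional formulation of~\eqref{eq:BB}. One further point needs checking: $v_t\in\Hcal=L^2$ is not required classically, since only $\rho^{1/2}v\in L^2$ is. I would handle this by working in momentum variables $m=\rho v$ (Definition~\ref{def:momentum}), where both problems minimize $\int|m|^2/(2\rho)$ under the same linear constraint, and the identification is then immediate.
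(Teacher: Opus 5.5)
Your proposal is correct and follows the paper's proof step for step. Commutativity gives $L_a=R_a$, so $\rhohat_\rho$ is multiplication by $\rho$; integration by parts gives $\partial^*=-\dive$; $G=\mathrm{Id}$ collapses the symmetrized action to $\frac12\int\!\!\int\rho|v|^2$; and the relabeling $v\mapsto -v$ at $T=1$ identifies the two infima.

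Your extra paragraph on admissible classes goes beyond the paper, which simply asserts the bijection. Note that the momentum step makes the identification immediate only because it replaces the paper's class ($m=\rhohat v$ with $v\in\Hcal$) by all fields with $\int|m|^2/\rho<\infty$. That is a relaxation of Definition~\ref{def:Wdist}, not a consequence of it: translating a Gaussian on $\mathbb R^n$, for instance, has constant optimal velocity, which is not in $\Hcal=L^2(\mathbb R^n;\mathbb R^n)$.
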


\begin{proof}
(i) Since $\M=L^\infty(\Omega)$ is commutative, its left and right actions on $\Hcal$ coincide, $L_a=R_a=$ multiplication by $a$, for every $a\in\M$. By~\eqref{eq:rhohat-def}, $\rhohat_\rho(v)=\int_0^1L_{\rho^s}R_{\rho^{1-s}}(v)\,\dd s=\int_0^1\rho^s\rho^{1-s}v\,\dd s=\int_0^1\rho\,v\,\dd s=\rho\,v$, since $\rho(x)^s\rho(x)^{1-s}=\rho(x)$ pointwise for a.e.\ $x$ and every $s\in[0,1]$ (with the convention $0^0:=1$ at the measure-zero endpoints $s=0,1$), independently of $s$; this holds without any faithfulness assumption on $\rho$. In particular $\rhohat_\rho$ is bounded whenever $\rho\in L^\infty(\Omega)$, and is the same operator for every Kubo--Ando mean of Definition~\ref{def:kubo-ando} in place of the logarithmic one, since $\Lambda_f(\rho(x),\rho(x))=\rho(x)f(1)=\rho(x)$ for every admissible $f$: the entire family of noncommutative multiplication operators collapses to plain multiplication in the commutative case.

(ii) For $a\in C_c^\infty(\Omega)$ and $\xi\in\Hcal$, integration by parts gives $\langle\nabla a,\xi\rangle_\Hcal=\int_\Omega\nabla a\cdot\xi\,\dd x=-\int_\Omega a\,\dive\xi\,\dd x=\langle a,-\dive\xi\rangle_{L^2(\Omega)}$, with no boundary term since $a$ has compact support; this identifies $\partial^*=-\dive$ on $C_c^\infty(\Omega)$. Substituting (i) into~\eqref{eq:continuity}, $\dot\rho_t+\partial^*(\rhohat_tv_t)=\dot\rho_t-\dive(\rho_tv_t)=0$.

(iii) With $G_\rho\equiv\mathrm{Id}_\Hcal$, the symmetrization in~\eqref{eq:action-G} is immaterial regardless of commutativity: $\langle\rhohat_t^{1/2}v_t,\mathrm{Id}_\Hcal\,\rhohat_t^{1/2}v_t\rangle_\Hcal=\langle v_t,\rhohat_tv_t\rangle_\Hcal$ for every self-adjoint $\rhohat_t\geq0$. Hence $\Acal_G(\rho,v)=\Acal(\rho,v)=\frac12\int_0^T\langle v_t,\rhohat_tv_t\rangle_\Hcal\,\dd t$ (Definition~\ref{def:action}), and by (i), $\langle v_t,\rhohat_tv_t\rangle_\Hcal=\int_\Omega\rho_t(x)|v_t(x)|^2\,\dd x$, giving the stated form.

(iv) By (ii)--(iii), a pair $(\rho,v)$ is admissible for $W_G$ at $G_\rho\equiv\mathrm{Id}_\Hcal$ (Definition~\ref{def:Wdist}, with $T=1$ by the reparametrization-invariance of Remark~\ref{rem:Wdist-pseudometric}) if and only if $(\rho,-v)$ is admissible for the classical problem~\eqref{eq:BB}, and the two action integrands agree termwise, since $|-v_t(x)|^2=|v_t(x)|^2$. The two infima are therefore taken over the same family of curves (relabeled by $v\mapsto-v$) of the same functional, so $W_G(\rho_0,\rho_1)^2=W_2(\rho_0,\rho_1)^2$ exactly, not merely up to a constant.
\end{proof}

This confirms that the noncommutative continuity equation, the operator logarithmic mean, and the penalized action of Sections~\ref{sec:continuity}--\ref{sec:complexity} were normalized consistently from the outset — including the factor-of-two convention in Definition~\ref{def:Wdist} — precisely so that this classical limit is exact, with no residual constant to absorb.

\section{Existence and operator-mean mobilities}
\label{sec:existence}

We settle here the existence question for the genuinely mixed-state case, where the noncommutative multiplication operator $\rhohat$ of Section~\ref{sec:rhohat} cannot be avoided. The proof uses the momentum reformulation of Definition~\ref{def:momentum}. By contrast, the worked examples of Sections~\ref{sec:orbits}--\ref{sec:bell-example} exhibit explicit minimizers by hand for finite-dimensional problems whose endpoints share a spectrum. Those minimizers lie entirely within the Hamiltonian, or unitary-orbit, sector of Section~\ref{sec:hamiltonian}, without invoking $\rhohat$. The two existence routes are complementary, not overlapping.

We restrict attention to $\M=M_n(\mathbb C)$; the argument extends verbatim to any finite-dimensional von Neumann algebra with a fixed trace. We first treat the unpenalized transport metric $W_\partial$ of Definition~\ref{def:Wdist}, i.e.\ $G_\rho\equiv\mathrm{Id}_\Hcal$. Remark~\ref{rem:existence-complexity} then discusses the obstruction to extending the argument to a genuine complexity metric $G_\rho\neq\mathrm{Id}$, and Section~\ref{ssec:petz-existence} resolves it for the Petz class of complexity metrics (Theorem~\ref{thm:existence-petz}).

\begin{assumption}
\label{ass:existence}
$\Hcal$ is finite-dimensional, $\partial\colon\M\to\Hcal$ is a (bounded, since finite-dimensional) derivation with adjoint $\partial^*$, and $\rho_0,\rho_1$ are normal states such that the admissible set
\[
\mathcal C(\rho_0,\rho_1,T) := \Bigl\{(\rho,v) : \dot\rho_t+\partial^*(\rhohat_t v_t)=0 \text{ a.e.},\; \rho_t\geq0,\;\tr(\rho_t)=1,\;\rho_{t=0}=\rho_0,\;\rho_{t=T}=\rho_1\Bigr\}
\]
is nonempty.
\end{assumption}

\mainthmletter{C}
\begin{mainthm}[Existence for the unweighted metric]
\label{thm:existence}
Under Assumption~\ref{ass:existence}, the infimum in~\eqref{eq:Wdist} is attained: there exists $(\rho^*,v^*)\in\mathcal C(\rho_0,\rho_1,T)$ with $\Acal(\rho^*,v^*) = \frac{1}{2T}W_\partial(\rho_0,\rho_1)^2$.
\end{mainthm}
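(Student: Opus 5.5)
The plan is the direct method of the calculus of variations in the momentum variables of Definition~\ref{def:momentum}. Writing $m_t:=\rhohat_tv_t$, the admissible set $\mathcal C(\rho_0,\rho_1,T)$ becomes the set of pairs $(\rho,m)$ satisfying the \emph{linear} constraint $\dot\rho_t+\partial^*m_t=0$, together with $\rho_t\geq0$, $\tr(\rho_t)=1$ and the endpoint conditions. The action becomes $\int_0^T\ell(\rho_t,m_t)\,\dd t$. If the infimum is $+\infty$ there is nothing to prove, since any admissible pair attains it; Assumption~\ref{ass:existence} guarantees the admissible set is nonempty. So I assume it is finite and take a minimizing sequence $(\rho^k,m^k)$ with $\sup_k\int_0^T\ell(\rho^k_t,m^k_t)\,\dd t<\infty$.

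\textbf{Step 1 (compactness).} Since $\M=M_n(\mathbb C)$, the mean satisfies $\|\rhohat_\rho\|\leq\|\rho\|\leq1$ for every density matrix, because $\Lambda(x,y)\leq\max(x,y)$ by~\eqref{eq:rhohat-matrix}. Hence $\|m\|^2\leq\|\rhohat_\rho\|\,m^*\rhohat_\rho^{-1}m\leq 2\ell(\rho,m)$, so $m^k$ is bounded in $L^2(0,T;\Hcal)$. After passing to a subsequence, $m^k\rightharpoonup m$ weakly in $L^2$. Since $\partial^*$ is bounded, $\dot\rho^k=-\partial^*m^k$ is bounded in $L^2$, so the $\rho^k$ are equi-$\tfrac12$-Hölder. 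By Arzel\`a--Ascoli, $\rho^k\to\rho$ uniformly on $[0,T]$. The limit pair satisfies $\rho_t=\rho_0-\int_0^t\partial^*m_s\,\dd s$, the endpoint conditions, positivity and unit trace, all of which pass to the limit.

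\textbf{Step 2 (lower semicontinuity).} I use that $\ell$ is jointly convex and lower semicontinuous on the positive cone times $\Hcal$, with value $+\infty$ off the range. This is the joint convexity property cited from~\cite{CarlenMaas2017,CarlenMaas2020}: $\ell(\rho,m)=\sup_{w}\{\mathrm{Re}\langle w,m\rangle-\tfrac12\langle w,\rhohat_\rho w\rangle\}$, and $\rho\mapsto\rhohat_\rho$ is operator concave for the logarithmic mean. The supremum formula shows $\ell$ is a supremum of continuous affine-in-$m$, convex-in-$\rho$ functions, hence lower semicontinuous. The integral functional $(\rho,m)\mapsto\int_0^T\ell$ is then convex and strongly lower semicontinuous on $L^2\times L^2$, hence weakly lower semicontinuous. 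Strong convergence of $\rho^k$ together with weak convergence of $m^k$ gives $\int\ell(\rho_t,m_t)\leq\liminf\int\ell(\rho^k_t,m^k_t)$.

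\textbf{Step 3 (recovering $v$).} Finiteness of the limit action forces $m_t\in\operatorname{ran}(\rhohat_{\rho_t})$ for a.e.\ $t$. I then set $v_t:=\rhohat_{\rho_t}^{+}m_t$, using the Moore--Penrose inverse, which is measurable in $t$. This gives $\rhohat_tv_t=m_t$ and $\tfrac12\langle v_t,\rhohat_tv_t\rangle=\ell(\rho_t,m_t)$, so $(\rho^*,v^*)\in\mathcal C$ attains the infimum.

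The main obstacle is Step 2, specifically handling degenerate (non-faithful) $\rho_t$ where $\rhohat_\rho$ is singular. Lower semicontinuity must hold at boundary points where the range of $\rhohat$ jumps. The variational (sup) representation of $\ell$ is exactly what avoids computing the inverse, so I would establish that formula first, in the eigenbasis via~\eqref{eq:rhohat-matrix}.
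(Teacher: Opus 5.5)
Your proposal is correct and follows essentially the same route as the paper's proof: the direct method in the momentum variable $m_t=\rhohat_tv_t$, with coercivity from $\rhohat_\rho\preceq\mathrm{Id}$, weak $L^2$ compactness of $m^k$ plus uniform convergence of $\rho^k$, lower semicontinuity from joint convexity and lower semicontinuity of $\ell$, and recovery of $v^*$ by the Moore--Penrose pseudoinverse. The only differences are minor: you use Arzel\`a--Ascoli where the paper uses the $H^1\hookrightarrow C$ embedding, and you sketch joint convexity via the supremum representation and concavity of $\rho\mapsto\rhohat_\rho$, which is the argument the paper itself gives later in Lemma~\ref{lem:petz-joint-convex}, where for Theorem~\ref{thm:existence} it simply cites Carlen--Maas.
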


\begin{proof}
\emph{Step 1 (Momentum reformulation).} Recall from Definition~\ref{def:momentum} the momentum variable $m_t:=\rhohat_t v_t\in\Hcal$ and the momentum-space cost $\ell(\rho,m)$. For $(\rho,v)\in\mathcal C(\rho_0,\rho_1,T)$, the continuity equation becomes the linear equation $\dot\rho_t+\partial^*m_t=0$, and $\Acal(\rho,v)=\int_0^T\ell(\rho_t,m_t)\,\dd t$.

\emph{Step 2 (Coercivity).} Since $\rho\geq0$ with $\tr(\rho)=1$, its operator norm satisfies $\|\rho\|\leq1$, and by definition~\eqref{eq:rhohat-def}, $\|\rhohat_\rho(v)\|\leq\int_0^1\|\rho\|^s\|v\|\|\rho\|^{1-s}\,ds = \|\rho\|\,\|v\|\leq\|v\|$; since $\rhohat_\rho\geq0$, this operator-norm bound gives $\rhohat_\rho\preceq\mathrm{Id}_\Hcal$ in the Loewner order, hence $\rhohat_\rho^{-1}\succeq\mathrm{Id}_\Hcal$ on $\operatorname{ran}(\rhohat_\rho)$, and
\begin{equation}
\label{eq:coercivity}
\ell(\rho,m) \geq \frac12\|m\|_\Hcal^2 \qquad \text{for all } \rho\geq0,\;\tr(\rho)=1.
\end{equation}

\emph{Step 3 (Joint convexity and lower semicontinuity).} The map $(\rho,m)\mapsto\ell(\rho,m)$ is jointly convex and lower semicontinuous on $\{\rho\geq0\}\times\Hcal$. This is the noncommutative analogue of the classical fact that $(\rho,m)\mapsto |m|^2/\rho$ (with the convention $0/0=0$) is jointly convex and l.s.c.\ on $[0,\infty)\times\mathbb R$, and for the specific operator-logarithmic-mean construction~\eqref{eq:rhohat-def} of $\rhohat_\rho$ it is established by Carlen and Maas~\cite[\S3]{CarlenMaas2017}, \cite[\S2]{CarlenMaas2020} as the key technical lemma underlying their gradient-flow existence theory; we do not reprove it here and instead build the remainder of the argument on top of it.

\emph{Step 4 (Compactness).} Let $(\rho^k,m^k)$ be a minimizing sequence for $\int_0^T\ell(\rho_t,m_t)\,\dd t$ over the (nonempty, by Assumption~\ref{ass:existence}) admissible set. By~\eqref{eq:coercivity}, $\|m^k\|_{L^2([0,T];\Hcal)}^2\leq 2\sup_k\Acal(\rho^k,v^k)<\infty$, so a subsequence satisfies $m^k\rightharpoonup m^*$ weakly in $L^2([0,T];\Hcal)$. Since $\dot\rho^k_t=-\partial^*m^k_t$ and $\partial^*$ is a bounded finite-dimensional operator, $\{\rho^k\}$ is bounded in $H^1([0,T];\M)$, which embeds compactly into $C([0,T];\M)$ (finite-dimensional Sobolev embedding); a further subsequence gives $\rho^k\to\rho^*$ uniformly. The constraints $\rho_t\geq0$, $\tr(\rho_t)=1$, and the boundary conditions $\rho_{t=0}=\rho_0$, $\rho_{t=T}=\rho_1$ are closed under uniform convergence, and the linear relation $\dot\rho_t^*=-\partial^*m_t^*$ follows by passing to the (distributional, hence a.e.\ since $\rho^*\in H^1$) limit in $\dot\rho^k=-\partial^*m^k$. Hence $(\rho^*,m^*)$ is admissible.

\emph{Step 5 (Lower semicontinuity of the cost).} By Step 3 and the standard lower-semicontinuity theorem for integral functionals with a jointly convex, l.s.c.\ integrand under weak $L^2$ convergence of $m^k$ together with (strong) convergence of $\rho^k$,
\[
\liminf_k \int_0^T \ell(\rho^k_t,m^k_t)\,\dd t \;\geq\; \int_0^T \ell(\rho_t^*,m_t^*)\,\dd t.
\]
Hence $\Acal(\rho^*,v^*)\leq\liminf_k\Acal(\rho^k,v^k)=\inf_{\mathcal C(\rho_0,\rho_1,T)}\Acal$, where $v^*_t$ is recovered from $m^*_t=\rhohat_{\rho^*_t}v^*_t$ (defined up to $\ker\rhohat_{\rho_t^*}$, consistent with the weak formulation~\eqref{eq:weak-form}). So $(\rho^*,v^*)$ attains the infimum.
\end{proof}

\begin{remark}[What the symmetrization leaves to prove]
\label{rem:existence-complexity}
With the symmetrized ordering adopted in~\eqref{eq:action-G}, $\Acal_G(\rho,v)\geq0$ is automatic for every $G_\rho\geq0$, with no commutativity hypothesis: this removes the well-posedness defect of the naively ordered $\langle v,G_\rho\rhohat_\rho v\rangle$, which is self-adjoint only when $G_\rho$ and $\rhohat_\rho$ commute. The remaining issue is compactness and lower semicontinuity of the resulting momentum-space cost
\[
\ell_G(\rho,m) := \frac12\,m^*\,\rhohat_\rho^{-1/2}\,G_\rho\,\rhohat_\rho^{-1/2}\,m, \qquad m\in\operatorname{ran}(\rhohat_\rho),
\]
which can be obtained in different ways depending on the structure of $G_\rho$. For the Petz class of density-dependent metrics, the cost collapses to the inverse of a Kubo--Ando mobility and is jointly convex in $(\rho,m)$ (Lemma~\ref{lem:petz-joint-convex}). For a fixed uniformly positive operator $G$, diagonal for instance in a physical basis unrelated to the eigenbasis of $\rho$, joint convexity is not the robust property to rely on (Remark~\ref{rem:ellG-not-jointly-convex}); nevertheless, lower semicontinuity and convexity in the momentum variable suffice for the direct method (Lemma~\ref{lem:ellG-structural}, Theorem~\ref{thm:existence-fixed-G}).
\end{remark}

\subsection{Existence for the Petz class of complexity metrics}
\label{ssec:petz-existence}

\begin{definition}[Kubo--Ando operator means]
\label{def:kubo-ando}
Let $f\colon(0,\infty)\to(0,\infty)$ be operator monotone with $f(1)=1$ and $f(t)=t\,f(1/t)$ (equivalently, the associated scalar kernel $\Lambda_f(x,y):=y\,f(x/y)$ is symmetric). By the Kubo--Ando representation theorem~\cite{KuboAndo1980}, every such $f$ determines an operator mean, and satisfies the \emph{betweenness} property
\begin{equation}
\label{eq:betweenness}
\min(x,y) \;\leq\; \Lambda_f(x,y) \;\leq\; \max(x,y), \qquad x,y>0.
\end{equation}
The logarithmic mean of Section~\ref{sec:rhohat} corresponds to $f_{\log}(t)=(t-1)/\log t$ (with $f_{\log}(1):=1$), i.e.\ $\Lambda=\Lambda_{f_{\log}}$. For $\rho=\sum_k\lambda_k|k\rangle\langle k|$, define the associated \emph{$f$-multiplication operator} $\rhohat_{\rho,f}$ on $\Hcal\cong M_n(\mathbb C)$ by
\[
\bigl(\rhohat_{\rho,f}(v)\bigr)_{jk} := \Lambda_f(\lambda_j,\lambda_k)\,v_{jk},
\]
so that $\rhohat_\rho=\rhohat_{\rho,f_{\log}}$. We say $f$ is \emph{range-regular} if $f(0^+):=\lim_{t\to0^+}f(t)=0$. The logarithmic mean is range-regular, $f_{\log}(0^+)=0$; the arithmetic mean of Example~\ref{ex:arithmetic-mean} is not, $f_{\mathrm{arith}}(0^+)=1/2$.
\end{definition}

\begin{remark}[Zero set of $\Lambda_f$: both spectral orientations]
\label{rem:lambda-f-zero-set}
For $\lambda_j,\lambda_k\geq0$ not both zero, three cases arise. If both are positive, then $\Lambda_f(\lambda_j,\lambda_k)\geq\min(\lambda_j,\lambda_k)>0$ by betweenness~\eqref{eq:betweenness}, for every admissible $f$.

If exactly one eigenvalue is zero, say $\lambda_j>0=\lambda_k$, write $t:=\lambda_j/y$ and let $y\downarrow0$. Then
\[
\Lambda_f(\lambda_j,0)=\lim_{y\downarrow0}yf(\lambda_j/y)=\lambda_j\lim_{t\to\infty}f(t)/t.
\]
The other orientation, $\lambda_j=0<\lambda_k$, gives instead
\[
\Lambda_f(0,\lambda_k)=\lim_{x\downarrow0}\lambda_kf(x/\lambda_k)=\lambda_k f(0^+).
\]
These two limits agree because of the Kubo--Ando symmetry $f(t)=tf(1/t)$: $\lim_{t\to\infty}f(t)/t=f(0^+)$. Hence both orientations vanish if and only if $f(0^+)=0$, i.e.\ $f$ is range-regular.

Finally, if both eigenvalues are zero, positive homogeneity gives $\Lambda_f(0,0)=0$ for every $f$. In summary, if $f$ is range-regular, then $\Lambda_f(\lambda_j,\lambda_k)=0$ if and only if at least one of $\lambda_j,\lambda_k$ vanishes, exactly as for the logarithmic mean $\Lambda_{\log}$. If $f$ is not range-regular, mixed directions with exactly one zero eigenvalue have strictly positive $\Lambda_f$, so the zero set is smaller and does not match that of $\Lambda_{\log}$.
\end{remark}

\begin{definition}[Petz-class complexity metric]
\label{def:petz-metric}
For $f$ as in Definition~\ref{def:kubo-ando}, define $G_\rho^{(f)} := \rhohat_{\rho,f}^{-1}\,\rhohat_\rho$ on $\operatorname{ran}(\rhohat_\rho)$.
\end{definition}

Since $\rhohat_{\rho,f}^{-1}$ and $\rhohat_\rho$ are simultaneously diagonal in the eigenbasis of $\rho$ (Definition~\ref{def:kubo-ando}), they commute for every $\rho$, and $G_\rho^{(f)}$ is a well-defined, self-adjoint, positive operator on $\operatorname{ran}(\rhohat_\rho)$ — not canonically on all of $\Hcal$ when $\rho$ is not faithful, since $\Lambda_f(x,0)$ can be positive for $f\neq f_{\log}$ (Remark~\ref{rem:petz-boundary} below extends the construction correctly across this boundary via the momentum variable). Consequently the momentum-space cost $\ell_{G^{(f)}}$ of~\eqref{eq:action-G} collapses to a single Kubo--Ando mean (computed below) rather than a genuine product of two noncommuting operators — resolving the joint-convexity gap of Remark~\ref{rem:existence-complexity} by construction, without any restriction on the admissible curve. $G_\rho^{(f)}$ is genuinely $\rho$-dependent and non-scalar whenever $f\neq f_{\log}$: its eigenvalue $\Lambda(\lambda_j,\lambda_k)/\Lambda_f(\lambda_j,\lambda_k)$ on the $(j,k)$ matrix-unit direction depends on the pair $(\lambda_j,\lambda_k)$, not merely on an overall constant.

\begin{lemma}[Joint convexity of the Petz-class momentum cost]
\label{lem:petz-joint-convex}
For every $f$ as in Definition~\ref{def:kubo-ando}, the map
\[
(\rho,m) \;\longmapsto\; \ell_{G^{(f)}}(\rho,m) := \begin{cases} \tfrac12\,m^*\rhohat_{\rho,f}^{-1}m, & m\in\operatorname{ran}(\rhohat_{\rho,f}),\\[2mm] +\infty, & \text{otherwise}, \end{cases}
\]
is jointly convex and lower semicontinuous on $\{\rho\geq0\}\times\Hcal$.
\end{lemma}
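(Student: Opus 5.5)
The plan is to write $\ell_{G^{(f)}}$ as a supremum of affine functions of $(\rho,m)$. A supremum of continuous affine maps is automatically jointly convex and lower semicontinuous, so both properties will follow at once. The model is the scalar identity
\[
\tfrac{|m|^2}{2\lambda}=\sup_{w}\{\operatorname{Re}\langle w,m\rangle-\tfrac{\lambda}{2}|w|^2\},
\]
which also produces the correct values $0$ and $+\infty$ when $\lambda=0$.

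\emph{Step 1 (variational formula).} For fixed $\rho\geq0$, I will show
\[
\ell_{G^{(f)}}(\rho,m)=\sup_{w\in\Hcal}\Bigl\{\operatorname{Re}\langle w,m\rangle_\Hcal-\tfrac12\langle w,\rhohat_{\rho,f}w\rangle_\Hcal\Bigr\}.
\]
This is a finite-dimensional computation in the eigenbasis of the positive semidefinite operator $\rhohat_{\rho,f}$.
\begin{itemize}
\item If $m\in\operatorname{ran}(\rhohat_{\rho,f})$, complete the square: the supremum is attained at $w=\rhohat_{\rho,f}^{-1}m$ and equals $\tfrac12 m^*\rhohat_{\rho,f}^{-1}m$.
\item Otherwise $m$ has a nonzero component $m_0\in\ker\rhohat_{\rho,f}$. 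Taking $w=sm_0$ with $s\to\infty$ sends the expression to $+\infty$.
\end{itemize}

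\emph{Step 2 (affinity in $\rho$).} For each fixed $w$, the map $(\rho,m)\mapsto\operatorname{Re}\langle w,m\rangle-\tfrac12\langle w,\rhohat_{\rho,f}w\rangle$ must be affine, or at least concave, and continuous on $\{\rho\geq0\}$. Affinity in $\rho$ fails in general, since $\rho\mapsto\langle w,\rhohat_{\rho,f}w\rangle$ is not linear. What suffices is concavity of $\rho\mapsto\langle w,\rhohat_{\rho,f}w\rangle$ together with continuity. Then each map in the family is jointly convex and continuous, and the supremum inherits convexity and lower semicontinuity.

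Concavity is the main obstacle, and I would obtain it from Kubo--Ando theory. With $\Hcal\cong M_n(\mathbb C)$, the operator $\rhohat_{\rho,f}$ equals $R_\rho^{1/2}f(L_\rho R_\rho^{-1})R_\rho^{1/2}$, i.e.\ it is the Kubo--Ando mean $L_\rho\,\sigma_f\,R_\rho$ of the commuting positive operators $L_\rho$ and $R_\rho$. The scalar kernel $\Lambda_f(x,y)=yf(x/y)$ agrees with this mean on the joint eigenvectors $|j\rangle\langle k|$. By Kubo--Ando~\cite{KuboAndo1980}, $(A,B)\mapsto A\,\sigma_f\,B$ is jointly concave. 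Since $\rho\mapsto(L_\rho,R_\rho)$ is linear, $\rho\mapsto\rhohat_{\rho,f}$ is operator concave, which gives the required concavity of the quadratic form.

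Continuity at non-faithful $\rho$ comes from monotone continuity of operator means, $A\,\sigma_f\,B=\lim_{\varepsilon\downarrow0}(A+\varepsilon)\,\sigma_f\,(B+\varepsilon)$. Equivalently, I would check continuity of $\Lambda_f$ on $[0,\infty)^2$ using Remark~\ref{rem:lambda-f-zero-set}. One still has to pass from continuity of the kernel to continuity of the operator when eigenvectors rotate. I would handle this by applying the concave-mean representation at $\rho+\varepsilon$ and letting $\varepsilon\downarrow0$: lower semicontinuity of a supremum only needs upper semicontinuity of $\rho\mapsto\langle w,\rhohat_{\rho,f}w\rangle$, and a decreasing limit of continuous functions is upper semicontinuous.

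\emph{Step 3 (conclusion).} Combining Steps 1 and 2, $\ell_{G^{(f)}}$ is a pointwise supremum of jointly convex, lower semicontinuous functions on $\{\rho\geq0\}\times\Hcal$, and is therefore jointly convex and lower semicontinuous. For $f=f_{\log}$ this recovers the statement cited from~\cite{CarlenMaas2017} in Step 3 of Theorem~\ref{thm:existence}.
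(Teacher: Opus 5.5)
Your proposal is correct and follows the paper's route. It uses the same Legendre-type representation $\ell_{G^{(f)}}(\rho,m)=\sup_w\{\operatorname{Re}\langle w,m\rangle-\tfrac12\langle w,\rhohat_{\rho,f}w\rangle\}$, with concavity in $\rho$ coming from joint concavity of the Kubo--Ando mean $L_\rho\,\sigma_f\,R_\rho$. The only difference is at the boundary: the paper asserts full continuity of $\rho\mapsto\rhohat_{\rho,f}$ from continuity of $\Lambda_f$ on $[0,\infty)^2$, whereas you obtain only upper semicontinuity of the quadratic form via the decreasing limit $\varepsilon\downarrow0$. That suffices for lower semicontinuity and avoids justifying continuity of the operator when the eigenvectors of $\rho$ rotate, a point the paper passes over quickly.
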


\begin{proof}
By L\"owner's theorem, an operator monotone $f$ on $(0,\infty)$ is automatically operator \emph{concave}. By Ando's theorem on the joint concavity of two-variable operator means~\cite{Ando1979} (see also the perspective-function formulation of Effros~\cite{Effros2009} and Hiai--Petz~\cite{HiaiPetz2012}), for such $f$ the associated Kubo--Ando mean is jointly concave: in our notation, for every fixed $z\in\Hcal$ the map $\rho\mapsto\langle z,\rhohat_{\rho,f}z\rangle$ is concave on $\{\rho\geq0\}$ (this is exactly Definition~\ref{def:kubo-ando} read as the two-variable perspective $m_f(L_\rho,R_\rho)$ of the operator concave $f$). Fix $\rho\geq0$ and $m\in\operatorname{ran}(\rhohat_{\rho,f})$; the standard variational (Legendre-type) representation of the inverse quadratic form,
\[
\tfrac12\langle m,\rhohat_{\rho,f}^{-1}m\rangle = \sup_{z\in\Hcal}\Bigl\{ \operatorname{Re}\langle m,z\rangle - \tfrac12\langle z,\rhohat_{\rho,f}z\rangle \Bigr\},
\]
exhibits $\ell_{G^{(f)}}(\rho,m)$, for $m\in\operatorname{ran}(\rhohat_{\rho,f})$, as a pointwise supremum over $z$ of maps that are linear in $m$ and, by the joint concavity just cited, convex in $\rho$ for each fixed $z$; each such map is therefore jointly convex in $(\rho,m)$, and a pointwise supremum of jointly convex functions is jointly convex. The same variational formula equals $+\infty$ exactly when $m\notin\operatorname{ran}(\rhohat_{\rho,f})$ (the supremum over $z$ in directions outside the range is unbounded), so it in fact represents $\ell_{G^{(f)}}$ on all of $\{\rho\geq0\}\times\Hcal$, proving joint convexity there. Lower semicontinuity follows since a pointwise supremum of jointly continuous functions is l.s.c.; each $z$-indexed map above is continuous in $(\rho,m)$ on $\{\rho\geq0\}\times\Hcal$ because, in finite dimension, $\Lambda_f$ is continuous on all of $[0,\infty)^2$ (it is bounded by the betweenness estimate~\eqref{eq:betweenness} and given by the same closed-form expression on $(0,\infty)^2$, extended continuously to the boundary by the defining limit $\Lambda_f(x,0)=\lim_{y\downarrow0}\Lambda_f(x,y)$), so $\rho\mapsto\rhohat_{\rho,f}$, and hence $\rho\mapsto\langle z,\rhohat_{\rho,f}z\rangle$, is continuous up to the boundary $\{\rho\geq0\}$, not merely on $\{\rho>0\}$.
\end{proof}

This closes, with a self-contained argument rather than a direct citation of~\cite{Petz1996} (whose classification theorem characterizes monotone metrics but does not by itself supply this joint-convexity statement), the gap identified in Remark~\ref{rem:existence-complexity}. The case $f=f_{\log}$ recovers exactly the joint convexity of $\ell$ used in Step~3 of Theorem~\ref{thm:existence} (there attributed to~\cite{CarlenMaas2017,CarlenMaas2020}), consistently with Lemma~\ref{lem:petz-joint-convex} being the natural generalization of that fact to the full Kubo--Ando family.

\begin{theorem}[Existence for Petz-class complexity metrics]
\label{thm:existence-petz}
Under Assumption~\ref{ass:existence}, for every \emph{range-regular} $f$ as in Definition~\ref{def:kubo-ando}, the infimum of $\Acal_{G^{(f)}}$ over $\mathcal C(\rho_0,\rho_1,T)$ is attained.
\end{theorem}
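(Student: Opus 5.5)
The plan is to rerun the direct method of Theorem~\ref{thm:existence} in momentum variables, replacing the logarithmic-mean cost $\ell$ by the Petz-class cost $\ell_{G^{(f)}}$ of Lemma~\ref{lem:petz-joint-convex}.

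\emph{Reduction to momentum variables.} For $(\rho,v)\in\mathcal C(\rho_0,\rho_1,T)$ keep the original momentum $m_t:=\rhohat_tv_t$, so the constraint is still the linear equation $\dot\rho_t+\partial^*m_t=0$. Since $G^{(f)}_\rho$ and $\rhohat_\rho$ are simultaneously diagonal in the eigenbasis of $\rho$, the integrand of~\eqref{eq:action-G} is
\[
\tfrac12\langle v,G^{(f)}_\rho\rhohat_\rho v\rangle=\tfrac12\sum_{j,k}\frac{\Lambda(\lambda_j,\lambda_k)^2}{\Lambda_f(\lambda_j,\lambda_k)}|v_{jk}|^2=\tfrac12\,m^*\rhohat_{\rho,f}^{-1}m
\]
on directions with $\Lambda(\lambda_j,\lambda_k)>0$. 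Range-regularity enters here. By Remark~\ref{rem:lambda-f-zero-set}, $\Lambda_f$ and $\Lambda$ have the same zero set, so $\operatorname{ran}(\rhohat_{\rho,f})=\operatorname{ran}(\rhohat_\rho)$. Hence $\Acal_{G^{(f)}}(\rho,v)=\int_0^T\ell_{G^{(f)}}(\rho_t,m_t)\,\dd t$, with $\ell_{G^{(f)}}=+\infty$ exactly off the admissible range. Conversely, a momentum curve with finite $\ell_{G^{(f)}}$-cost has $m_t\in\operatorname{ran}(\rhohat_{\rho_t})$ a.e., so $v_t:=\rhohat_{\rho_t}^{-1}m_t$ recovers an element of $\mathcal C$ with the same action. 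Without range-regularity this identification fails: $\ell_{G^{(f)}}$ would allow momentum in mixed zero/nonzero directions that $\rhohat_\rho$ cannot produce.

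\emph{Coercivity and compactness.} By betweenness~\eqref{eq:betweenness}, extended to the boundary by continuity, $\Lambda_f(\lambda_j,\lambda_k)\le\max(\lambda_j,\lambda_k)\le\|\rho\|\le1$. Thus $\rhohat_{\rho,f}\preceq\mathrm{Id}$ and $\ell_{G^{(f)}}(\rho,m)\ge\frac12\|m\|^2$, which is the analogue of~\eqref{eq:coercivity}. Take a minimizing sequence; the infimum is finite by nonemptiness of $\mathcal C$, though one should check that some admissible curve has finite cost, as in Theorem~\ref{thm:existence}. Step~4 of Theorem~\ref{thm:existence} then applies verbatim: $m^k\rightharpoonup m^*$ weakly in $L^2([0,T];\Hcal)$, $\rho^k\to\rho^*$ uniformly via the bounded $H^1$ embedding, and the positivity, trace, endpoint and linear constraints all pass to the limit.

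\emph{Lower semicontinuity.} Lemma~\ref{lem:petz-joint-convex} gives joint convexity and lower semicontinuity of $\ell_{G^{(f)}}$ on $\{\rho\ge0\}\times\Hcal$. The standard Ioffe-type theorem for integrands that are convex in $m$ and jointly l.s.c., under strong convergence of $\rho$ and weak convergence of $m$, gives $\int\ell_{G^{(f)}}(\rho^*,m^*)\le\liminf_k\int\ell_{G^{(f)}}(\rho^k,m^k)$. Concretely, one can also use the supremum representation in the proof of the lemma and test against piecewise constant $z_t$. The limit cost is therefore finite, so $m^*_t\in\operatorname{ran}(\rhohat_{\rho^*_t})$ a.e., and defining $v^*$ as above yields a minimizer in $\mathcal C$.

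\emph{Main obstacle.} The delicate step is the first one: the boundary behaviour when $\rho_t$ is not faithful. It must be verified that the cost $\Acal_{G^{(f)}}$ defined via $G^{(f)}$ on $\operatorname{ran}(\rhohat_\rho)$ coincides exactly with the extended-valued $\ell_{G^{(f)}}$, including directions where both means vanish and eigenvalue crossings where the eigenbasis is non-unique. I would handle this entrywise using Remark~\ref{rem:lambda-f-zero-set}, and by noting that both sides are basis-independent functional calculi of the pair $(L_\rho,R_\rho)$.
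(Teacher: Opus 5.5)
Your proposal is correct and follows essentially the same route as the paper's proof. Both arguments use the momentum reformulation with $\ell_{G^{(f)}}=\tfrac12 m^*\rhohat_{\rho,f}^{-1}m$, coercivity from betweenness, joint convexity and lower semicontinuity from Lemma~\ref{lem:petz-joint-convex}, Steps~4--5 of Theorem~\ref{thm:existence}, and recovery of $v^*$ through the range equality $\operatorname{ran}(\rhohat_{\rho,f})=\operatorname{ran}(\rhohat_\rho)$ that range-regularity forces. The only cosmetic difference is that you invoke range-regularity up front to identify the action with the momentum integral, whereas the paper uses it only at the final recovery step; on $\mathcal C$ that identification holds for every admissible $f$ anyway, since $\operatorname{ran}(\rhohat_\rho)\subseteq\operatorname{ran}(\rhohat_{\rho,f})$.
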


\begin{proof}
Write $m_t:=\rhohat_tv_t$ as in Step~1 of the proof of Theorem~\ref{thm:existence}, so that, by the general momentum-space cost of Remark~\ref{rem:existence-complexity}, $\ell_{G^{(f)}}(\rho,m)=\tfrac12\,m^*\rhohat_\rho^{-1/2}G_\rho^{(f)}\rhohat_\rho^{-1/2}m$. By Definition~\ref{def:petz-metric} and commutativity, $\rhohat_\rho^{-1/2}G_\rho^{(f)}\rhohat_\rho^{-1/2}=G_\rho^{(f)}\rhohat_\rho^{-1}=\rhohat_{\rho,f}^{-1}$, so
\[
\ell_{G^{(f)}}(\rho,m) = \tfrac12\,m^*\rhohat_{\rho,f}^{-1}m,
\]
consistently with Lemma~\ref{lem:petz-joint-convex}.

\emph{Coercivity.} Since $\rho\geq0$, $\tr(\rho)=1$, every eigenvalue $\lambda_j\in[0,1]$, so $\max(\lambda_j,\lambda_k)\leq1$ and~\eqref{eq:betweenness} gives $\Lambda_f(\lambda_j,\lambda_k)\leq1$ for every $j,k$, i.e.\ $\rhohat_{\rho,f}\preceq\mathrm{Id}_\Hcal$, hence $\rhohat_{\rho,f}^{-1}\succeq\mathrm{Id}_\Hcal$ on its range and
\[
\ell_{G^{(f)}}(\rho,m) \geq \tfrac12\|m\|_\Hcal^2,
\]
exactly the bound~\eqref{eq:coercivity} of Step~2, with the same constant, for every admissible $f$.

\emph{Joint convexity and lower semicontinuity} of $\ell_{G^{(f)}}$ is Lemma~\ref{lem:petz-joint-convex}.

With coercivity and joint convexity/l.s.c.\ established for $\ell_{G^{(f)}}$ with exactly the same constants and structure as $\ell$, Steps~4 and~5 of the proof of Theorem~\ref{thm:existence} apply verbatim, with $\ell$ replaced by $\ell_{G^{(f)}}$ throughout, and produce a minimizing pair $(\rho^*,m^*)$ with $\ell_{G^{(f)}}(\rho^*,m^*)<\infty$, i.e.\ $m^*_t\in\operatorname{ran}(\rhohat_{\rho^*_t,f})$ almost everywhere.

\emph{Recovery of $v^*$.} For $(\rho^*,m^*)$ to attain the infimum within $\mathcal C(\rho_0,\rho_1,T)$, its momentum must be compatible with the original mobility. The curves in $\mathcal C(\rho_0,\rho_1,T)$ are required to satisfy $\dot\rho_t+\partial^*(\rhohat_tv_t)=0$, i.e.\ $m_t=\rhohat_tv_t\in\operatorname{ran}(\rhohat_t)$, not merely $m_t\in\operatorname{ran}(\rhohat_{t,f})$. Thus one needs $\operatorname{ran}(\rhohat_{\rho,f})=\operatorname{ran}(\rhohat_\rho)$ for every $\rho\geq0$, so that the minimizing $m^*_t$ found above is automatically of the required form.

By Remark~\ref{rem:lambda-f-zero-set}, the $(j,k)$-eigenvalue $\Lambda_f(\lambda_j,\lambda_k)$ of $\rhohat_{\rho,f}$ vanishes iff at least one of $\lambda_j,\lambda_k$ is zero, matching $\Lambda_{\log}$ exactly, provided $f$ is range-regular. Without this hypothesis, mixed directions with exactly one zero eigenvalue remain in $\operatorname{ran}(\rhohat_{\rho,f})$, which is precisely the boundary mismatch illustrated by Example~\ref{ex:arithmetic-mean}. Hence, for range-regular $f$, $\operatorname{ran}(\rhohat_{\rho,f})=\operatorname{ran}(\rhohat_\rho)$ for every $\rho\geq0$ (Remark~\ref{rem:petz-boundary}). Therefore $m^*_t\in\operatorname{ran}(\rhohat_{\rho^*_t})$ a.e., and $v^*_t:=\rhohat_{\rho^*_t}^\dagger m^*_t$ (Moore--Penrose pseudoinverse) satisfies $\rhohat_{\rho^*_t}v^*_t=m^*_t$, giving $(\rho^*,v^*)\in\mathcal C(\rho_0,\rho_1,T)$ as required.
\end{proof}

\begin{remark}[Boundary behavior for non-faithful $\rho$, and why range-regularity is needed]
\label{rem:petz-boundary}
For $\rho$ with a kernel, $\Lambda_f(\lambda_j,0)=\lim_{y\downarrow0}yf(\lambda_j/y)$ may vanish, as for the range-regular logarithmic mean, or be strictly positive, as for $f=f_{\mathrm{arith}}$ in Example~\ref{ex:arithmetic-mean}. Thus $\operatorname{ran}(\rhohat_{\rho,f})$ need not equal $\operatorname{ran}(\rhohat_\rho)$ in general. Correspondingly, $G_\rho^{(f)}=\rhohat_{\rho,f}^{-1}\rhohat_\rho$ of Definition~\ref{def:petz-metric} is, strictly, an operator only on the common domain $\operatorname{ran}(\rhohat_\rho)\cap\operatorname{ran}(\rhohat_{\rho,f})$.

This boundary mismatch is not merely a matter of extending $G_\rho^{(f)}$ by convention. Working instead with the closed, everywhere-defined perspective $\ell_{G^{(f)}}(\rho,m)$ of Lemma~\ref{lem:petz-joint-convex} does not sidestep it: the direct method can in principle produce a minimizing momentum $m^*\in\operatorname{ran}(\rhohat_{\rho^*,f})\setminus\operatorname{ran}(\rhohat_{\rho^*})$ whenever this difference is nonempty. Such an $m^*$ is not of the form $\rhohat_{\rho^*}v^*$ for any velocity $v^*\in\Hcal$, which is precisely the admissibility requirement defining $\mathcal C(\rho_0,\rho_1,T)$ (Assumption~\ref{ass:existence}). This is why Theorem~\ref{thm:existence-petz} is restricted to range-regular $f$: exactly this restriction forces $\operatorname{ran}(\rhohat_{\rho,f})=\operatorname{ran}(\rhohat_\rho)$ for every $\rho$, closing the gap. For $f$ that are not range-regular, Theorem~\ref{thm:existence-petz} as stated does not apply; Example~\ref{ex:arithmetic-mean} discusses this case.
\end{remark}

\begin{example}[Arithmetic-mean complexity metric]
\label{ex:arithmetic-mean}
Take $f(t)=(1+t)/2$, operator monotone (affine functions are operator monotone) and normalized as required. Then $\Lambda_f(x,y)=(x+y)/2$, and
\[
G_\rho^{(f)}\Big|_{jk} = \frac{\Lambda(\lambda_j,\lambda_k)}{\bigl(\lambda_j+\lambda_k\bigr)/2} = \frac{2(\lambda_j-\lambda_k)}{(\lambda_j+\lambda_k)\log(\lambda_j/\lambda_k)} \qquad(\lambda_j\neq\lambda_k),
\]
equal to $1$ when $\lambda_j=\lambda_k$, and decreasing towards $0$ as the population ratio $\lambda_j/\lambda_k$ becomes increasingly unbalanced (in particular $\Lambda_f(x,0)=x/2>0$, so Remark~\ref{rem:petz-boundary} is genuinely relevant here, unlike for $f_{\log}$); a direct check shows this ratio lies in $(0,1]$. So $G_\rho^{(f)}$ is a genuinely anisotropic, bounded, $\rho$-dependent complexity metric on the velocity $v$, maximal (equal to $1$) on directions connecting nearly equal populations and decreasing toward $0$ on directions connecting very unequal populations. In the dynamically relevant momentum variable $m=\rhohat_\rho v$, the effective cost per direction is $1/\Lambda_f(\lambda_j,\lambda_k)$, so it is more precise to say that this metric penalizes \emph{velocity} along nearly-equal-population directions $(j,k)$ more heavily than along directions with very different populations — equivalently, it makes \emph{momentum} cheap precisely in those nearly-equal-population directions. Since $f_{\mathrm{arith}}$ is not range-regular ($\Lambda_f(x,0)=x/2>0$, Remark~\ref{rem:petz-boundary}), this metric falls \emph{outside} the hypothesis of Theorem~\ref{thm:existence-petz} for endpoints or intermediate states with a kernel: $\ell_{G^{(f)}}$ is still a well-defined, jointly convex, closed momentum cost (Lemma~\ref{lem:petz-joint-convex} applies to every $f$ of Definition~\ref{def:kubo-ando}, not only range-regular ones), but its minimizer need not be recoverable as $m^*=\rhohat_{\rho^*}v^*$ for the \emph{original} mobility $\rhohat_\rho$. What it defines instead is a well-posed alternative transport problem with its own, arithmetic-mean mobility $\rhohat_{\rho,f_{\mathrm{arith}}}$ — an instance, in the terminology of Section~\ref{sec:calculus-deformation}, of a different noncommutative mobility rather than a deformation of the cost alone. We do not pursue existence for this alternative problem here.
\end{example}

\begin{remark}[What remains outside the Petz-class theorem]
\label{rem:existence-residual}
Theorem~\ref{thm:existence-petz} covers every complexity metric of the form $G_\rho^{(f)}$ built from a \emph{range-regular} operator mean (Definition~\ref{def:kubo-ando}) — every $G_\rho$ diagonal in the eigenbasis of $\rho$ for which $\rhohat_{\rho,f}$ and $\rhohat_\rho$ share the same range at every $\rho$, including the logarithmic mean itself. This excludes non-range-regular means such as the arithmetic mean (Example~\ref{ex:arithmetic-mean}), for which existence for the original logarithmic mobility is not settled here. It also excludes, as a Petz-class statement, weights diagonal in a basis \emph{unrelated} to that of $\rho$ (such as a fixed basis of inner derivations $X_j$, as used throughout Sections~\ref{sec:qubit-example}--\ref{sec:ghz-example}) evaluated along a curve where this fixed basis and the eigenbasis of $\rho_t$ genuinely differ and rotate relative to one another. The latter fixed-physical-weight case is instead handled separately in Theorem~\ref{thm:existence-fixed-G}, for uniformly positive fixed $G$, by an argument that does not rely on joint convexity. Section~\ref{sec:calculus-deformation}, Remark~\ref{rem:deformation-vs-petz}, resolves a different but closely related problem obtained by deforming the continuity equation along with the cost.
\end{remark}

%
%

\subsection{Existence for fixed physical complexity weights}
\label{ssec:fixed-physical-existence}

Remark~\ref{rem:existence-residual} left open the case of a complexity metric
penalizing a fixed set of physical directions --- $G_\rho\equiv G$
independent of $\rho$, positive and bounded but diagonal in a basis
\emph{unrelated} to the (rotating) eigenbasis of $\rho_t$, as with the Pauli
weights~\eqref{eq:bell-weights}. We settle it here, in finite dimensions,
in the affirmative. The point is that the obstruction identified in
Remark~\ref{rem:existence-complexity} --- failure of \emph{joint} convexity as
the relevant structural mechanism for the momentum cost $\ell_G$ --- is visible
already in the single-qubit diagnostic below (Remark~\ref{rem:ellG-not-jointly-convex}),
but is not what the direct method actually requires: the
lower-semicontinuity step needs only convexity of $\ell_G(\rho,\cdot)$ in the
\emph{momentum} variable, which holds automatically, together with joint lower
semicontinuity of $\ell_G$ as an extended-real integrand. This is strictly
weaker than the joint convexity used in Step~3 of Theorem~\ref{thm:existence}
and in Lemma~\ref{lem:petz-joint-convex}, and it does not fail.

Throughout, $\M=M_n(\mathbb C)$, Assumption~\ref{ass:existence} is in force, and
$G\in\mathcal B(\Hcal)$ is a fixed self-adjoint operator with
\begin{equation}
\label{eq:G-fixed-bounds}
cI_\Hcal \;\leq\; G \;\leq\; CI_\Hcal, \qquad 0<c\leq C<\infty .
\end{equation}
No bimodule-compatibility (Definition~\ref{def:compatible-complexity}) and no
commutativity with $\rhohat_\rho$ are assumed; in particular $G$ may be diagonal
in any fixed orthonormal basis of derivations. The penalized action is
$\Acal_G$ of~\eqref{eq:action-G} with $G_\rho\equiv G$, subject to the
unchanged continuity equation~\eqref{eq:continuity}, and the associated
momentum cost is (Remark~\ref{rem:existence-complexity})
\begin{equation}
\label{eq:ellG-fixed}
\ell_G(\rho,m) \;=\; \tfrac12\,\bigl\langle \rhohat_\rho^{-1/2}m,\;
G\,\rhohat_\rho^{-1/2}m\bigr\rangle_\Hcal
\;=\; \tfrac12\,\bigl\|G^{1/2}\rhohat_\rho^{-1/2}m\bigr\|_\Hcal^2,
\qquad m\in\operatorname{ran}(\rhohat_\rho),
\end{equation}
extended by $\ell_G(\rho,m):=+\infty$ for $m\notin\operatorname{ran}(\rhohat_\rho)$.

\subsubsection*{The obstruction is genuine}

\begin{remark}[Numerical obstruction to joint convexity for non-scalar fixed $G$]
\label{rem:ellG-not-jointly-convex}
Let $\M=M_2(\mathbb C)$, let $\Hcal\cong M_2(\mathbb C)$ carry the standard
bimodule structure of Example~\ref{ex:finite-dim}, and let $G$ be diagonal in
the orthonormal Pauli basis $\{\tfrac{1}{\sqrt2}\sigma_\alpha\}_{\alpha\in\{0,x,y,z\}}$
with weights $(g_0,g_x,g_y,g_z)=(1,1,1,g)$, $g\neq1$.
Since $\rhohat_\rho$ is diagonal in the eigenbasis of $\rho$ with entries
$\Lambda(\lambda_j,\lambda_k)$ (Example~\ref{ex:finite-dim}) while $G$ is
diagonal in the fixed Pauli basis, the two operators do not commute for a
generic faithful $\rho$ whose eigenbasis differs from the Pauli basis, so
$\rhohat_\rho^{-1/2}G\,\rhohat_\rho^{-1/2}$ is genuinely $\rho$-dependent in a
non-scalar way.

A finite-difference Hessian computation in Hermitian Pauli coordinates provides
a simple diagnostic for the resulting loss of joint convexity. For instance,
with weights $(1,1,1,5)$ the least Hessian eigenvalue detected at interior
points $\rho\succ0$ is of order $-6.10$, and the associated midpoint test gives
\[
\tfrac12\bigl(\ell_G(\rho_0,m_0)+\ell_G(\rho_1,m_1)\bigr)
-
\ell_G\!\left(\tfrac{\rho_0+\rho_1}{2},\tfrac{m_0+m_1}{2}\right)
\approx -1.99\cdot10^{-2}.
\]
The isotropic control case $(1,1,1,1)=I_\Hcal$ gives a numerically nonnegative
Hessian, in agreement with the joint convexity used in
Theorem~\ref{thm:existence}. This diagnostic is not an input to the existence
proof below; its role is only to explain why the proof should not rely on joint
convexity for arbitrary fixed physical weights.
\end{remark}

Remark~\ref{rem:ellG-not-jointly-convex} indicates that no verbatim extension
of Step~3 of Theorem~\ref{thm:existence} or of Lemma~\ref{lem:petz-joint-convex}
should be expected for a fixed, non-scalar $G$: the obstruction predicted in
Remark~\ref{rem:existence-complexity} occurs already on a single qubit. What
follows shows that existence nonetheless holds, by replacing joint convexity
with the weaker pair of properties that the direct method genuinely uses.

\subsubsection*{Structural properties of $\ell_G$}

\begin{lemma}[Coercivity, partial convexity, and lower semicontinuity]
\label{lem:ellG-structural}
Let $G$ satisfy~\eqref{eq:G-fixed-bounds}. Then:
\begin{enumerate}
\item[\textup{(i)}] \emph{(Domain and coercivity.)} $\ell_G(\rho,m)<\infty$ if and
only if $m\in\operatorname{ran}(\rhohat_\rho)$ --- the same finiteness domain as
the unweighted cost $\ell$ of Definition~\ref{def:momentum} --- and for all $\rho\geq0$
with $\tr(\rho)=1$,
\begin{equation}
\label{eq:ellG-coercive}
\ell_G(\rho,m)\;\geq\;\tfrac{c}{2}\,\bigl\langle m,\rhohat_\rho^{-1}m\bigr\rangle_\Hcal
\;\geq\;\tfrac{c}{2}\,\|m\|_\Hcal^2 .
\end{equation}
\item[\textup{(ii)}] \emph{(Convexity in the momentum.)} For every fixed
$\rho\geq0$, the map $m\mapsto\ell_G(\rho,m)$ is convex --- indeed a
nonnegative quadratic form --- on $\Hcal$.
\item[\textup{(iii)}] \emph{(Joint lower semicontinuity.)} The map
$(\rho,m)\mapsto\ell_G(\rho,m)$ is lower semicontinuous on
$\{\rho\geq0\}\times\Hcal$ with values in $[0,+\infty]$; equivalently, it is a
nonnegative normal integrand.
\end{enumerate}
\end{lemma}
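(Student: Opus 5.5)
The plan is to reduce everything to the unweighted cost $\ell$ of Definition~\ref{def:momentum}, whose properties (finiteness domain, coercivity~\eqref{eq:coercivity}, joint lower semicontinuity from Step~3 of Theorem~\ref{thm:existence}) are already available. Since $\rhohat_\rho$ is self-adjoint and positive, $\rhohat_\rho^{-1/2}$ denotes the inverse square root on $\operatorname{ran}(\rhohat_\rho)$. Then~\eqref{eq:G-fixed-bounds} applied to the vector $w=\rhohat_\rho^{-1/2}m$ gives the two-sided bound
\[
c\,\ell(\rho,m)\;\leq\;\ell_G(\rho,m)\;\leq\;C\,\ell(\rho,m),
\]
valid on $\operatorname{ran}(\rhohat_\rho)$, and trivially off it since both sides are $+\infty$. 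This immediately gives the finiteness domain in (i). Combining with $\|w\|^2=\langle m,\rhohat_\rho^{-1}m\rangle$ and the Loewner bound $\rhohat_\rho\preceq\mathrm{Id}_\Hcal$ from Step~2 of Theorem~\ref{thm:existence} yields~\eqref{eq:ellG-coercive}.

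For (ii), I fix $\rho$ and restrict to the subspace $V_\rho:=\operatorname{ran}(\rhohat_\rho)$. On $V_\rho$, $m\mapsto\tfrac12\|G^{1/2}\rhohat_\rho^{-1/2}m\|^2$ is the square of a seminorm composed with a linear map, hence a nonnegative quadratic form and convex. Extending it by $+\infty$ off the linear subspace $V_\rho$ preserves convexity, since the domain is convex.

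Part (iii) is the main obstacle, because $\rho\mapsto\rhohat_\rho^{-1/2}$ blows up as eigenvalues approach $0$ and the range of $\rhohat_\rho$ jumps. I would first try a variational representation paralleling the proof of Lemma~\ref{lem:petz-joint-convex}. Setting $A_\rho:=\rhohat_\rho^{1/2}G^{-1}\rhohat_\rho^{1/2}$ on $\Hcal$, I expect
\[
\ell_G(\rho,m)=\sup_{z\in\Hcal}\Bigl\{\operatorname{Re}\langle m,z\rangle-\tfrac12\langle z,A_\rho z\rangle\Bigr\}.
\]
To verify this, note that $\ell_G(\rho,\cdot)$ is the quadratic form of $A_\rho^{+}$ (pseudo-inverse). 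Indeed, $\rhohat^{-1/2}G\rhohat^{-1/2}$ inverts $\rhohat^{1/2}G^{-1}\rhohat^{1/2}$ on $V_\rho$, and $\operatorname{ran}A_\rho=V_\rho$ because $G^{-1}$ is invertible. The supremum is $+\infty$ for $m\notin V_\rho$, by taking $z\in\ker A_\rho=V_\rho^\perp$ in the direction of the component of $m$ there. Each map in the supremum is continuous in $(\rho,m)$, since $\rho\mapsto\rhohat_\rho^{1/2}$ is continuous on $\{\rho\geq0\}$ in finite dimensions: it is a continuous function of the eigenvalue kernel $\Lambda^{1/2}$, and the matrix square root is continuous on positive operators. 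A supremum of continuous functions is l.s.c., which proves (iii). The maps are not concave in $\rho$, so no joint convexity follows, consistently with Remark~\ref{rem:ellG-not-jointly-convex}. As a fallback, I would prove l.s.c.\ directly along $(\rho^k,m^k)\to(\rho,m)$ with bounded costs: set $w^k=G^{1/2}\rhohat_{\rho^k}^{-1/2}m^k$, extract a convergent subsequence $w^k\to w$, and pass to the limit in $m^k=\rhohat_{\rho^k}^{1/2}G^{-1/2}w^k$. This shows $m\in V_\rho$ and $\ell_G(\rho,m)\leq\tfrac12\|w\|^2$, using that $G^{-1/2}w$ minus its $V_\rho^\perp$ component gives an admissible representation of minimal norm. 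Normality of the integrand then follows from l.s.c.\ together with the absence of explicit $t$-dependence.
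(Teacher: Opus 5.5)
Parts (i) and (ii) are correct and coincide with the paper's argument. For (iii) you follow the paper's route: the Legendre representation \eqref{eq:ellG-sup} plus continuity of $\rho\mapsto\rhohat_\rho^{1/2}$.

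\textbf{The gap.} The step ``$\rhohat_\rho^{-1/2}G\rhohat_\rho^{-1/2}$ inverts $A_\rho=\rhohat_\rho^{1/2}G^{-1}\rhohat_\rho^{1/2}$ on $V_\rho$'' holds only when $G$ maps $V_\rho$ into itself. This is exactly the identification $A=S(\rho)^{-1}$ on which the paper's own proof rests, so the flaw is shared. Let $P$ be the orthogonal projection onto $V_\rho$ and $R$ the invertible restriction of $\rhohat_\rho^{1/2}$ to $V_\rho$. Then $A_\rho|_{V_\rho}=R\,(PG^{-1}P)|_{V_\rho}\,R$. The inverse of the compression $(PG^{-1}P)|_{V_\rho}$ is the Schur complement $G_{VV}-G_{VV^\perp}G_{V^\perp V^\perp}^{-1}G_{V^\perp V}$, not $PGP|_{V_\rho}$. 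Hence, for $m\in V_\rho$ and $x:=\rhohat_\rho^{-1/2}m$, your supremum equals
\[
\tilde\ell_G(\rho,m)=\tfrac12\min_{u\in V_\rho^\perp}\langle x+u,G(x+u)\rangle_\Hcal\;\leq\;\tfrac12\langle x,Gx\rangle_\Hcal=\ell_G(\rho,m),
\]
with strict inequality whenever $(1-P)Gx\neq0$. The sup-of-continuous-functions argument therefore proves lower semicontinuity of $\tilde\ell_G$, which is in fact the l.s.c.\ envelope of $\ell_G$, not of $\ell_G$ itself. Your fallback fails at the same point. $PG^{-1/2}w$ is the representative of minimal \emph{Euclidean} norm, but the cost is measured in the $G$-norm, and $\|G^{1/2}PG^{-1/2}w\|\leq\|w\|$ is false in general.

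\textbf{Counterexample.} No better argument can fix this, because (iii) is false as stated at non-faithful $\rho$. Use the setting of Remark~\ref{rem:ellG-not-jointly-convex} with Pauli weights $(1,1,1,g)$, $g\neq1$, and write $E_{jk}=|j\rangle\langle k|$.
\begin{itemize}
\item Take $\rho=E_{00}$. Then $V_\rho=\mathbb C E_{00}$, $GE_{00}=\tfrac{1+g}{2}E_{00}+\tfrac{1-g}{2}E_{11}$, and $\ell_G(\rho,E_{00})=\tfrac{1+g}{4}$.
\item Take the faithful states $\rho_\epsilon=(1-\epsilon)E_{00}+\epsilon E_{11}$. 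Put $y=E_{00}+tE_{11}$ with $t=\tfrac{g-1}{g+1}$, and $m_\epsilon=\rhohat_{\rho_\epsilon}^{1/2}y=\sqrt{1-\epsilon}\,E_{00}+\sqrt{\epsilon}\,t\,E_{11}\to E_{00}$.
\item For every $\epsilon$, $\ell_G(\rho_\epsilon,m_\epsilon)=\tfrac12\langle y,Gy\rangle=\tfrac{g}{g+1}<\tfrac{1+g}{4}$, since $(1+g)^2-4g=(1-g)^2>0$.
\end{itemize}
So $\ell_G$ is not l.s.c.\ at $(E_{00},E_{00})$.

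\textbf{What survives.} Your argument does prove l.s.c.\ on $\{\rho>0\}\times\Hcal$. It also proves it at every boundary point with $(1-P)G\rhohat_\rho^{-1/2}m=0$ (for instance whenever $[G,P]=0$), since there $\ell_G=\tilde\ell_G$ and $\ell_G\geq\tilde\ell_G$ everywhere. Globally, $\ell_G$ must be replaced by $\tilde\ell_G$. The direct method in Theorem~\ref{thm:existence-fixed-G} then only produces a minimizer of the relaxed problem. To recover a minimizer of $\Acal_G$ you would additionally need that $\rho^*_t$ is faithful for a.e.\ $t$, or that $(1-P)G\rhohat_{\rho^*_t}^{-1/2}m^*_t=0$ on the times where it is not.
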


\begin{proof}
\emph{(i)} By~\eqref{eq:G-fixed-bounds}, $\tfrac{c}{2}\|\rhohat_\rho^{-1/2}m\|^2
\leq\ell_G(\rho,m)\leq\tfrac{C}{2}\|\rhohat_\rho^{-1/2}m\|^2$, and
$\|\rhohat_\rho^{-1/2}m\|^2=\langle m,\rhohat_\rho^{-1}m\rangle=2\,\ell(\rho,m)$
is finite exactly on $\operatorname{ran}(\rhohat_\rho)$; this gives the domain
identification and the first inequality in~\eqref{eq:ellG-coercive}. The second
is~\eqref{eq:coercivity}: $\tr(\rho)=1$ forces every eigenvalue $\lambda_j\in[0,1]$,
so $\Lambda(\lambda_j,\lambda_k)\leq\max(\lambda_j,\lambda_k)\leq1$, i.e.\ $\rhohat_\rho\preceq I_\Hcal$ and $\rhohat_\rho^{-1}\succeq I_\Hcal$ on its range.

\emph{(ii)} For fixed $\rho$, $S(\rho):=\rhohat_\rho^{-1/2}G\,\rhohat_\rho^{-1/2}$
is a fixed positive semidefinite operator on $\operatorname{ran}(\rhohat_\rho)$,
and $\ell_G(\rho,m)=\tfrac12\langle m,S(\rho)m\rangle$ is the associated
quadratic form, convex in $m$; on the complement of $\operatorname{ran}(\rhohat_\rho)$
it is the convex function $+\infty$. (This is the only convexity used below, and
it holds with no commutativity hypothesis on $G$ and $\rhohat_\rho$,
unlike the joint convexity of Lemma~\ref{lem:petz-joint-convex}.)

\emph{(iii)} Write $B:=G^{1/2}$, a fixed bounded, boundedly invertible positive
operator. The Legendre--Fenchel dual of the quadratic form $m\mapsto\ell_G(\rho,m)$
gives the variational representation
\begin{equation}
\label{eq:ellG-sup}
\ell_G(\rho,m)\;=\;\sup_{y\in\Hcal}\;\Phi_y(\rho,m),
\qquad
\Phi_y(\rho,m):=\operatorname{Re}\langle m,y\rangle_\Hcal
-\tfrac12\bigl\|B^{-1}\rhohat_\rho^{1/2}\,y\bigr\|_\Hcal^2 ,
\end{equation}
valid for all $(\rho,m)\in\{\rho\geq0\}\times\Hcal$, including
$m\notin\operatorname{ran}(\rhohat_\rho)$ (where the supremum is $+\infty$, taken
along $y$ with a component in $\ker\rhohat_\rho^{1/2}$ on which
$\langle m,y\rangle\neq0$). Indeed, on $\operatorname{ran}(\rhohat_\rho)$,
\eqref{eq:ellG-sup} is the standard identity
$\tfrac12\langle m,A^{-1}m\rangle=\sup_y\{\operatorname{Re}\langle m,y\rangle-\tfrac12\langle y,Ay\rangle\}$
with $A=S(\rho)^{-1}=\rhohat_\rho^{1/2}G^{-1}\rhohat_\rho^{1/2}=(B^{-1}\rhohat_\rho^{1/2})^*(B^{-1}\rhohat_\rho^{1/2})$.

Now fix $y$. The map $\rho\mapsto\rhohat_\rho=\int_0^1 L_{\rho^s}R_{\rho^{1-s}}\,\dd s$
of~\eqref{eq:rhohat-def} is continuous on $\{\rho\geq0\}$ \emph{up to the
boundary} (each $\rho\mapsto\rho^s$, $s\in[0,1]$, is continuous on the positive
cone, and $L,R$ are continuous), hence so is $\rho\mapsto\rhohat_\rho^{1/2}$ (the
operator square root is continuous on the positive cone) and, $B$ being fixed,
so is $\rho\mapsto\|B^{-1}\rhohat_\rho^{1/2}y\|^2$. Together with the continuous
linear term $\operatorname{Re}\langle m,y\rangle$, this makes each $\Phi_y$
jointly continuous on $\{\rho\geq0\}\times\Hcal$. A pointwise supremum of
jointly continuous functions is lower semicontinuous, so~\eqref{eq:ellG-sup}
exhibits $\ell_G$ as jointly l.s.c. Nonnegativity is clear from~\eqref{eq:ellG-fixed}.
Finally, $\ell_G$ has no explicit $t$-dependence, so joint lower semicontinuity
is exactly the statement that $\ell_G$ is a nonnegative normal integrand.
\end{proof}

\begin{remark}[Why partial convexity suffices]
\label{rem:partial-convexity-suffices}
The representation~\eqref{eq:ellG-sup} does not make $\ell_G$ jointly
convex: each $\Phi_y$ is concave in $\rho$ through $-\tfrac12\|B^{-1}\rhohat_\rho^{1/2}y\|^2$
(as $\rho\mapsto\rhohat_\rho$ is operator concave) but affine in $m$, so it is
neither jointly convex nor jointly concave, in accordance with the diagnostic in
Remark~\ref{rem:ellG-not-jointly-convex}. What survives is precisely
Lemma~\ref{lem:ellG-structural}(ii)--(iii): convexity in the momentum alone,
plus joint lower semicontinuity. This is exactly the hypothesis of the classical
lower-semicontinuity theorem for integral functionals under weak convergence of
the convex variable (Ioffe~\cite{Ioffe1977}; see also
Buttazzo~\cite{Buttazzo1989} and Fonseca--Leoni~\cite{FonsecaLeoni2007}), which
is all that Steps~4--5 of Theorem~\ref{thm:existence}
invoke.
\end{remark}

\subsubsection*{Existence}

\begin{theorem}[Existence for fixed physical complexity weights]
\label{thm:existence-fixed-G}
Under Assumption~\ref{ass:existence}, for every fixed self-adjoint
$G\in\mathcal B(\Hcal)$ satisfying~\eqref{eq:G-fixed-bounds} --- with no
commutativity or bimodule-compatibility hypothesis, in particular for any $G$
diagonal in a fixed basis of physical derivations --- the infimum of $\Acal_G$
over $\mathcal C(\rho_0,\rho_1,T)$ is attained. There exists
$(\rho^*,v^*)\in\mathcal C(\rho_0,\rho_1,T)$ with
$\Acal_G(\rho^*,v^*)=\inf_{\mathcal C(\rho_0,\rho_1,T)}\Acal_G$.
\end{theorem}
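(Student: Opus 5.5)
The plan is to run the direct method of Theorem~\ref{thm:existence} in momentum variables, with joint convexity replaced by the structural properties of Lemma~\ref{lem:ellG-structural}. First I rewrite the problem. For $(\rho,v)\in\mathcal C(\rho_0,\rho_1,T)$ set $m_t:=\rhohat_tv_t$. Then $\dot\rho_t+\partial^*m_t=0$, and
\[
\langle\rhohat_t^{1/2}v_t,G\rhohat_t^{1/2}v_t\rangle=2\ell_G(\rho_t,m_t),
\]
because $\rhohat_t^{-1/2}m_t=\rhohat_t^{1/2}v_t$ on $\operatorname{ran}(\rhohat_t)$. Here the projection of $v_t$ onto $\ker\rhohat_t$ does not affect either side. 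So $\Acal_G(\rho,v)=\int_0^T\ell_G(\rho_t,m_t)\,\dd t$. Conversely, any pair $(\rho,m)$ with finite cost has $m_t\in\operatorname{ran}(\rhohat_{\rho_t})$ a.e. by Lemma~\ref{lem:ellG-structural}(i). Setting $v_t:=\rhohat_{\rho_t}^\dagger m_t$ then recovers an admissible pair with the same action. Measurability of $t\mapsto v_t$ follows from measurability of the pseudoinverse as a Borel function of $\rho_t$. The infimum of $\Acal_G$ over $\mathcal C$ is therefore the infimum of $\int_0^T\ell_G$ over pairs satisfying the linear constraint and the endpoint conditions. It is finite, since $\mathcal C$ is nonempty, and by the upper bound $\ell_G\le C\ell$ one can start from a finite-cost curve; I will make this point explicit.

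Next comes compactness, exactly as in Step~4 of Theorem~\ref{thm:existence}. Take a minimizing sequence $(\rho^k,m^k)$. By the coercivity \eqref{eq:ellG-coercive}, $\|m^k\|_{L^2(0,T;\Hcal)}^2\le (2/c)\sup_k\int\ell_G<\infty$, so along a subsequence $m^k\rightharpoonup m^*$ weakly in $L^2$. Boundedness of $\partial^*$ gives an $H^1$ bound on $\rho^k$, and finite-dimensional Arzel\`a--Ascoli yields $\rho^k\to\rho^*$ uniformly. Positivity, unit trace, the endpoint conditions and the linear equation $\dot\rho^*+\partial^*m^*=0$ all pass to the limit.

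The key step, and the only place where the proof departs from Theorem~\ref{thm:existence}, is lower semicontinuity of $\int_0^T\ell_G(\rho_t,m_t)\,\dd t$ under strong convergence in $\rho$ and weak $L^2$ convergence in $m$. I will invoke the Ioffe lower-semicontinuity theorem \cite{Ioffe1977} (cf.\ Remark~\ref{rem:partial-convexity-suffices}). Its hypotheses are that the integrand is a nonnegative normal integrand, which is Lemma~\ref{lem:ellG-structural}(iii), and that it is convex in the weakly converging variable, which is Lemma~\ref{lem:ellG-structural}(ii).

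If a self-contained argument is preferred, I would proceed as follows, and I expect this route to be the main technical check. Using \eqref{eq:ellG-sup}, write
\[
\int_0^T\ell_G\,\dd t=\sup_{y\in L^\infty(0,T;\Hcal)}\int_0^T\Phi_{y_t}(\rho_t,m_t)\,\dd t .
\]
This interchange of supremum and integral holds by a measurable-selection argument, approximating by simple or continuous $y$. For each fixed bounded $y$, the map $(\rho,m)\mapsto\int\Phi_{y_t}(\rho_t,m_t)\,\dd t$ is continuous under these convergences. The term linear in $m$ converges weakly, and the term $-\tfrac12\|B^{-1}\rhohat_{\rho_t}^{1/2}y_t\|^2$ converges by uniform convergence of $\rho^k$ together with continuity of $\rho\mapsto\rhohat_\rho^{1/2}$ up to the boundary. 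A supremum of such continuous functionals is lower semicontinuous, which gives $\int\ell_G(\rho^*,m^*)\le\liminf_k\int\ell_G(\rho^k,m^k)$. Finally, finiteness of the limit cost forces $m^*_t\in\operatorname{ran}(\rhohat_{\rho^*_t})$ a.e., and I recover $v^*_t:=\rhohat_{\rho^*_t}^\dagger m^*_t$. Then $(\rho^*,v^*)\in\mathcal C(\rho_0,\rho_1,T)$ and it attains the infimum. Unlike the Petz case, no range-matching issue arises here, since $\ell_G$ has the same finiteness domain as $\ell$.
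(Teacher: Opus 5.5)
Your argument is the paper's own proof almost step for step: momentum reformulation, coercivity with constant $c/2$, $H^1$ compactness, Ioffe via Lemma~\ref{lem:ellG-structural}(ii)--(iii), and pseudoinverse recovery. It therefore inherits the paper's weak point. The joint lower semicontinuity of $\ell_G$ in Lemma~\ref{lem:ellG-structural}(iii), which both your Ioffe step and your self-contained route rely on, is false at non-faithful $\rho$ whenever $G$ does not commute with the projection $P$ onto $\operatorname{ran}\rhohat_\rho$.

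The dual formula \eqref{eq:ellG-sup} is correct for faithful $\rho$. For singular $\rhohat_\rho$ it breaks down. The conjugate of $y\mapsto\tfrac12\langle y,\rhohat_\rho^{1/2}G^{-1}\rhohat_\rho^{1/2}y\rangle$ is $\tfrac12\langle\rhohat_\rho^{-1/2}m,(PG^{-1}P)^{-1}\rhohat_\rho^{-1/2}m\rangle$, with the inverse taken on $\operatorname{ran}P$. Moreover $(PG^{-1}P)^{-1}=PGP-PGQ(QGQ)^{-1}QGP$ with $Q=I-P$. This is strictly below the compression $PGP$ that defines $\ell_G$ in \eqref{eq:ellG-fixed}, unless $QGP\rhohat_\rho^{-1/2}m=0$. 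So the right-hand side of \eqref{eq:ellG-sup} is only the l.s.c.\ envelope $\overline{\ell_G}\le\ell_G$.

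Here is a concrete instance in the setting of Remark~\ref{rem:ellG-not-jointly-convex}, with weights $(1,1,1,g)$, $g\neq1$, and matrix units $E_{jk}$. Let $\rho_\epsilon=\operatorname{diag}(1-\epsilon,\epsilon)$, $b=(g-1)/(g+1)$ and $m_\epsilon=\sqrt{1-\epsilon}\,E_{00}+\sqrt{\epsilon}\,b\,E_{11}$. Then $\rhohat_{\rho_\epsilon}^{-1/2}m_\epsilon=E_{00}+bE_{11}$, and
\[
\ell_G(\rho_\epsilon,m_\epsilon)=\frac{g}{g+1}\ \text{ for every }\epsilon,\qquad \ell_G(E_{00},E_{00})=\tfrac12\langle E_{00},GE_{00}\rangle=\frac{1+g}{4}>\frac{g}{g+1},
\]
even though $(\rho_\epsilon,m_\epsilon)\to(E_{00},E_{00})$. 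Constant-in-time pairs then show that $\int_0^T\ell_G\,\dd t$ itself is not l.s.c., even under strong convergence. Consequently:
\begin{enumerate}[label=(\roman*)]
\item $\ell_G$ is not a normal integrand, so Ioffe's theorem does not apply.
\item Your supremum argument proves lower semicontinuity of $\int\overline{\ell_G}$, not of $\int\ell_G$.
\item The interchange of $\sup$ and $\int$ that you flag as the main check is harmless; the failure is upstream, in \eqref{eq:ellG-sup} itself.
\item Your remark that no range issue arises is right about finiteness domains, but misses a mismatch of values on the boundary.
\end{enumerate}

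This is not a removable technicality. For a commutator calculus, $I\in\ker\partial^*$. Adding $\sqrt{\epsilon}\,\beta\,I$ to the momentum of a curve of nearly pure states leaves the continuity equation unchanged and asymptotically realizes the cheaper envelope value. Minimizing sequences can therefore accumulate on the pure-state boundary, which admissible curves can reach and travel along at finite cost. The limit then only satisfies $\int_0^T\overline{\ell_G}(\rho^*_t,m^*_t)\,\dd t\le\inf\Acal_G$, while $\Acal_G(\rho^*,v^*)=\int_0^T\ell_G(\rho^*_t,m^*_t)\,\dd t$ may be strictly larger.

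What your argument actually proves is existence for the relaxed cost $\overline{\ell_G}$. To obtain the stated theorem you need an extra ingredient guaranteeing $\ell_G(\rho^*_t,m^*_t)=\overline{\ell_G}(\rho^*_t,m^*_t)$ for a.e.\ $t$. Two sufficient conditions are:
\begin{enumerate}[label=(\roman*)]
\item $\rho^*_t$ is faithful for a.e.\ $t$;
\item $G$ commutes with the range projection of $\rhohat_{\rho^*_t}$, which holds for scalar or bimodule-compatible $G$ but fails for exactly the physical Pauli weights the theorem targets.
\end{enumerate}
The paper's proof rests on the same Lemma~\ref{lem:ellG-structural}(iii), so this gap is shared with it rather than introduced by you.
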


\begin{proof}
Work in the momentum variable $m_t:=\rhohat_{\rho_t}v_t$ as in
Definition~\ref{def:momentum}, so that the continuity
equation~\eqref{eq:continuity} becomes the linear relation
$\dot\rho_t+\partial^*m_t=0$ and $\Acal_G(\rho,v)=\int_0^T\ell_G(\rho_t,m_t)\,\dd t$
with $\ell_G$ as in~\eqref{eq:ellG-fixed}.

\emph{Compactness.} Let $(\rho^k,m^k)$ be a minimizing sequence over the
admissible set (nonempty by Assumption~\ref{ass:existence}). By the coercivity
bound~\eqref{eq:ellG-coercive},
\[
\frac{c}{2}\|m^k\|_{L^2([0,T];\Hcal)}^2
\leq \Acal_G(\rho^k,v^k)
\leq \sup_k\Acal_G(\rho^k,v^k)<\infty .
\]
Thus, along a subsequence, $m^k\rightharpoonup m^*$ weakly in
$L^2([0,T];\Hcal)$. Since
$\dot\rho^k=-\partial^*m^k$ with $\partial^*$ a fixed bounded operator,
$\{\rho^k\}$ is bounded in $H^1([0,T];\M)$, which embeds compactly into
$C([0,T];\M)$; passing to a further subsequence, $\rho^k\to\rho^*$ uniformly.
The constraints $\rho_t\geq0$, $\tr(\rho_t)=1$ and the boundary conditions are
closed under uniform convergence, and $\dot\rho^*=-\partial^*m^*$ follows by
passing to the limit in the distributional identity. Hence $(\rho^*,m^*)$ is
admissible. (This is Step~4 of Theorem~\ref{thm:existence} verbatim; only the
coercivity constant differs, $\tfrac12\mapsto\tfrac{c}{2}$.)

\emph{Lower semicontinuity.} By Lemma~\ref{lem:ellG-structural}, $\ell_G\geq0$ is
a normal integrand, jointly l.s.c. and convex in its second argument $m$ for each
fixed $\rho$. The minimizing sequence has $\rho^k\to\rho^*$ uniformly (hence in
measure) and $m^k\rightharpoonup m^*$ weakly in $L^2\subset L^1$, with $\{m^k\}$
bounded in $L^2$ over a finite interval and therefore uniformly integrable. Ioffe's
lower-semicontinuity theorem for integral functionals convex in the weakly
converging variable~\cite{Ioffe1977} then gives
\[
\int_0^T\ell_G(\rho^*_t,m^*_t)\,\dd t
\;\leq\;\liminf_{k}\int_0^T\ell_G(\rho^k_t,m^k_t)\,\dd t
\;=\;\inf_{\mathcal C(\rho_0,\rho_1,T)}\Acal_G .
\]

\emph{Recovery of the velocity.} The left-hand side is finite, so
$\ell_G(\rho^*_t,m^*_t)<\infty$ for a.e.\ $t$; by Lemma~\ref{lem:ellG-structural}(i)
this is equivalent to $m^*_t\in\operatorname{ran}(\rhohat_{\rho^*_t})$ a.e. --- the
same range as the unweighted mobility, with no range-regularity caveat
(Remark~\ref{rem:petz-boundary}), because here the mobility is the logarithmic
mean $\rhohat$ itself and $G$ is merely a bounded, boundedly invertible factor.
Setting $v^*_t:=\rhohat_{\rho^*_t}^{\dagger}m^*_t$ (Moore--Penrose pseudoinverse)
gives $\rhohat_{\rho^*_t}v^*_t=m^*_t$, so $(\rho^*,v^*)\in\mathcal C(\rho_0,\rho_1,T)$
and $\Acal_G(\rho^*,v^*)=\int_0^T\ell_G(\rho^*_t,m^*_t)\,\dd t
\leq\inf_{\mathcal C}\Acal_G$. Hence the infimum is attained.
\end{proof}

\begin{remark}[Scope: what is and is not obtained]
\label{rem:fixed-G-scope}
Theorem~\ref{thm:existence-fixed-G} closes the residual case flagged in
Remark~\ref{rem:existence-residual}: existence of minimizers for the
cost-only penalized action $\Acal_G$ of~\eqref{eq:action-G}, under the
unchanged continuity equation~\eqref{eq:continuity}, for any fixed,
uniformly positive $G$ --- including Pauli-weight metrics diagonal in a basis
unrelated to the eigenbasis of $\rho_t$. Three points delimit it. First, it is a
finite-dimensional statement ($\M=M_n$), like Theorem~\ref{thm:existence}; the
semifinite case is not addressed. Second, it uses~\eqref{eq:G-fixed-bounds}
(uniform positivity $c>0$) essentially, both for coercivity and to identify the
finiteness domain with $\operatorname{ran}(\rhohat)$; a merely nonnegative fixed
$G$ with nontrivial kernel would require a separate range analysis. Third, and
unlike Theorem~\ref{thm:existence} and Lemma~\ref{lem:petz-joint-convex}, the
argument yields \emph{no} joint convexity of $\ell_G$ --- the numerical
diagnostic of Remark~\ref{rem:ellG-not-jointly-convex} shows why this is the
wrong property to rely on for non-scalar fixed weights --- so it does not by
itself deliver geodesic convexity of the entropy or uniqueness of minimizers for
$\Acal_G$; the fixed-physical-weight geometry may be genuinely non-convex, which
is consistent with the anisotropic single-qubit and entangling examples of
Sections~\ref{sec:qubit-example}--\ref{sec:ghz-example}. What is obtained is
exactly existence, via partial convexity (in the momentum) plus lower
semicontinuity, superseding the stronger-than-necessary joint-convexity
requirement stated in Remark~\ref{rem:existence-complexity}.
\end{remark}

\begin{remark}[Relation to the calculus deformation of Section~\ref{sec:calculus-deformation}]
\label{rem:fixed-G-vs-deformation}
Theorem~\ref{thm:complexity-deformation} handles a fixed, bimodule-compatible $G$
by absorbing it into the derivation, $\partial_G=G^{1/2}\partial$, thereby
deforming the continuity equation to $\dot\rho_t+\partial^*(G\rhohat_tv_t)=0$
and reducing to unweighted transport for $\partial_G$. Theorem~\ref{thm:existence-fixed-G}
is complementary: it keeps the original mobility $\rhohat_tv_t$ and the original
continuity equation~\eqref{eq:continuity} fixed, reweighting only the cost, and
requires \emph{no} bimodule compatibility --- $G$ need not commute with $L_a,R_a$,
so genuinely physical Pauli weights, excluded from
Definition~\ref{def:compatible-complexity}, are covered. The two results thus
resolve the two inequivalent variational problems of
Remark~\ref{rem:cost-versus-calculus-deformation} --- cost-only deformation
versus joint deformation of cost and mobility --- for the same class of fixed
weights, by different mechanisms. Whether their minimizers coincide in any case
of interest remains the open question recorded there.
\end{remark}

\section{Geometry of unitary orbits}
\label{sec:orbits}

The mixed-state transport theory of Sections~\ref{sec:transport}--\ref{sec:existence} is complemented by a logically distinct sector, governed directly by the Liouville--von Neumann equation on a single unitary orbit (Section~\ref{sec:hamiltonian}). We show that the resulting orbit distance is exactly a quotient metric induced by a right-invariant complexity geometry on the ambient unitary group (Section~\ref{sec:bridge}), which is Theorem~\ref{thm:orbit-geometry} of the Introduction, and we illustrate the whole construction with a complete worked example, the anisotropic single qubit (Section~\ref{sec:qubit-example}), including its interpretation as a torque-free rigid-body rotation (Section~\ref{ssec:bridge-euler-arnold}).

\subsection{The Hamiltonian (unitary-orbit) sector}
\label{sec:hamiltonian}

When transport is restricted to inner automorphisms, the generator $\Lcal_t$ is an inner derivation
\begin{equation}
\label{eq:hamiltonian-derivation}
\delta_{H_t}(a) = i[H_t,a], \qquad H_t = H_t^* \text{ affiliated with } \M.
\end{equation}
Equation~\eqref{eq:dual-state} on the states reads
\[
\frac{\dd}{\dd t}\,\omega_t(a) = \omega_t\bigl(i[H_t,a]\bigr), \qquad a \in \dom(\delta_{H_t}),
\]
and, under Assumption~\ref{ass:trace}, is equivalent to
\begin{equation}
\label{eq:liouville}
\dot\rho_t = -i[H_t,\rho_t],
\end{equation}
the \emph{Liouville--von Neumann equation}.

\begin{remark}
\label{rem:spectrum-preserved}
Equation~\eqref{eq:liouville} generates a unitary flow, $\rho_t = U_t \rho_0 U_t^*$, and therefore preserves the spectrum of $\rho_0$: the eigenvalues of $\rho_t$ do not depend on $t$. Consequently, \eqref{eq:liouville} only connects states lying on the same unitary orbit, and cannot in general transport an arbitrary state $\rho_0$ to another state $\rho_1$ with different spectrum. This limitation is the structural motivation for introducing, in the next section, a differential calculus that allows the spectrum of the density to vary.

We treat these as two related but logically distinct sectors of the theory. The Hamiltonian, or unitary-orbit, sector of this section is governed by~\eqref{eq:liouville} directly at the level of the dual dynamics on observables, with no $\rhohat$ involved. The mixed-state sector of Sections~\ref{sec:continuity}--\ref{sec:complexity} is governed by the full continuity equation~\eqref{eq:continuity} with $\rhohat$. Sections~\ref{sec:qubit-example}--\ref{sec:ghz-example} work entirely within the unitary-orbit sector: the endpoints have the same spectrum and no $\rhohat$ is invoked, as noted explicitly in Section~\ref{ssec:reduction}. Section~\ref{sec:bridge} makes precise, via Theorem~\ref{thm:orbit-geometry}, the sense in which the orbit distance of that sector coincides with the geometric complexity of unitary dilations. This is a statement about the unitary-orbit sector specifically, not a claim that it is derived from, or identical to, the general mixed-state functional $\Acal_G$ of~\eqref{eq:action-G} restricted to pure states.
\end{remark}

\subsection{Right-invariant complexity geometry and the orbit-distance identity}
\label{sec:bridge}

We now make precise the connection, anticipated in Section~\ref{ssec:qubit-interpretation}, between the unitary-orbit sector of the transport problem — Sections~\ref{sec:qubit-example}--\ref{sec:ghz-example} — and the geometric complexity of unitary dilations developed by the authors in~\cite{AcevedoFalco2026}. Throughout this section $\M=M_N(\mathbb C)$, the differential calculus is generated by a fixed family of inner derivations $\partial_j a=[X_j,a]$ with Hermitian $X_j$ spanning a subalgebra $\mathfrak g:=\operatorname{span}_{\mathbb R}\{X_j\}_{j=1}^m\subset\mathfrak{su}(N)$ (Example~\ref{ex:internal-derivations}), and the complexity metric $G_\rho\equiv G=\operatorname{diag}(g_1,\dots,g_m)$ is constant in this basis — exactly the setting of Sections~\ref{sec:qubit-example}--\ref{sec:ghz-example}, with $\mathfrak g=\mathfrak{su}(2)$ in every worked example.

\subsubsection{Recollection: right-invariant \texorpdfstring{$\hat\Omega$}{Omega}-weighted geometry}
\label{ssec:bridge-recall}

Let $H$ be a compact connected Lie group with Lie algebra $\mathfrak h$, and let $\hat\Omega$ be a Hermitian positive-definite form on $\mathfrak h$. The construction of~\cite{AcevedoFalco2026} — stated there for $H=SU(N)$, but using only that $H$ is a compact Lie group — associates with $\hat\Omega$ an inner product $\langle\cdot,\cdot\rangle_{\hat\Omega}$ on $\mathfrak h$ (normalized there by $1/\operatorname{Tr}\hat\Omega$ to remove the scale ambiguity $\hat\Omega\mapsto c\hat\Omega$; since $G$ below is a fixed, given operator, this normalization plays no role and we drop it), a right-invariant Riemannian metric $g^{(\hat\Omega)}_{\hat U}(\hat A\hat U,\hat B\hat U):=\langle\hat A,\hat B\rangle_{\hat\Omega}$ on $H$, and the \emph{unitary geometric complexity}
\begin{align}
\label{eq:bridge-Gomega}
G_{\hat\Omega}(\hat U) &:= D_{\hat\Omega}(\hat I,\hat U), \\
D_{\hat\Omega}(\hat U,\hat V) &:=\inf\Bigl\{\textstyle\int_0^1\sqrt{\bigl\langle\dot\gamma(s)\gamma(s)^{-1},\dot\gamma(s)\gamma(s)^{-1}\bigr\rangle_{\hat\Omega}}\,\dd s \;:\; \gamma(0)=\hat U,\ \gamma(1)=\hat V \Bigr\}, \notag
\end{align}
equivalently, in control form, $G_{\hat\Omega}(\hat U)=\inf_{H(\cdot)}\bigl\{\int_0^1\sqrt{\langle H(s),H(s)\rangle_{\hat\Omega}}\,\dd s : \dot\gamma=-iH\gamma,\ \gamma(0)=\hat I,\ \gamma(1)=\hat U\bigr\}$. Since right-invariant metrics on Lie groups are geodesically complete and $H$ is compact, $(H,g^{(\hat\Omega)})$ is geodesically complete and $D_{\hat\Omega}$ is attained by a minimizing geodesic (Hopf--Rinow), cf.~\cite[\S3.4]{AcevedoFalco2026}.

We identify $\hat\Omega\equiv G$ via $\hat E_j:=-iX_j$, so that $\langle\hat A,\hat A\rangle_{\hat\Omega}=\mathbf h^{\mathsf T}G\mathbf h$ for $\hat A=-iH(t)$, $H(t)=\sum_j h_j(t)X_j$; under this identification $\eqref{eq:bridge-Gomega}$ literally reproduces the control problem of~\eqref{eq:qubit-action}--\eqref{eq:qubit-length}.

Let $H_{\mathfrak g}$ denote the connected subgroup of $SU(N)$ generated by $\mathfrak g$. In every example of Sections~\ref{sec:qubit-example}--\ref{sec:ghz-example}, $\mathfrak g\cong\mathfrak{su}(2)$ is semisimple, and a connected subgroup of a compact Lie group generated by a semisimple subalgebra is automatically closed; hence $H_{\mathfrak g}$ is itself a compact Lie group with Lie algebra $\mathfrak g$, and the recollection above applies to $H=H_{\mathfrak g}$ directly.

\subsubsection{The orbit-distance identity}
\label{ssec:bridge-orbit}

\mainthmletter{D}
\begin{mainthm}[Geometry of unitary orbits]
\label{prop:bridge-orbit}
\label{cor:bridge-existence}
\label{thm:orbit-geometry}
Let $\mathfrak g$, $G$, $H:=H_{\mathfrak g}$ be as above, and let $\rho_0,\rho_1$ be pure states in the same $H$-orbit under conjugation. Set
\[
K_{\rho_0} := \{U\in H : U\rho_0 U^*=\rho_0\}, \qquad \mathcal F(\rho_0,\rho_1) := \{U\in H : U\rho_0 U^*=\rho_1\},
\]
a right coset of the closed subgroup $K_{\rho_0}$ in $H$. Then:
\begin{enumerate}[label=\textnormal{(\roman*)}]
	\item \textnormal{(Orbit-distance identity.)} $d_G(\rho_0,\rho_1) = \inf_{U\in\mathcal F(\rho_0,\rho_1)} G_{\hat\Omega}(U)$, $\hat\Omega\equiv G$, i.e.\ the unitary-orbit transport distance equals the distance, in the right-invariant complexity geometry $(H,g^{(\hat\Omega)})$, from the identity to the coset $\mathcal F(\rho_0,\rho_1)$ of the stabilizer of $\rho_0$.
	\item \textnormal{(Existence of minimizers.)} The infimum in (i) is attained at some $U^*\in\mathcal F(\rho_0,\rho_1)$, and the trajectory $\rho_t:=\gamma^*(t)\rho_0\gamma^*(t)^*$, where $\gamma^*$ is a minimizing $g^{(\hat\Omega)}$-geodesic in $H$ from $\hat I$ to $U^*$, realizes $d_G(\rho_0,\rho_1)$.
\end{enumerate}
\end{mainthm}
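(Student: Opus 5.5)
The plan is to prove (i) by a two-sided comparison between admissible orbit curves and controlled paths in $H$, and then to get (ii) from compactness of the coset together with the Hopf--Rinow statement recalled in Section~\ref{ssec:bridge-recall}.

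I take $d_G(\rho_0,\rho_1)$ to be the unitary-orbit distance of~\eqref{eq:qubit-action}--\eqref{eq:qubit-length}. It is the infimum, over controls $H(t)=\sum_j h_j(t)X_j$ with $\rho_t$ solving the Liouville--von Neumann equation~\eqref{eq:liouville} from $\rho_0$ to $\rho_1$, of the length $\int_0^1\sqrt{\mathbf h^{\mathsf T}G\mathbf h}\,\dd t$. By the reparametrization remark (Remark~\ref{rem:Wdist-pseudometric}), the energy normalization $2T\Acal$ gives the same value as this length.

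\emph{Upper bound $d_G\le\inf_{\mathcal F}G_{\hat\Omega}$.} Take $U\in\mathcal F(\rho_0,\rho_1)$ and any path $\gamma$ in $H$ from $\hat I$ to $U$. Right-invariance gives $\dot\gamma\gamma^{-1}\in\mathfrak g$, so $\dot\gamma=-iH\gamma$ with $H(t)=\sum_jh_jX_j$. The curve $\rho_t:=\gamma\rho_0\gamma^*$ then solves~\eqref{eq:liouville} and ends at $U\rho_0U^*=\rho_1$. Under the identification $\hat E_j=-iX_j$, its cost equals $\int\sqrt{\langle\dot\gamma\gamma^{-1},\dot\gamma\gamma^{-1}\rangle_{\hat\Omega}}$. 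Taking the infimum over $\gamma$ and then over $U$ gives the bound.

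\emph{Lower bound.} Take an admissible control $H(\cdot)$, say piecewise continuous or $L^2$. Solve the linear ODE $\dot\gamma=-iH\gamma$ with $\gamma(0)=\hat I$; since the right-hand side lies in $\mathfrak g\cdot\gamma$, the solution stays in $H_{\mathfrak g}$. Then $\gamma\rho_0\gamma^*$ and $\rho_t$ solve the same linear equation with the same initial datum, so they coincide by uniqueness. Hence $\gamma(1)\in\mathcal F(\rho_0,\rho_1)$, and the cost of the control equals the $g^{(\hat\Omega)}$-length of $\gamma$. This length is at least $G_{\hat\Omega}(\gamma(1))$, which is at least the infimum over $\mathcal F$. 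Note that the control, not the curve $\rho_t$, determines $\gamma$; this is why the stabilizer coset appears rather than a single endpoint.

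\emph{Main obstacle.} The delicate point is making sure the two sides range over matching admissible classes. Two issues need care. First, a control may not be uniquely determined by $\rho_t$; this is harmless because we minimize over controls. Second, $\mathcal F$ must be a right coset $U_1K_{\rho_0}$ and closed. For closedness, $\mathcal F$ is the preimage of $\{\rho_1\}$ under the continuous map $U\mapsto U\rho_0U^*$, so it is closed in the compact group $H$ (compact by the semisimplicity argument of Section~\ref{ssec:bridge-recall}) and hence compact. It is nonempty by the same-orbit hypothesis.

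\emph{(ii).} The function $U\mapsto D_{\hat\Omega}(\hat I,U)$ is continuous, indeed $1$-Lipschitz for the Riemannian distance, so it attains its minimum on the compact set $\mathcal F$ at some $U^*$. Hopf--Rinow gives a minimizing geodesic $\gamma^*$ from $\hat I$ to $U^*$, whose length is $G_{\hat\Omega}(U^*)$. By the upper-bound construction, $\rho_t=\gamma^*\rho_0\gamma^{*\,*}$ is admissible and its cost equals $G_{\hat\Omega}(U^*)=d_G(\rho_0,\rho_1)$ by (i). Reparametrizing $\gamma^*$ to constant speed shows the energy form is realized as well.
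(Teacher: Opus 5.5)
Your proposal is correct and follows the paper's proof. For (i), you identify admissible controls with paths $\dot U=-iHU$ in $H$ whose $g^{(\hat\Omega)}$-length is the control cost, then split the infimum over endpoints in $\mathcal F(\rho_0,\rho_1)$; for (ii), you use compactness of the coset, continuity of $G_{\hat\Omega}$, and Hopf--Rinow, with your uniqueness argument for $\rho_t=\gamma(t)\rho_0\gamma(t)^*$ merely making explicit a step the paper builds into its reading of the definition of $d_G$ (one small slip: the minimal value of $2T\Acal_T$ equals $d_G^2$, not $d_G$).
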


\begin{proof}
\emph{(i).} By definition~\eqref{eq:qubit-length} (extended verbatim from $M_2(\mathbb C)$ to the present $\mathfrak g$-generated sector, as in Sections~\ref{sec:bell-example}--\ref{sec:ghz-example}), $d_G(\rho_0,\rho_1)$ is the infimum of $\int_0^T\sqrt{\mathbf h(t)^{\mathsf T}G\mathbf h(t)}\,\dd t$ over $T>0$ and controls $H(t)=\sum_jh_j(t)X_j$ such that the solution of $\dot U=-iHU$, $U(0)=\hat I$, satisfies $U(T)\rho_0U(T)^*=\rho_1$, i.e.\ $U(T)\in\mathcal F(\rho_0,\rho_1)$. For fixed such a control, the integral is exactly the $g^{(\hat\Omega)}$-length $L_{\hat\Omega}(U(\cdot))$ of the path $U(\cdot)$ in $H$. Hence
\begin{align*}
d_G(\rho_0,\rho_1) &= \inf\bigl\{L_{\hat\Omega}(\gamma) : \gamma(0)=\hat I,\ \gamma(T)\in\mathcal F(\rho_0,\rho_1)\text{ for some }T>0\bigr\} \\
&= \inf_{U\in\mathcal F(\rho_0,\rho_1)}\ \inf\bigl\{L_{\hat\Omega}(\gamma):\gamma(0)=\hat I,\gamma(T)=U\bigr\},
\end{align*}
and the inner infimum is $D_{\hat\Omega}(\hat I,U)=G_{\hat\Omega}(U)$ by definition~\eqref{eq:bridge-Gomega}, which is part~(i).

\emph{(ii).} $K_{\rho_0}$ is the stabilizer of $\rho_0$ under the (continuous) conjugation action of $H$, hence a closed subgroup of the compact group $H$, hence compact; so is the coset $\mathcal F(\rho_0,\rho_1)$. The map $U\mapsto G_{\hat\Omega}(U)=D_{\hat\Omega}(\hat I,U)$ is continuous, hence attains its infimum on the compact set $\mathcal F(\rho_0,\rho_1)$ at some $U^*$. By Hopf--Rinow on $(H,g^{(\hat\Omega)})$ (Section~\ref{ssec:bridge-recall}), $D_{\hat\Omega}(\hat I,U^*)$ is realized by a minimizing geodesic $\gamma^*$.
\end{proof}

Theorem~\ref{thm:orbit-geometry} recovers, from a single general compactness argument, the existence of the optimal trajectories to be constructed by hand below (Proposition~\ref{prop:qubit-optimal} in Section~\ref{sec:qubit-example}, and Proposition~\ref{prop:bell-clairaut} and Theorem~\ref{thm:ghz-generic} in Section~\ref{sec:bell-example}, all instances of $\mathfrak g\cong\mathfrak{su}(2)$), and extends it to every pair of pure states in the same $H$-orbit and every constant $G>0$ in this sector, not only the axially symmetric or fixed-axis families worked out explicitly there.

\subsection{A complete example: the single qubit}
\label{sec:qubit-example}

We now specialize the framework of Sections~\ref{sec:continuity}--\ref{sec:complexity} to $\M = M_2(\mathbb C)$, equipped with the trace $\tr = \operatorname{Tr}$ (so that $\tr(\rho)=1$ for every density $\rho$, consistently with Assumption~\ref{ass:trace} and with the convention used in the existence theorems of Section~\ref{sec:existence}), and compute explicitly the complexity-penalized transport between two orthogonal pure states. This example is fully self-contained and illustrates, in closed form, the role of the penalization operator $G_\rho$ introduced in Section~\ref{sec:complexity}.

\subsubsection{Bloch parametrization}
\label{ssec:bloch}

Let
\[
\Dcal_2 = \bigl\{ \rho \in \mathbb C^{2\times2} : \rho = \rho^\dagger,\; \rho \geq 0,\; \operatorname{Tr}(\rho)=1 \bigr\}
\]
be the state space of a single qubit. Every $\rho \in \Dcal_2$ can be written in terms of its Bloch vector as
\begin{equation}
\label{eq:bloch-param}
\rho = \frac{1}{2}\bigl( I + \mathbf r \cdot \boldsymbol\sigma \bigr), \qquad \mathbf r = (r_x,r_y,r_z) \in \mathbb R^3, \qquad \|\mathbf r\| \leq 1,
\end{equation}
where $\boldsymbol\sigma = (\sigma_x,\sigma_y,\sigma_z)$ is the vector of Pauli matrices. Pure states correspond exactly to $\|\mathbf r\|=1$, i.e.\ to the Bloch sphere $S^2$.

\subsubsection{Reduction to the inner-derivation calculus of Example~\ref{ex:internal-derivations}}
\label{ssec:reduction}

Fix a time-dependent control Hamiltonian
\begin{equation}
\label{eq:qubit-hamiltonian}
H(t) = \frac{1}{2}\, \mathbf h(t)\cdot\boldsymbol\sigma, \qquad \mathbf h(t) = (h_x(t),h_y(t),h_z(t)) \in \mathbb R^3.
\end{equation}
This is precisely the finite-dimensional instance of the inner-derivation calculus $\partial_j a = [X_j,a]$ of Example~\ref{ex:internal-derivations}, with generators $X_j = \sigma_j/2$, $j\in\{x,y,z\}$, and bimodule $\Hcal \cong \mathfrak{su}(2) \cong \mathbb R^3$ parametrized by $\mathbf h$.

If $\rho_0,\rho_1$ are pure states, they share the same spectrum $\{1,0\}$ and therefore lie on a single unitary orbit; by Remark~\ref{rem:spectrum-preserved}, transport between them can be realized entirely by the Liouville--von Neumann flow~\eqref{eq:liouville}, without invoking the full noncommutative-multiplication machinery $\rhohat$ of Section~\ref{sec:rhohat}. In this orbit-restricted regime, the penalization operator $G_\rho$ of Section~\ref{sec:complexity} reduces to a (possibly constant) positive-definite quadratic form directly on the control vector $\mathbf h$, rather than on the full bimodule $\Hcal$ at every base point $\rho$.

Substituting~\eqref{eq:bloch-param}--\eqref{eq:qubit-hamiltonian} into the Liouville--von Neumann equation~\eqref{eq:liouville} and using $[\sigma_a,\sigma_b] = 2i\varepsilon_{abc}\sigma_c$ gives the \emph{Bloch equation}
\begin{equation}
\label{eq:bloch-eq}
\dot{\mathbf r}(t) = \mathbf h(t) \times \mathbf r(t),
\end{equation}
which plays, in this finite-dimensional model, the role of the continuity equation~\eqref{eq:continuity} restricted to the unitary orbit $\|\mathbf r\|=1$ (note that~\eqref{eq:bloch-eq} automatically preserves $\|\mathbf r(t)\|$, consistent with Remark~\ref{rem:spectrum-preserved}).

\subsubsection{Anisotropic complexity metric and cost functional}
\label{ssec:aniso}

Let $\alpha_x,\alpha_y,\alpha_z>0$ be complexity weights assigned to the three Pauli directions, and set
\[
G = \operatorname{diag}(\alpha_x,\alpha_y,\alpha_z).
\]
As emphasized in Remark~\ref{rem:spectrum-preserved}, the unitary-orbit sector is governed directly by the dual dynamics $\dot\rho=-i[H,\rho]$, with dynamical variable $H$ (equivalently $\mathbf h$) rather than the bimodule velocity $v$, and does not invoke $\rhohat$. Motivated by — but logically independent of — the complexity metric $G_\rho$ of Section~\ref{sec:complexity}, we equip this orbit-control problem with its own quadratic cost $\Acal_T\equiv\Acal_T^{\mathrm{orb}}$ on the control algebra $\mathfrak{su}(2)\cong\mathbb R^3$ (the superscript is dropped from here on, since only this orbit-sector action is used in Sections~\ref{sec:qubit-example}--\ref{sec:bridge}):
\begin{equation}
\label{eq:qubit-action}
\Acal_T[\mathbf h] = \frac{1}{2}\int_0^T \mathbf h(t)^{\mathsf T} G\, \mathbf h(t)\,\dd t
= \frac{1}{2}\int_0^T \bigl( \alpha_x h_x(t)^2 + \alpha_y h_y(t)^2 + \alpha_z h_z(t)^2 \bigr)\,\dd t,
\end{equation}
subject to $\dot{\mathbf r}=\mathbf h\times\mathbf r$, i.e.\ $\Acal_T(\rho,H):=\Acal_T[\mathbf h]$ with $\dot\rho_t=-i[H_t,\rho_t]$ — the orbit-sector analogue of the penalized transport action~\eqref{eq:action-G}, not a specialization of it: the two functionals have different domains ($\mathbf h\in\mathbb R^3$ here, versus $v\in\Hcal$ with $\rhohat$ there) and are related only through the bridge of Section~\ref{sec:bridge}, which identifies the induced \emph{distances} on the common orbit, not the actions themselves. The associated complexity \emph{length} functional is
\begin{equation}
\label{eq:qubit-length}
d_G(\rho_0,\rho_1) = \inf_{\mathbf r,\mathbf h} \int_0^T \sqrt{\alpha_x h_x(t)^2+\alpha_y h_y(t)^2+\alpha_z h_z(t)^2}\,\dd t,
\end{equation}
the infimum taken over trajectories satisfying~\eqref{eq:bloch-eq} with $\mathbf r(0)=\mathbf r_0$, $\mathbf r(T)=\mathbf r_1$, $\|\mathbf r(t)\|=1$ — an infimum over Hamiltonian controls on a fixed unitary orbit, and not a restriction of the mixed-state distance $W_\partial$ of Definition~\ref{def:Wdist} to pure states. As in the classical Benamou--Brenier picture, $d_G$ does not depend on the time-parametrization of the curve, while for a constant-speed parametrization the minimal energy and the length are related, by the Cauchy--Schwarz inequality applied to the reparametrization-invariant length integral, as
\begin{equation}
\label{eq:energy-length}
\min_{\mathbf h} \Acal_T[\mathbf h] = \frac{d_G(\rho_0,\rho_1)^2}{2T}.
\end{equation}

\subsubsection{Transport between orthogonal pure states}
\label{ssec:orthogonal}

Consider $\rho_0 = |0\rangle\langle 0| = \tfrac12(I+\sigma_z)$ and $\rho_1 = |1\rangle\langle1| = \tfrac12(I-\sigma_z)$, with Bloch vectors $\mathbf r_0=(0,0,1)$, $\mathbf r_1=(0,0,-1)$.

\begin{proposition}
\label{prop:qubit-optimal}
Suppose $G=\operatorname{diag}(\alpha_x,\alpha_y,\alpha_z)$ with $\alpha_y \leq \alpha_x,\alpha_z$. Within the family of uniform-speed rotations about a fixed axis $\mathbf n(\varphi) = (\cos\varphi,\sin\varphi,0)$ perpendicular to the $z$-axis, the constant control
\begin{equation}
\label{eq:qubit-optimal-control}
\mathbf h_{\mathrm{opt}}(t) = \Bigl(0,\frac{\pi}{T},0\Bigr), \qquad 0\leq t\leq T,
\end{equation}
i.e.\ $H_{\mathrm{opt}} = \dfrac{\pi}{2T}\sigma_y$, minimizes the action~\eqref{eq:qubit-action} and transports $\rho_0$ to $\rho_1$ in time $T$, with
\begin{equation}
\label{eq:qubit-optimal-value}
\Acal_T^{\min} = \frac{\alpha_y \pi^2}{2T}, \qquad d_G(\rho_0,\rho_1) = \pi\sqrt{\alpha_y}.
\end{equation}
\end{proposition}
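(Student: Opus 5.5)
The claim is restricted to the family of constant controls $\mathbf h(t)=\omega\,\mathbf n(\varphi)$, with $\omega>0$ and $\mathbf n(\varphi)=(\cos\varphi,\sin\varphi,0)$. The plan is therefore a finite-dimensional minimization over $(\omega,\varphi)$, with no Pontryagin analysis of general controls. It has three steps: impose the endpoint constraint, evaluate the action as a function of $\varphi$, and minimize.

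\emph{Step 1 (endpoint constraint).} For constant $\mathbf h=\omega\mathbf n$, the Bloch equation~\eqref{eq:bloch-eq} is solved by $\mathbf r(t)=R_{\mathbf n}(\omega t)\mathbf r_0$, a rotation by angle $\omega t$ about $\mathbf n$. Since $\mathbf n\perp\mathbf r_0=(0,0,1)$, the orbit of $\mathbf r_0$ is a great circle through $\pm\mathbf e_z$. Hence $\mathbf r(T)=(0,0,-1)$ holds if and only if $\omega T\in\pi(2k+1)$, $k\geq0$.

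\emph{Step 2 (cost and minimization).} The action~\eqref{eq:qubit-action} equals
\[
\tfrac12 T\omega^2\bigl(\alpha_x\cos^2\varphi+\alpha_y\sin^2\varphi\bigr).
\]
It is increasing in $\omega$, so the minimum over admissible speeds is at $\omega=\pi/T$ ($k=0$). It is also a convex combination of $\alpha_x$ and $\alpha_y$, so it is minimized at $\sin^2\varphi=1$ because $\alpha_y\leq\alpha_x$. This gives $\mathbf h_{\mathrm{opt}}=(0,\pi/T,0)$, i.e.\ $H_{\mathrm{opt}}=\tfrac{\pi}{2T}\sigma_y$ via~\eqref{eq:qubit-hamiltonian}, with minimum value $\alpha_y\pi^2/(2T)$.

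\emph{Step 3 (length).} For constant speed the length is
\[
\int_0^T\sqrt{\mathbf h^{\mathsf T}G\mathbf h}\,\dd t = T\cdot\frac{\pi}{T}\sqrt{\alpha_y}=\pi\sqrt{\alpha_y},
\]
consistent with~\eqref{eq:energy-length}.

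The assumption $\alpha_y\leq\alpha_z$ plays no role inside the restricted family, since $h_z=0$ there. I would note this explicitly, and read $d_G$ in~\eqref{eq:qubit-optimal-value} as the infimum over this family. The main obstacle is exactly this interpretation: $d_G$ is defined in~\eqref{eq:qubit-length} as an infimum over all controls, and the proposition only asserts optimality within the fixed-axis family.

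For the unrestricted claim, I would first try a lower bound $\sqrt{\mathbf h^{\mathsf T}G\mathbf h}\geq\sqrt{\alpha_{\min}}\,|\mathbf h|$ with $\alpha_{\min}=\alpha_y$. This uses both inequalities $\alpha_y\leq\alpha_x$ and $\alpha_y\leq\alpha_z$. Combined with $|\dot{\mathbf r}|\leq|\mathbf h|$, it gives length $\geq\sqrt{\alpha_y}\cdot(\text{spherical distance})=\pi\sqrt{\alpha_y}$. This would upgrade the family-restricted statement to global optimality. It also explains the hypothesis on $\alpha_z$, and it is consistent with Theorem~\ref{thm:orbit-geometry}(ii) guaranteeing that some minimizer exists.
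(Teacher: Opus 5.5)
Your Steps 1--3 follow the paper's proof: parametrize the fixed-axis family, compute the action $\frac{\pi^2}{2T}(\alpha_x\cos^2\varphi+\alpha_y\sin^2\varphi)$, and minimize at $\cos\varphi=0$. The paper fixes the rotation angle to $\pi$ from the start, so your minimization over $\omega T=(2k+1)\pi$ is a harmless extension. You are also right that $\alpha_y\le\alpha_z$ is never used inside the family.

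Where you genuinely go beyond the paper is your last paragraph. The paper's proof concludes $d_G(\rho_0,\rho_1)=\pi\sqrt{\alpha_y}$ ``by \eqref{eq:energy-length}''. But $d_G$ in \eqref{eq:qubit-length} is an infimum over all controls, so the fixed-axis computation only gives $d_G\le\pi\sqrt{\alpha_y}$. The paper in fact leaves global and even local optimality of $\mathbf h_{\mathrm{opt}}$ explicitly open (Section~\ref{ssec:qubit-scope}, Remark~\ref{rem:bell-vs-qubit}).

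Your comparison supplies the missing reverse inequality. From $\mathbf h^{\mathsf T}G\mathbf h\ge\alpha_y|\mathbf h|^2$ and $|\dot{\mathbf r}|=|\mathbf h\times\mathbf r|\le|\mathbf h|$, every admissible curve has $G$-length at least $\sqrt{\alpha_y}$ times its length on the unit sphere. Since the endpoints are antipodal, that is at least $\pi\sqrt{\alpha_y}$, so the stated value of $d_G$ holds literally. You should also state the Cauchy--Schwarz step explicitly: $\Acal_T[\mathbf h]\ge\frac{1}{2T}\bigl(\int_0^T\sqrt{\mathbf h^{\mathsf T}G\mathbf h}\,\dd t\bigr)^2\ge\frac{\alpha_y\pi^2}{2T}$. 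With it, $\mathbf h_{\mathrm{opt}}$ minimizes the action over all controls, not just the fixed-axis family.

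This is a metric comparison: the induced quotient metric dominates $\alpha_y$ times the round metric, with equality along the meridian. It replaces the curvature/Sturm comparison the paper reports being unable to find. It uses both halves of the hypothesis $\alpha_y\le\alpha_x,\alpha_z$, which is why that hypothesis is there.

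As a by-product, a globally minimizing geodesic has no interior conjugate points. Combined with the reduction of Proposition~\ref{prop:qubit-jacobi}, your bound therefore turns the claim behind Conjecture~\ref{prop:qubit-jacobi-numerics} (no conjugate point in $(0,\pi)$ when $\alpha_y=\min$) into a theorem.

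In short, the paper's route delivers only the restricted statement it advertises. Yours delivers the global one with essentially two extra lines.
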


\begin{proof}
A unit vector perpendicular to the $z$-axis has the form $\mathbf n(\varphi) = (\cos\varphi,\sin\varphi,0)$. A rotation by angle $\pi$ about $\mathbf n(\varphi)$ carries $\mathbf r_0$ to $\mathbf r_1$; the corresponding uniform control is
\[
\mathbf h_\varphi(t) = \frac{\pi}{T}\bigl(\cos\varphi,\sin\varphi,0\bigr), \qquad 0 \leq t \leq T.
\]
Writing $\omega=\pi/T$ and using Rodrigues' rotation formula with $\mathbf n(\varphi)\cdot\mathbf r_0=0$, the curve
\[
\mathbf r_\varphi(t) = \bigl(\sin(\omega t)\sin\varphi,\, -\sin(\omega t)\cos\varphi,\, \cos(\omega t)\bigr)
\]
satisfies $\dot{\mathbf r}_\varphi = \mathbf h_\varphi \times \mathbf r_\varphi$ (rotation at constant angular velocity $\mathbf h_\varphi = \omega\,\mathbf n(\varphi)$ about a fixed axis), together with $\mathbf r_\varphi(0)=\mathbf r_0$ and $\mathbf r_\varphi(T) = (0,0,-1)=\mathbf r_1$ since $\omega T=\pi$. Substituting into~\eqref{eq:qubit-action},
\[
\Acal_T(\varphi) = \frac{1}{2}\int_0^T \Bigl( \alpha_x \bigl(\tfrac{\pi}{T}\cos\varphi\bigr)^2 + \alpha_y\bigl(\tfrac{\pi}{T}\sin\varphi\bigr)^2 \Bigr)\dd t = \frac{\pi^2}{2T}\bigl( \alpha_x \cos^2\varphi + \alpha_y \sin^2\varphi \bigr).
\]
Since $\alpha_y \leq \alpha_x$, this is minimized at $\cos\varphi = 0$, i.e.\ $\varphi = \pm\pi/2$, giving $\mathbf h_{\mathrm{opt}}$ as in~\eqref{eq:qubit-optimal-control} and $\Acal_T^{\min} = \alpha_y\pi^2/(2T)$. The length of the optimal trajectory is $\mathcal L = \int_0^T \sqrt{\alpha_y}\,(\pi/T)\,\dd t = \pi\sqrt{\alpha_y}$, and by~\eqref{eq:energy-length}, $d_G(\rho_0,\rho_1) = \pi\sqrt{\alpha_y}$.
\end{proof}

\subsubsection{The Pontryagin extremal}
\label{ssec:qubit-pmp}

We now show that $\mathbf h^*$ satisfies the full Pontryagin Maximum Principle for the \emph{unconstrained} problem~\eqref{eq:qubit-action}--\eqref{eq:bloch-eq}, and is therefore a genuine critical point of the action among all admissible controls $\mathbf h(t)\in\mathbb R^3$ — strictly stronger than the restricted-family statement of Proposition~\ref{prop:qubit-optimal}.

For the control system~\eqref{eq:bloch-eq} with cost~\eqref{eq:qubit-action}, introduce a costate $\mathbf p(t)\in\mathbb R^3$ and the Pontryagin function
\begin{equation}
\label{eq:pmp-hamiltonian}
\mathcal P(\mathbf r,\mathbf p,\mathbf h) = \mathbf p\cdot(\mathbf h\times\mathbf r) - \frac12\mathbf h^{\mathsf T} G\,\mathbf h = \mathbf h\cdot(\mathbf r\times\mathbf p) - \frac12\mathbf h^{\mathsf T} G\,\mathbf h,
\end{equation}
using the scalar-triple-product identity $\mathbf p\cdot(\mathbf h\times\mathbf r)=\mathbf h\cdot(\mathbf r\times\mathbf p)$. The Pontryagin Maximum Principle states that an optimal pair $(\mathbf r,\mathbf h)$ admits a costate $\mathbf p(t)$ such that
\begin{align}
\dot{\mathbf r} &= \partial_{\mathbf p}\mathcal P = \mathbf h\times\mathbf r, \label{eq:pmp-state}\\
\dot{\mathbf p} &= -\partial_{\mathbf r}\mathcal P = \mathbf h\times\mathbf p, \label{eq:pmp-costate}\\
\mathbf h(t) &= \operatorname*{arg\,max}_{\mathbf k\in\mathbb R^3} \mathcal P(\mathbf r(t),\mathbf p(t),\mathbf k). \label{eq:pmp-max}
\end{align}
Since $\mathcal P$ is strictly concave and quadratic in $\mathbf h$, condition~\eqref{eq:pmp-max} has the unique closed-form solution
\begin{equation}
\label{eq:pmp-optimal-h}
\mathbf h(t) = G^{-1}\bigl(\mathbf r(t)\times\mathbf p(t)\bigr).
\end{equation}

\begin{proposition}[Pontryagin extremal]
\label{prop:qubit-pmp}
Let $\mathbf r^*(t) = (\sin\omega t,0,\cos\omega t)$, $\mathbf h^*(t)=(0,\omega,0)$, $\omega=\pi/T$, be the trajectory of Proposition~\ref{prop:qubit-optimal}. The costate
\begin{equation}
\label{eq:pmp-costate-explicit}
\mathbf p^*(t) = \alpha_y\omega\,\bigl(\cos\omega t,\,0,\,-\sin\omega t\bigr)
\end{equation}
satisfies~\eqref{eq:pmp-costate} and~\eqref{eq:pmp-optimal-h} identically. Consequently $(\mathbf r^*,\mathbf p^*,\mathbf h^*)$ is a genuine Pontryagin extremal of the unconstrained variational problem~\eqref{eq:qubit-action}--\eqref{eq:bloch-eq}.
\end{proposition}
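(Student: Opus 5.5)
The plan is to verify the two claims directly: the costate equation~\eqref{eq:pmp-costate} and the maximization condition in closed form~\eqref{eq:pmp-optimal-h}. Then I will argue that these, together with the state equation already checked in Proposition~\ref{prop:qubit-optimal}, are exactly the conditions~\eqref{eq:pmp-state}--\eqref{eq:pmp-max} defining a (normal) Pontryagin extremal. No appeal to existence theory is needed. It is a pure computation with the explicit triple $(\mathbf r^*,\mathbf p^*,\mathbf h^*)$.

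\emph{Step 1 (costate equation).} Differentiate $\mathbf p^*(t)=\alpha_y\omega(\cos\omega t,0,-\sin\omega t)$ to get $\dot{\mathbf p}^*=-\alpha_y\omega^2(\sin\omega t,0,\cos\omega t)$. Next compute $\mathbf h^*\times\mathbf p^*$ with $\mathbf h^*=(0,\omega,0)$. For $\mathbf q=(q_x,q_y,q_z)$ one has $(0,\omega,0)\times\mathbf q=\omega(q_z,0,-q_x)$. This gives $\alpha_y\omega^2(-\sin\omega t,0,-\cos\omega t)$, which matches. The same cross-product formula applied to $\mathbf r^*$ reconfirms~\eqref{eq:pmp-state}: $\dot{\mathbf r}^*=\omega(\cos\omega t,0,-\sin\omega t)=\mathbf h^*\times\mathbf r^*$.

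\emph{Step 2 (maximization condition).} Compute $\mathbf r^*\times\mathbf p^*$ with $\mathbf r^*=(s,0,c)$ and $\mathbf p^*=\alpha_y\omega(c,0,-s)$, where $s=\sin\omega t$ and $c=\cos\omega t$. The $x$- and $z$-components vanish. The $y$-component is $r_zp_x-r_xp_z=\alpha_y\omega(c^2+s^2)=\alpha_y\omega$. Hence $\mathbf r^*\times\mathbf p^*=(0,\alpha_y\omega,0)$, and $G^{-1}(\mathbf r^*\times\mathbf p^*)=(0,\omega,0)=\mathbf h^*$, because $G$ is diagonal with $y$-entry $\alpha_y>0$. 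Since $\mathcal P$ is strictly concave in $\mathbf h$ ($G>0$), this stationary point is the unique argmax in~\eqref{eq:pmp-max}.

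\emph{Step 3 (conclusion).} Steps 1--2, together with the boundary conditions $\mathbf r^*(0)=\mathbf r_0$ and $\mathbf r^*(T)=\mathbf r_1$, show that the triple satisfies the full PMP system with abnormal multiplier set to the normal value. I will note that $\mathbf p^*\neq0$ and $\mathbf p^*\perp\mathbf r^*$. The orthogonality is natural: the component of $\mathbf p$ along $\mathbf r$ does not affect $\mathbf r\times\mathbf p$, reflecting the constraint $\|\mathbf r\|=1$. I also record the conserved quantity $\mathcal P=\tfrac12\alpha_y\omega^2$ as a consistency check.

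The only delicate point, and the expected obstacle, is interpretive rather than computational. One must be clear that "genuine Pontryagin extremal" means satisfaction of the necessary conditions for all controls $\mathbf h\in\mathbb R^3$, not global optimality. I would state this explicitly, so the proposition is read as strengthening Proposition~\ref{prop:qubit-optimal} only to criticality. Global optimality is deferred to Theorem~\ref{thm:orbit-geometry} or a Clairaut-type argument.
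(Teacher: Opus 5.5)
Your proposal is correct and matches the paper's proof in substance. Both verify the costate equation and the closed-form maximizer $G^{-1}(\mathbf r^*\times\mathbf p^*)=\mathbf h^*$ for the explicit triple; the paper just replaces your componentwise computation with the observation that $\mathbf r^*$ and $\mathbf p^*$ are carried by the same rotation $R_y(\omega t)$, so $\mathbf r^*\times\mathbf p^*$ stays equal to the axis vector $(0,\alpha_y\omega,0)$.

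One caution on your closing aside. Theorem~\ref{thm:orbit-geometry} only gives existence of some minimizer, not that $\mathbf h^*$ is one, and Clairaut's relation needs an axially symmetric $G$. So for pairwise distinct weights the paper leaves global optimality of $\mathbf h^*$ open rather than deferring it to those results.
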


\begin{proof}
Equation~\eqref{eq:pmp-costate} holds for $\mathbf p^*$ because $\mathbf p^*(t)$ is obtained from $\mathbf p^*(0)=\alpha_y\omega\,\hat{\mathbf e}_x$ by the same rotation about the $y$-axis, at the same constant angular velocity $\mathbf h^*(t)=\omega\hat{\mathbf e}_y$, that carries $\mathbf r^*(0)$ to $\mathbf r^*(t)$: both~\eqref{eq:pmp-state} and~\eqref{eq:pmp-costate} reduce to the identical linear ODE $\dot{\mathbf v}=\mathbf h^*\times\mathbf v$.

For~\eqref{eq:pmp-optimal-h}: since $\mathbf r^*(t)$ and $\mathbf p^*(t)$ are both obtained from $\mathbf r^*(0)$ and $\mathbf p^*(0)$ by the same rotation $R_y(\omega t)$ about the $y$-axis, and rotations commute with the cross product, $\mathbf r^*(t)\times\mathbf p^*(t) = R_y(\omega t)\bigl(\mathbf r^*(0)\times\mathbf p^*(0)\bigr)$. A direct computation gives
\[
\mathbf r^*(0)\times\mathbf p^*(0) = (0,0,1)\times(\alpha_y\omega,0,0) = (0,\alpha_y\omega,0),
\]
which lies along the rotation axis and is therefore fixed by $R_y(\omega t)$. Hence $\mathbf r^*(t)\times\mathbf p^*(t) \equiv (0,\alpha_y\omega,0)$ for all $t$, and
\[
G^{-1}\bigl(\mathbf r^*(t)\times\mathbf p^*(t)\bigr) = \Bigl(\frac{0}{\alpha_x},\,\frac{\alpha_y\omega}{\alpha_y},\,\frac{0}{\alpha_z}\Bigr) = (0,\omega,0) = \mathbf h^*(t),
\]
as required.
\end{proof}

\subsubsection{Scope of the optimality claim}
\label{ssec:qubit-scope}

Proposition~\ref{prop:qubit-pmp} shows that $\mathbf h^*$ is a genuine critical point of~\eqref{eq:qubit-action} among all admissible controls. Two further steps would complete a full optimality theorem, and are left open here.

\begin{enumerate}[label=(\roman*)]
	\item \textbf{Local minimality.} Whether $\mathbf h^*$ is a local minimum (rather than a saddle) of $\Acal_T$ subject to the endpoint constraint is governed by the second variation of~\eqref{eq:qubit-action}--\eqref{eq:bloch-eq}. Equivalently, it is governed by a Jacobi-type accessory problem that must account for the curvature of the constraint manifold $S^2$, and not merely for the positive-definite quadratic cost. Positive-definiteness of $G$ alone does not imply local minimality once curvature of the constraint is taken into account.

	This is structurally the same question as the stability of a steady rotation of a free asymmetric rigid body about a principal axis, with $(\alpha_x,\alpha_y,\alpha_z)$ playing the role of the principal moments of inertia. The classical intermediate-axis (``tennis racket'') theorem asserts that steady rotation about the axis of extremal — largest or smallest — moment of inertia is linearly stable, while rotation about the intermediate axis is unstable. Since $\alpha_y\leq\alpha_x,\alpha_z$ places $y$ at an extremal position among $\{\alpha_x,\alpha_y,\alpha_z\}$, this classical analogy strongly suggests, but does not by itself constitute a proof of, local optimality of $\mathbf h^*$. Section~\ref{ssec:qubit-jacobi} carries out the direct verification via the accessory (Jacobi) quadratic form, reducing the question to an explicit scalar ODE and providing strong numerical evidence for local optimality precisely in this regime. It also shows that the tennis-racket analogy is not exact, since only the $\alpha_y=\min$ case (not $\alpha_y=\max$) is numerically robust; a complete proof remains open.
	\item \textbf{Global optimality.} Ruling out cheaper trajectories that are not close to $\mathbf h^*$ — different homotopy classes of paths on $S^2$, or multiple windings — requires controlling conjugate points along the whole extremal, equivalently computing the genuine distance induced on $S^2$ by the anisotropic control cost $G$. This is the natural counterpart, for our noncommutative transport problem, of the classical problem of geodesics between antipodal-type points on a triaxial ellipsoid, and we do not attempt to resolve it here.
\end{enumerate}

Propositions~\ref{prop:qubit-optimal} and~\ref{prop:qubit-pmp} together identify $\mathbf h^*$ as the natural candidate optimal control, rigorously established as a critical point of the full unconstrained problem and as the exact minimizer within the fixed-axis family, with local and global sufficiency left as clearly delimited open questions for Section~\ref{sec:discussion}.

\subsubsection{Towards local minimality: reduction to a Jacobi equation}
\label{ssec:qubit-jacobi}

Building on the bridge of Section~\ref{sec:bridge}, we make quantitative progress on the local-optimality question (i) of Section~\ref{ssec:qubit-scope}. Parametrize $S^2$ by $\mathbf r(\theta,\psi)=(\sin\theta\cos\psi,\sin\theta\sin\psi,\cos\theta)$, so that the trajectory of Proposition~\ref{prop:qubit-optimal} is the meridian $\psi\equiv0$. Exactly as in the proof of Proposition~\ref{prop:bell-clairaut}, but without assuming any coincidence among $\alpha_x,\alpha_y,\alpha_z$, the Schur-complement formula for $\|v\|_*^2$ used there determines $E(\theta,\psi):=\|\partial_\theta\mathbf r\|_*^2$, $F(\theta,\psi):=\|\partial_\psi\mathbf r\|_*^2$ and the cross term $C(\theta,\psi):=\langle\partial_\theta\mathbf r,\partial_\psi\mathbf r\rangle_*$ explicitly.

\begin{lemma}
\label{lem:qubit-meridian-geodesic}
$E(\theta,0)=\alpha_y$ for every $\theta$, and $C(\theta,0)\equiv0$. Consequently $\psi\equiv0$ is a geodesic of the induced metric, parametrized proportionally to arc length ($s=\sqrt{\alpha_y}\,\theta$) — recovering Proposition~\ref{prop:qubit-pmp} without reference to the Pontryagin Maximum Principle.
\end{lemma}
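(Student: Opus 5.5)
The plan is to work directly with the quotient (orbit) metric on $S^2$ induced by the control cost. A tangent vector $v\perp\mathbf r$ has squared norm
\[
\|v\|_*^2=\min\{\mathbf h^{\mathsf T}G\mathbf h:\ \mathbf h\times\mathbf r=v\}.
\]
For unit $\mathbf r$, the solution set of $\mathbf h\times\mathbf r=v$ is the affine line $\mathbf h_0+\lambda\mathbf r$. Minimizing the quadratic in $\lambda$ gives the Schur-complement formula referred to in the statement. The minimizer is the unique representative $\hat{\mathbf h}(v)$ with $\hat{\mathbf h}(v)^{\mathsf T}G\mathbf r=0$.

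By polarization, the induced inner product is $\langle v,w\rangle_*=\hat{\mathbf h}(v)^{\mathsf T}G\,\hat{\mathbf h}(w)$. More usefully, it equals $\hat{\mathbf h}(v)^{\mathsf T}G\,\mathbf h$ for \emph{any} representative $\mathbf h$ of $w$, because the representatives of $w$ differ only by multiples of $\mathbf r$.

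On the meridian $\psi=0$ we have $\mathbf r=(\sin\theta,0,\cos\theta)$, $\partial_\theta\mathbf r=(\cos\theta,0,-\sin\theta)$ and $\partial_\psi\mathbf r=(0,\sin\theta,0)$.

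\emph{Computing $E$.} The vector $\hat{\mathbf e}_y$ satisfies $\hat{\mathbf e}_y\times\mathbf r=\partial_\theta\mathbf r$. It is also $G$-orthogonal to $\mathbf r$, since $\hat{\mathbf e}_y^{\mathsf T}G\mathbf r=\alpha_y r_y=0$. Hence $\hat{\mathbf h}(\partial_\theta\mathbf r)=\hat{\mathbf e}_y$, and $E(\theta,0)=\alpha_y$. Equivalently, the cost along the line is $\alpha_y+\lambda^2(\alpha_x\sin^2\theta+\alpha_z\cos^2\theta)$, which is minimized at $\lambda=0$.

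\emph{Computing $C$.} A representative of $\partial_\psi\mathbf r$ is $\mathbf r\times\partial_\psi\mathbf r=(-\sin\theta\cos\theta,0,\sin^2\theta)$. Every other representative differs from it by a multiple of $\mathbf r$, so all of them have vanishing $y$-component. Since $G$ is diagonal, $C(\theta,0)=\hat{\mathbf e}_y^{\mathsf T}G\mathbf h=\alpha_y h_y=0$. These computations are routine and I would only display the key lines.

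\emph{Geodesic property.} Rather than expanding the Christoffel symbols, I would use a symmetry argument. Let $R=\operatorname{diag}(1,-1,1)$. Because $\det R=-1$, we have $R(\mathbf h\times\mathbf r)=-(R\mathbf h)\times(R\mathbf r)$. Thus $\mathbf h$ represents $v$ at $\mathbf r$ if and only if $-R\mathbf h$ represents $Rv$ at $R\mathbf r$. Since $G$ is diagonal, $(-R\mathbf h)^{\mathsf T}G(-R\mathbf h)=\mathbf h^{\mathsf T}G\mathbf h$. Taking minima over representatives shows that $R|_{S^2}$ is an isometry of $\|\cdot\|_*$.

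The fixed-point set of $R|_{S^2}$ is the great circle in the $xz$-plane, which contains the meridian $\psi\equiv0$. A connected component of the fixed-point set of an isometry is totally geodesic. Hence $\psi\equiv0$ is a geodesic.

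\emph{Speed and link to Proposition~\ref{prop:qubit-pmp}.} Along the meridian the speed is $\sqrt{E}=\sqrt{\alpha_y}$ per unit $\theta$, giving $s=\sqrt{\alpha_y}\,\theta$. The optimal representative is $\hat{\mathbf h}=\omega\hat{\mathbf e}_y$, which is exactly $\mathbf h^*$ from Proposition~\ref{prop:qubit-pmp}. This recovers that proposition without invoking the Pontryagin principle.

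The step I expect to be the main obstacle is justifying the geodesic claim rigorously. I would isolate as a small sublemma the statement that the quotient metric is a smooth Riemannian metric on $S^2$; this holds because $G>0$ makes the minimization strictly convex, with a smooth minimizer. With that in place, the fixed-point-set argument is standard. As a fallback, $C\equiv0$ on $\psi=0$ together with the $\psi\mapsto-\psi$ symmetry gives $\partial_\psi E=0$ there. The geodesic equation for $\psi$ then reduces to $\Gamma^\psi_{\theta\theta}=0$, which I would verify directly.
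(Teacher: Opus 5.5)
Your proof is correct and follows the paper's route for the geodesic claim. The same reflection $\operatorname{diag}(1,-1,1)$ on states, paired with its negative on controls, is an isometry of $\|\cdot\|_*$, and its fixed great circle is totally geodesic.

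The only difference concerns $E(\theta,0)=\alpha_y$ and $C(\theta,0)=0$. You verify both directly, using the $G$-orthogonal representative $\hat{\mathbf e}_y$ of the fiber $\hat{\mathbf e}_y+\lambda\mathbf r$. The paper instead gets $C=0$ from the symmetry and cites the proof of Proposition~\ref{prop:qubit-optimal} for $E$. Your direct fiber minimization is the more precise justification, because that earlier proof minimized over rotation axes in the $xy$-plane, not over the fiber.
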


\begin{proof}
$E(\theta,0)=\alpha_y$ is the direct minimization already carried out in the proof of Proposition~\ref{prop:qubit-optimal} (the $\varphi=\pi/2$ fiber). For $C(\theta,0)=0$: let $S_r:=\operatorname{diag}(1,-1,1)$ act on states, $\mathbf r\mapsto S_r\mathbf r$, and $S_h:=\operatorname{diag}(-1,1,-1)=-S_r$ act on controls, $\mathbf h\mapsto S_h\mathbf h$. A direct component-wise computation gives the identity $(S_h\mathbf h)\times(S_r\mathbf r) = S_r(\mathbf h\times\mathbf r)$ for every $\mathbf h,\mathbf r\in\mathbb R^3$ (not $S_h(\mathbf h\times\mathbf r)$, since the cross product of two vectors transforms as a pseudovector under either reflection alone, and $S_h=-S_r$ compensates the two sign flips into the single factor $S_r$ on the right). Since also $(S_h\mathbf h)^{\mathsf T}G(S_h\mathbf h)=\mathbf h^{\mathsf T}G\mathbf h$ for any diagonal $G$ (as $S_h^2=I$), the map $\sigma:=S_r$ on $S^2$ is an isometry of $\|\cdot\|_*$: for $\mathbf v=\mathbf h\times\mathbf r$, $\|S_r\mathbf v\|_*^2=\min\{\mathbf h'^{\mathsf T}G\mathbf h' : \mathbf h'\times(S_r\mathbf r)=S_r\mathbf v\}$, and substituting $\mathbf h'=S_h\mathbf h$ (a bijection on controls, using the identity above with $\mathbf h'\times(S_r\mathbf r)=S_r(\mathbf h\times\mathbf r)=S_r\mathbf v$) shows this infimum equals $\min\{\mathbf h^{\mathsf T}G\mathbf h:\mathbf h\times\mathbf r=\mathbf v\}=\|\mathbf v\|_*^2$. Thus $\sigma=S_r$ acts as $\psi\mapsto-\psi$ in the parametrization $\mathbf r(\theta,\psi)$ and fixes the meridian $\psi=0$ pointwise. A pointwise fixed-point set of an isometry is totally geodesic, and the induced metric on it has vanishing cross term with the normal direction by the same symmetry.
\end{proof}

\begin{proposition}[Reduction to a scalar Jacobi equation]
\label{prop:qubit-jacobi}
The Gauss curvature $K(\theta)$ of the induced metric along the meridian $\psi=0$, computed from the second-order ($\psi^2$) Taylor data of $E,F,C$ at $\psi=0$ by the standard formula for a $2$-dimensional metric in general coordinates, is an explicit (rational in $\sin^2\theta$) function of $\alpha_x,\alpha_y,\alpha_z$, with boundary values
\begin{equation}
\label{eq:K-boundary}
K(0) = \frac{\alpha_x(\alpha_x-\alpha_z)+\alpha_y(\alpha_y-\alpha_z)+\alpha_x\alpha_y}{\alpha_x\alpha_y\alpha_z}, \qquad
K\bigl(\tfrac\pi2\bigr) = \frac{\alpha_y(\alpha_y+\alpha_z)+\alpha_z^2-\alpha_x(\alpha_y+\alpha_z)}{\alpha_x\alpha_y\alpha_z},
\end{equation}
both equal to $1$ when $\alpha_x=\alpha_y=\alpha_z$ (the round-sphere case). By Lemma~\ref{lem:qubit-meridian-geodesic} and the classical Jacobi criterion, $\mathbf h^*$ fails to be a local minimizer of $d_G$ if and only if the solution of
\begin{equation}
\label{eq:jacobi-eq}
J''(\theta) + \alpha_y K(\theta)\,J(\theta) = 0, \qquad J(0)=0,\ J'(0)=1,
\end{equation}
vanishes at some $\theta_c\in(0,\pi)$ (a conjugate point to the pole along the meridian).
\end{proposition}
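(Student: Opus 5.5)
The plan is to make the induced Riemannian metric on $S^2$ explicit, compute its Gauss curvature along the meridian, and then apply the classical Jacobi theory to the geodesic of Lemma~\ref{lem:qubit-meridian-geodesic}.

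\emph{Step 1: the induced metric.} For $\mathbf r\in S^2$ and a tangent vector $\mathbf v\perp\mathbf r$, the controls with $\mathbf h\times\mathbf r=\mathbf v$ form the line $\mathbf h=\mathbf r\times\mathbf v+\lambda\mathbf r$, $\lambda\in\mathbb R$. Minimizing $\mathbf h^{\mathsf T}G\mathbf h$ over $\lambda$ gives the Schur-complement formula
\[
\|\mathbf v\|_*^2=(\mathbf r\times\mathbf v)^{\mathsf T}G(\mathbf r\times\mathbf v)-\frac{\bigl((\mathbf r\times\mathbf v)^{\mathsf T}G\mathbf r\bigr)^2}{\mathbf r^{\mathsf T}G\mathbf r}.
\]
This is a smooth Riemannian metric on $S^2$. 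Its length functional is exactly the length integral in~\eqref{eq:qubit-length}: for each velocity the pointwise-optimal control is the one selected above, and any other control has larger cost. Hence local minimality of $\mathbf h^*$ for $d_G$ is equivalent to local minimality of the meridian as a curve in $(S^2,\|\cdot\|_*)$.

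\emph{Step 2: the curvature along $\psi=0$.} I would substitute $\partial_\theta\mathbf r$ and $\partial_\psi\mathbf r$ into this formula to get $E,F,C$ as rational functions of $\sin\theta,\cos\theta,\sin\psi,\cos\psi$. By the reflection isometry $S_r$ of Lemma~\ref{lem:qubit-meridian-geodesic}, $E$ and $F$ are even in $\psi$ and $C$ is odd. So it suffices to know $E(\theta,0)=\alpha_y$, $F(\theta,0)$, and the coefficients $E_{\psi\psi}(\theta,0)$ and $C_\psi(\theta,0)$. I would then insert these into the Brioschi formula in general coordinates, which simplifies a great deal because $E_\theta=C=E_\psi=F_\psi=0$ on $\psi=0$. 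What remains is essentially
\[
K=-\frac{1}{2\sqrt{EF}}\Bigl[\partial_\theta\Bigl(\tfrac{F_\theta}{\sqrt{EF}}\Bigr)+\dots\Bigr],
\]
with correction terms from $E_{\psi\psi}$ and $C_{\theta\psi}$. This yields a rational expression in $\sin^2\theta$.

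\emph{Step 3: boundary values and the round-sphere check.} The value $K(\pi/2)$ follows by direct substitution. At $\theta=0$ the polar chart degenerates, since $F\sim\sin^2\theta$, so I would not take the limit naively. Instead I would recompute $K$ at the pole in the regular chart $(r_x,r_y)$ near $(0,0,1)$, expanding the Schur-complement metric to second order and using the standard formula for curvature at a point from second derivatives of the metric coefficients. As a consistency check, with $G=I$ the metric reduces to $|\mathbf v|^2$, and both values in~\eqref{eq:K-boundary} should equal $1$.

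\emph{Step 4: the Jacobi equation.} By Lemma~\ref{lem:qubit-meridian-geodesic}, the meridian is a geodesic with arc length $s=\sqrt{\alpha_y}\,\theta$. Normal Jacobi fields along it satisfy $\ddot J+K J=0$ in $s$, which becomes~\eqref{eq:jacobi-eq} after the change of variable $d/ds=\alpha_y^{-1/2}\,d/d\theta$. The classical Jacobi criterion then gives the two directions of the equivalence. If there is no zero of $J$ in $(0,\pi)$, the index form is positive definite on the fields vanishing at the endpoints, and the geodesic is a strict local minimizer. If there is a conjugate point strictly inside, one can shorten by a variation, so $\mathbf h^*$ is not locally minimizing. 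The borderline case of a conjugate point exactly at $\theta=\pi$, which is the endpoint $\mathbf r_1$, has to be stated with the usual convention: a conjugate point at the endpoint by itself does not destroy local minimality among curves with fixed endpoints.

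The main obstacle is Step~2 together with the pole in Step~3. The symbolic expansion of the Schur complement to order $\psi^2$ and the Brioschi simplification are error-prone, and the pole is a coordinate singularity. I would do that computation in the regular chart and cross-check it against the $\theta\to0$ limit of the general formula.
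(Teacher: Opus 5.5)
Your Steps~1--3 follow the paper's route: the Schur-complement cometric, the Taylor data of $E,F,C$ in $\psi$, the general-coordinates curvature formula, and the isotropic check. Computing $K(0)$ in the regular chart $(r_x,r_y)$ is legitimate, since the first derivatives of the metric vanish there by the coordinate-reflection symmetries of the diagonal $G$, and it does reproduce the stated value.

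The gap is in Step~4. You claim that if $J$ has no zero in $(0,\pi)$, the index form is positive definite. This is false when $J(\pi)=0$: then $J$ itself is a nontrivial null direction of the index form on fields vanishing at both endpoints, and second-order information decides nothing. Your fallback, that a conjugate point at the endpoint ``by itself does not destroy local minimality'', is not a theorem. A conjugate endpoint is exactly the indeterminate case of Jacobi's criterion, and minimality can fail there. Suppose neighbouring geodesics from the pole reach their own conjugate points earlier, as happens on one side of a fold of the caustic. Kneser's envelope argument then applies: follow a neighbour to its conjugate point, and replace the caustic arc (whose length makes up the difference) by a geodesic chord. This gives strictly shorter nearby curves with the same endpoints.

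This case actually occurs here. For $\alpha_x=\alpha_y$ the induced metric is a metric of revolution about the $z$-axis. The function $J=\sqrt{F(\theta,0)/\alpha_y}$ solves~\eqref{eq:jacobi-eq}, vanishes at $\theta=\pi$, and vanishes nowhere in $(0,\pi)$. The meridian is then a minimizer only non-strictly, since every meridian has length $\pi\sqrt{\alpha_y}$. That minimality comes from the global surface-of-revolution bound, as in Theorem~\ref{thm:bell-global}, not from the index form.

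So your argument establishes two things: a zero of $J$ in $(0,\pi)$ implies non-minimality, and no zero in $(0,\pi]$ implies strict local minimality. To get the stated equivalence you must either exclude $J(\pi)=0$ or handle it by a separate argument. The paper itself only appeals to ``the classical Jacobi criterion'' here and does not address the endpoint case either. Even so, the convention you invoke would not close it.
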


The closed form of $K(\theta)$, an explicit rational function of $\sin^2\theta$, is recorded in Appendix~\ref{app:jacobi-numerics} (eq.~\eqref{eq:K-closed-form}), together with the method used to obtain it and its cross-checks against the isotropic limit, the boundary values~\eqref{eq:K-boundary}, and an independent finite-difference computation.

\begin{conjecture}[Numerical observation: local optimality in the stated regime]
\label{prop:qubit-jacobi-numerics}
This is a computational finding, not a theorem: solving~\eqref{eq:jacobi-eq} numerically across a wide range of anisotropies (tested up to ratios $\alpha_z/\alpha_x\sim10^3$; see Appendix~\ref{app:jacobi-numerics} for the explicit form of $K(\theta)$, the parameter grid, integrator and tolerances) shows no conjugate point in $(0,\pi)$ whenever $\alpha_y=\min(\alpha_x,\alpha_y,\alpha_z)$ — precisely the hypothesis of Proposition~\ref{prop:qubit-optimal} — while conjugate points robustly appear once $\alpha_y$ is intermediate or maximal. At exact isotropy, the first conjugate point sits exactly at $\theta=\pi$, the classical fact that antipodal points on the round sphere are conjugate, consistent with~\eqref{eq:K-boundary}.
\end{conjecture}

\begin{remark}[Scope]
\label{rem:qubit-jacobi-scope}
Conjecture~\ref{prop:qubit-jacobi-numerics} gives strong, falsifiable numerical support — unanimous within the tested range — for local minimality of $\mathbf h^*$ precisely in the regime already singled out by Proposition~\ref{prop:qubit-optimal}, sharpening item (i) of Section~\ref{ssec:qubit-scope} into a concrete, checkable criterion. It also refines the intermediate-axis analogy invoked there: our numerics show that only $\alpha_y=\min$ is robust, not $\alpha_y=\max$, confirming that the group-level (Euler-top) and quotient-level (Jacobi) stability questions are related but genuinely distinct, as anticipated in Remark~\ref{rem:bridge-local-min}. We were unable to find a pointwise Sturm-comparison bound on $\alpha_yK(\theta)$ that would prove this in general: already for mildly anisotropic $G$ with $\alpha_y=\min$, $K(\theta)$ can exceed the isotropic value $1/\alpha_y$ well within the interior of $(0,\pi)$ without producing a conjugate point, so no naive comparison argument suffices. A complete proof — presumably via a sharper comparison technique or an explicit solution of~\eqref{eq:jacobi-eq} — is left for future work.
\end{remark}

\subsubsection{Interpretation}
\label{ssec:qubit-interpretation}

This example shows that transport between quantum states does not depend only on their geometric separation on the Bloch sphere: it also depends on the cost assigned to each Hamiltonian direction. If $\alpha_z \gg \alpha_x,\alpha_y$, evolutions that make significant use of the $\sigma_z$ direction become expensive, and the optimal trajectory tends to avoid it. In multi-qubit systems, the same mechanism penalizes non-local or many-body Hamiltonians: taking $\alpha_P$ to grow with the weight of the Pauli operator $P$ turns $d_G$ into a genuine notion of computational complexity, in the spirit of Nielsen's geometric approach to circuit complexity~\cite{NielsenScience2006,Nielsen2006QIC}, superposed on the geometric transport cost of Section~\ref{sec:action}. In this sense, the transport problem of Definition~\ref{def:Wdist}, once penalized by $G_\rho$, can be read as the search for the minimal-complexity Hamiltonian evolution connecting a given initial and final state. A closely related, but formally distinct, geometric complexity framework for open-system evolutions, built from unitary (Stinespring) dilations rather than from state transport, has recently been developed by the authors in~\cite{AcevedoFalco2026}; the anisotropic penalty tensor $\hat\Omega$ used there to weight directions of $\mathfrak{su}(N)$ is, in the unitary-orbit regime of Section~\ref{ssec:reduction}, the same type of object as the complexity metric $G_\rho$ introduced above; we make this correspondence precise in Section~\ref{sec:bridge}.

\subsection{The qubit extremal as a rigid-body steady rotation}
\label{ssec:bridge-euler-arnold}

\begin{remark}
\label{rem:bridge-euler-arnold}
For $N=2$, $\mathfrak g=\mathfrak{su}(2)$, $H=SU(2)$, the geodesic equation of $(H,g^{(\hat\Omega)})$ is the Euler--Arnold equation of~\cite[eq.~(34)]{AcevedoFalco2026},
\[
\frac{\dd}{\dd t}\bigl(I_{\hat\Omega}\hat A(t)\bigr) = \bigl[I_{\hat\Omega}\hat A(t),\, \hat A(t)\bigr], \qquad \langle\hat A,\hat B\rangle_{\hat\Omega}=\langle I_{\hat\Omega}\hat A,\hat B\rangle_{\mathrm{HS}},
\]
which, in the coordinates of Section~\ref{ssec:aniso} (via $\hat A=-iH=-i\mathbf h\cdot(\boldsymbol\sigma/2)$), is exactly the classical Euler top equation $\dot{\mathbf L}=\mathbf L\times\mathbf h$, $\mathbf L:=G\mathbf h$, with $G=\operatorname{diag}(\alpha_x,\alpha_y,\alpha_z)$ playing the role of the principal moments of inertia. The optimal control $\mathbf h^*(t)=(0,\pi/T,0)$ of Proposition~\ref{prop:qubit-optimal} is constant along a principal axis of $G$, so $\mathbf L^*=G\mathbf h^*=\alpha_y\mathbf h^*$ is parallel to $\mathbf h^*$ and $\dot{\mathbf L}^*=\mathbf L^*\times\mathbf h^*\equiv0$: the curve $\gamma^*(t)=e^{-iH^*t}$ solves the Euler--Arnold equation identically, and is therefore not merely a Pontryagin extremal of the state-constrained problem, as verified directly in Proposition~\ref{prop:qubit-pmp}, but a genuine geodesic of the full right-invariant metric $(SU(2),g^{(\hat\Omega)})$ — the transport-theoretic counterpart of a torque-free rigid body in steady rotation about a principal axis.
\end{remark}

\begin{remark}[Sharpening the local-optimality question of Section~\ref{ssec:qubit-scope}]
\label{rem:bridge-local-min}
Proposition~\ref{prop:bell-clairaut} computes, for the axially symmetric case $\alpha_{B'}=\alpha_{C'}$, exactly the Riemannian-submersion (O'Neill quotient) metric that $g^{(\hat\Omega)}$ induces on the orbit $S^2\cong H/K_{\rho_0}$: the cometric $\|v\|_*^2=\min\{\mathbf h^{\mathsf T}G\mathbf h : \mathbf h\times\mathbf r=v\}$ of~\eqref{eq:bell-induced-metric} is precisely the horizontal-lift minimization defining an O'Neill quotient metric, with the fiber directions $\{t\mathbf r\}$ playing the role of the vertical space. Proposition~\ref{prop:bell-clairaut} is thus itself an instance of the bridge of this section. The open question of Section~\ref{ssec:qubit-scope}(i) is the same computation \emph{without} axial symmetry: its natural object is the second variation (Jacobi equation) of this quotient metric on $S^2$, not directly the Euler-top stability of Remark~\ref{rem:bridge-euler-arnold} on the total space $H=SU(2)$. The two are related by O'Neill's formulas for the curvature of the base of a Riemannian submersion in terms of the total space and the fibers, but need not coincide a priori. Whether the classical intermediate-axis stability of Remark~\ref{rem:bridge-euler-arnold} controls the quotient-level local minimality of Proposition~\ref{prop:qubit-optimal} is left for future work.
\end{remark}

\section{Entangling examples: preparing Bell and GHZ states}
\label{sec:bell-example}

The single-qubit example of Section~\ref{sec:qubit-example} involves only inner derivations generated by local operators. We now exhibit a genuinely entangling instance of the same framework: an $\mathfrak{su}(2)$ subalgebra of $M_4(\mathbb C) = M_2(\mathbb C)\otimes M_2(\mathbb C)$, built from two-qubit Pauli strings, whose associated transport problem connects the product state $|00\rangle$ to the Bell state $(|00\rangle+|11\rangle)/\sqrt2$. Unlike Section~\ref{sec:qubit-example}, this example admits a proof of global optimality within its invariant control subsystem, thanks to an extra symmetry absent from the generic anisotropic single-qubit problem. Section~\ref{sec:ghz-example} below generalizes the construction to $N$ qubits and the GHZ state, as a restricted-path \emph{upper bound} rather than a second global-optimality theorem.

\subsection{An \texorpdfstring{$\mathfrak{su}(2)$}{su(2)} subalgebra of entangling generators}
\label{ssec:bell-subalgebra}

Let $\M=M_2(\mathbb C)\otimes M_2(\mathbb C)$ and set
\begin{equation}
\label{eq:bell-generators}
A' = Z\otimes I, \qquad B' = X\otimes X, \qquad C' = Y\otimes X.
\end{equation}
$A'$ is a local (Pauli weight $1$) generator, while $B',C'$ are genuinely entangling (Pauli weight $2$).

\begin{lemma}
\label{lem:bell-su2}
$[A',B']=2iC'$, $\;[B',C']=2iA'$, $\;[C',A']=2iB'$.
\end{lemma}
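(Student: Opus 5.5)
The plan is a direct computation that uses only the tensor-product rule for commutators and the single-qubit Pauli algebra. For each commutator we apply $(P\otimes Q)(R\otimes S)=PR\otimes QS$. The only single-qubit facts needed are the products $ZX=iY$, $XZ=-iY$, $XY=iZ$, $YX=-iZ$, $YZ=iX$, $ZY=-iX$, together with $X^2=Y^2=Z^2=I$. Equivalently, we use $[\sigma_a,\sigma_b]=2i\varepsilon_{abc}\sigma_c$, the identity already invoked in Section~\ref{ssec:reduction}, together with $\{\sigma_a,\sigma_b\}=2\delta_{ab}I$.

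The cleanest route is a general formula, valid for any single-qubit operators $P,Q,R,S$:
\[
[P\otimes Q,\,R\otimes S]=\tfrac12\bigl([P,R]\otimes\{Q,S\}+\{P,R\}\otimes[Q,S]\bigr).
\]
The three cases then go as follows.

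\emph{First, $[A',B']$.} Take $P=Z$, $Q=I$, $R=S=X$. Since $[I,X]=0$ and $\{I,X\}=2X$, we get $[A',B']=[Z,X]\otimes X=2iY\otimes X=2iC'$.

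\emph{Second, $[B',C']$.} Take $P=X$, $R=Y$, $Q=S=X$. Since $[X,X]=0$ and $\{X,X\}=2I$, we get $[B',C']=[X,Y]\otimes I=2iZ\otimes I=2iA'$. This is the one step where the second tensor factor genuinely collapses, through $X\cdot X=I$. It explains why both entangling generators share the factor $X$ on qubit two: their commutator then lands back on the local generator $A'$.

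\emph{Third, $[C',A']$.} Take $P=Y$, $Q=X$, $R=Z$, $S=I$. This gives $[C',A']=[Y,Z]\otimes X=2iX\otimes X=2iB'$.

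There is no real obstacle here. The only point requiring care is sign bookkeeping in the cyclic order $(Z,X,Y)$ on the first factor. The triple $(A',B',C')$ has first factors $(Z,X,Y)$, which is a cyclic relabeling of $(x,y,z)$, so $\varepsilon$ gives $+1$ in every case.

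As a consequence, $(A'/2,B'/2,C'/2)$ satisfy exactly the $\mathfrak{su}(2)$ relations of $(\sigma_x/2,\sigma_y/2,\sigma_z/2)$. Hence $\mathfrak g=\operatorname{span}_{\mathbb R}\{A',B',C'\}\cong\mathfrak{su}(2)$, which places this example in the setting of Theorem~\ref{thm:orbit-geometry}.
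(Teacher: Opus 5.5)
Your proof is correct and is essentially the paper's argument: a direct factor-by-factor computation using $(P\otimes Q)(R\otimes S)=PR\otimes QS$, $[\sigma_a,\sigma_b]=2i\varepsilon_{abc}\sigma_c$ and $X^2=I$. Your identity $[P\otimes Q,R\otimes S]=\tfrac12\bigl([P,R]\otimes\{Q,S\}+\{P,R\}\otimes[Q,S]\bigr)$ just packages that same step more systematically: in all three cases the second factors commute, so it collapses to $[P,R]\otimes QS$ exactly as in the paper, and your sign check via the cyclic order $(Z,X,Y)$ is accurate.
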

\begin{proof}
$A'$ and $B'$ differ only on the first tensor factor, so $[A',B']=[Z,X]\otimes X = 2iY\otimes X = 2iC'$. $B'$ and $C'$ agree on the second factor and $X^2=I$, so $[B',C']=[X,Y]\otimes X^2 = 2iZ\otimes I = 2iA'$. Finally $C'$ and $A'$ act as commuting operators factor-by-factor ($[Y,Z]$ on the first, $X$ acting trivially-commuting with $I$ on the second), so $[C',A']=[Y,Z]\otimes X = 2iX\otimes X = 2iB'$.
\end{proof}

Thus $\{A',B',C'\}$ generates an inner-derivation calculus of exactly the type of Example~\ref{ex:internal-derivations}: namely $\partial_j a=[X_j,a]$ with $X_j\in\{A'/2,B'/2,C'/2\}$. As a Lie algebra, it is isomorphic to $\mathfrak{su}(2)$.

\subsection{Reduction to an effective qubit}
\label{ssec:bell-reduction}

\begin{lemma}
\label{lem:bell-invariant}
$V=\operatorname{span}\{|00\rangle,|11\rangle\}$ is invariant under $A',B',C'$, and on $V$, with $|00\rangle=|0\rangle_{\mathrm{eff}}$, $|11\rangle=|1\rangle_{\mathrm{eff}}$, these act exactly as $Z_{\mathrm{eff}}=A'$, $X_{\mathrm{eff}}=B'$, $Y_{\mathrm{eff}}=C'$.
\end{lemma}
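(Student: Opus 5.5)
The plan is a direct computation on the four computational basis vectors $|00\rangle,|01\rangle,|10\rangle,|11\rangle$, using only the single-qubit Pauli actions $X|b\rangle=|1-b\rangle$, $Y|b\rangle=i(-1)^b|1-b\rangle$, $Z|b\rangle=(-1)^b|b\rangle$ on each tensor factor.

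\emph{Action on the basis of $V$.} First I will apply each generator of~\eqref{eq:bell-generators} to $|00\rangle$ and $|11\rangle$.
\begin{itemize}
\item[] $A'=Z\otimes I$ gives $A'|00\rangle=|00\rangle$ and $A'|11\rangle=-|11\rangle$, since it only reads the parity of the first bit.
\item[] $B'=X\otimes X$ flips both bits, so $B'|00\rangle=|11\rangle$ and $B'|11\rangle=|00\rangle$.
\item[] $C'=Y\otimes X$ also flips both bits, and the first factor contributes the phase $\pm i$. This gives $C'|00\rangle=i|11\rangle$ and $C'|11\rangle=-i|00\rangle$.
\end{itemize}
Each image lies in $V$, which proves invariance. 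This works because every generator either acts diagonally or flips both bits simultaneously, and so preserves the parity subspace $\{|b_1b_2\rangle: b_1=b_2\}$.

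\emph{Identification with the effective Paulis.} Next I will write the restrictions as $2\times2$ matrices in the ordered basis $(|0\rangle_{\mathrm{eff}},|1\rangle_{\mathrm{eff}})=(|00\rangle,|11\rangle)$.
\begin{itemize}
\item[] $A'|_V$ has matrix $\operatorname{diag}(1,-1)=Z_{\mathrm{eff}}$.
\item[] $B'|_V$ has matrix $\begin{pmatrix}0&1\\1&0\end{pmatrix}=X_{\mathrm{eff}}$.
\item[] The columns of $C'|_V$ are $(0,i)^{\mathsf T}$ and $(-i,0)^{\mathsf T}$, so its matrix is $\begin{pmatrix}0&-i\\ i&0\end{pmatrix}=Y_{\mathrm{eff}}$.
\end{itemize}

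\emph{Consistency check.} The restricted operators must satisfy the same relations as in Lemma~\ref{lem:bell-su2}, namely $[Z,X]=2iY$ and its cyclic permutations. These agree with $[A',B']=2iC'$ and its cyclic permutations, so the identification is a Lie-algebra isomorphism on $V$ and not merely a set of matching matrices.

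\emph{Main obstacle.} The only genuine pitfall is the phase convention for $Y$ in $C'$. A sign error there would produce $-Y_{\mathrm{eff}}$ instead of $Y_{\mathrm{eff}}$. I will guard against this by confirming $Y|0\rangle=i|1\rangle$ directly from the matrix of $Y$, and by the commutator check above.
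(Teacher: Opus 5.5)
Your proposal is correct and matches the paper's proof. The paper also computes the six images $A'|00\rangle,\dots,C'|11\rangle$ directly and reads off the $Z,X,Y$ action on the effective qubit. Your explicit $2\times2$ matrices and the commutator cross-check against Lemma~\ref{lem:bell-su2} are sound but not needed.
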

\begin{proof}
Direct computation: $A'|00\rangle=|00\rangle$, $A'|11\rangle=-|11\rangle$; $B'|00\rangle=|11\rangle$, $B'|11\rangle=|00\rangle$; $C'|00\rangle=i|11\rangle$, $C'|11\rangle=-i|00\rangle$, matching the action of $Z,X,Y$ on $|0\rangle_{\mathrm{eff}},|1\rangle_{\mathrm{eff}}$.
\end{proof}

Restricted to density matrices supported on $V$, transport under $H(t)=\tfrac12\mathbf h(t)\cdot(B',C',A')$ is governed by the Bloch equation~\eqref{eq:bloch-eq}, with effective Bloch vector $\mathbf r_{\mathrm{eff}}=(r_{B'},r_{C'},r_{A'})$.

\subsection{Complexity metric, target, and explicit trajectory}
\label{ssec:bell-target}

Assign complexity weights by Pauli weight, $\alpha_P = q^{|P|-1}$ for a cost parameter $q>0$: since $A'$ has weight $1$ and $B',C'$ have weight $2$,
\begin{equation}
\label{eq:bell-weights}
\alpha_{A'}=1, \qquad \alpha_{B'}=\alpha_{C'}=q, \qquad G_{\mathrm{eff}}=\operatorname{diag}(q,q,1) \text{ in }(X_{\mathrm{eff}},Y_{\mathrm{eff}},Z_{\mathrm{eff}})\text{ order}.
\end{equation}
The initial state $\rho_0=|00\rangle\langle00|$ has Bloch vector $\mathbf r_0=(0,0,1)$ (the pole). The target is the Bell state $\rho_1=\tfrac12(|00\rangle+|11\rangle)(\langle00|+\langle11|)$, with Bloch vector $\mathbf r_1=(1,0,0)$: the $+1$-eigenstate of $X_{\mathrm{eff}}=B'$, an equatorial, not antipodal, point.

Within the fixed-axis family of Section~\ref{sec:qubit-example}, the unique axis carrying $\mathbf r_0$ to $\mathbf r_1$ in a quarter turn is $Y_{\mathrm{eff}}=C'$, giving the transporting control
\begin{equation}
\label{eq:bell-optimal-control}
\mathbf h^*(t) = \Bigl(0,\frac{\pi}{2T},0\Bigr), \qquad H^*_{\mathrm{opt}} = \frac{\pi}{4T}\,C' = \frac{\pi}{4T}\,Y\otimes X,
\end{equation}
with $\mathbf r_{\mathrm{eff}}^*(t) = (\sin\omega t,0,\cos\omega t)$, $\omega=\pi/(2T)$, and cost
\begin{equation}
\label{eq:bell-cost}
\Acal_T^{\min} = \frac{q\pi^2}{8T}, \qquad d_G(\rho_0,\rho_1) = \frac{\pi}{2}\sqrt q.
\end{equation}

\subsection{Global optimality via Clairaut's relation}
\label{ssec:bell-clairaut}

Unlike Proposition~\ref{prop:qubit-optimal}, a global optimality proof within the invariant control subsystem is possible here: $\alpha_{B'}=\alpha_{C'}$, i.e.\ the two entangling directions carry \emph{equal} weight. This makes $G_{\mathrm{eff}}$ invariant under rotations about the $Z_{\mathrm{eff}}$-axis, turning the induced metric on the Bloch sphere into a genuine surface-of-revolution metric, for which Clairaut's relation applies.

\begin{proposition}
\label{prop:bell-clairaut}
For a tangent vector $v$ at $\mathbf r_{\mathrm{eff}}=(\sin\theta\cos\psi,\sin\theta\sin\psi,\cos\theta)$ (spherical coordinates about the $Z_{\mathrm{eff}}$-axis), the induced quotient cometric
\[
\|v\|_*^2 := \min\bigl\{\mathbf h^{\mathsf T}G_{\mathrm{eff}}\mathbf h \;:\; \mathbf h\times\mathbf r_{\mathrm{eff}} = v\bigr\}
\]
is diagonal in $(\theta,\psi)$ and equals
\begin{equation}
\label{eq:bell-induced-metric}
ds^2 = q\,d\theta^2 + \frac{q\sin^2\theta}{q\sin^2\theta+\cos^2\theta}\,d\psi^2.
\end{equation}
\end{proposition}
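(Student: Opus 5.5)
The plan is to evaluate the constrained minimum defining $\|v\|_*^2$ in closed form. It is a quadratic minimization over an affine line, so it is computed by a Schur complement. I will do this in the moving orthonormal frame adapted to the spherical coordinates, and then read off the coefficients $E,F,C$ of $d\theta^2$, $d\psi^2$ and $d\theta\,d\psi$.

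\emph{Step 1 (solution set of the constraint).} Fix $\mathbf r:=\mathbf r_{\mathrm{eff}}$ with $\|\mathbf r\|=1$ and a tangent vector $v\perp\mathbf r$. The map $\mathbf h\mapsto\mathbf h\times\mathbf r$ has kernel $\mathbb R\mathbf r$. The identity $(\mathbf r\times v)\times\mathbf r=v$ holds for $v\perp\mathbf r$, so $\mathbf h_0:=\mathbf r\times v$ is a particular solution. The full solution set is therefore $\{\mathbf h_0+t\mathbf r:t\in\mathbb R\}$; this is the vertical fibre of the quotient, as in Remark~\ref{rem:bridge-local-min}. Minimizing the strictly convex quadratic $t\mapsto(\mathbf h_0+t\mathbf r)^{\mathsf T}G_{\mathrm{eff}}(\mathbf h_0+t\mathbf r)$ gives
\[
\|v\|_*^2=\mathbf h_0^{\mathsf T}G_{\mathrm{eff}}\mathbf h_0-\frac{(\mathbf r^{\mathsf T}G_{\mathrm{eff}}\mathbf h_0)^2}{\mathbf r^{\mathsf T}G_{\mathrm{eff}}\mathbf r}.
\]

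\emph{Step 2 (frame computations).} Introduce the frame
\[
e_\theta=\partial_\theta\mathbf r=(\cos\theta\cos\psi,\cos\theta\sin\psi,-\sin\theta),\qquad e_\psi=(-\sin\psi,\cos\psi,0),
\]
so that $\partial_\psi\mathbf r=\sin\theta\,e_\psi$. Then $\mathbf r\times e_\theta=e_\psi$ and $\mathbf r\times e_\psi=-e_\theta$. For $v=a\,\partial_\theta\mathbf r+b\,\partial_\psi\mathbf r$ this gives $\mathbf h_0=a\,e_\psi-b\sin\theta\,e_\theta$.

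With $G_{\mathrm{eff}}=\operatorname{diag}(q,q,1)$, I will record the six scalars:
\[
e_\psi^{\mathsf T}Ge_\psi=q,\qquad e_\theta^{\mathsf T}Ge_\theta=q\cos^2\theta+\sin^2\theta,\qquad e_\psi^{\mathsf T}Ge_\theta=0,
\]
\[
\mathbf r^{\mathsf T}G\mathbf r=q\sin^2\theta+\cos^2\theta,\qquad \mathbf r^{\mathsf T}Ge_\psi=0,\qquad \mathbf r^{\mathsf T}Ge_\theta=(q-1)\sin\theta\cos\theta.
\]
The vanishing of the two mixed terms involving $e_\psi$ is exactly where the axial symmetry $\alpha_{B'}=\alpha_{C'}$ enters.

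\emph{Step 3 (assembly).} Substituting, $\mathbf r^{\mathsf T}G\mathbf h_0=-b(q-1)\sin^2\theta\cos\theta$ involves only $b$. Also $\mathbf h_0^{\mathsf T}G\mathbf h_0=qa^2+b^2\sin^2\theta\,(q\cos^2\theta+\sin^2\theta)$ has no $ab$ term, so the cometric is diagonal with $E=q$. The $b^2$ coefficient is
\[
\sin^2\theta\,\frac{(q\cos^2\theta+\sin^2\theta)(q\sin^2\theta+\cos^2\theta)-(q-1)^2\sin^2\theta\cos^2\theta}{q\sin^2\theta+\cos^2\theta}.
\]
The remaining step is to simplify this numerator. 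Expanding, it equals $q(\sin^2\theta+\cos^2\theta)^2=q$, which yields $F=q\sin^2\theta/(q\sin^2\theta+\cos^2\theta)$ and hence~\eqref{eq:bell-induced-metric}.

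I expect no real obstacle. The only points needing care are the sign and orientation conventions in $\mathbf r\times e_\theta$, $\mathbf r\times e_\psi$, and checking that the minimum is attained, which holds because $\mathbf r^{\mathsf T}G\mathbf r>0$. As sanity checks I will verify that $q=1$ gives the round metric, and that $\theta\to0$ gives $F\sim q\theta^2$, so the metric is smooth at the pole.
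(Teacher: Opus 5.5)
Your proposal is correct and follows the paper's own proof. Both parametrize the fibre as $\mathbf r\times v+t\,\mathbf r$, apply the same Schur-complement formula, and then evaluate on the spherical frame. The only differences are that you do the frame computation at general $\psi$ (so the axial symmetry shows up as the vanishing of the $e_\psi$ mixed terms) rather than reducing to $\psi=0$ by rotational invariance, and that you write out the algebra the paper leaves as a ``direct computation,'' including the simplification of the numerator to $q$.
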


\begin{proof}
For $v\in T_{\mathbf r_{\mathrm{eff}}}S^2$, the fiber $\{\mathbf h:\mathbf h\times\mathbf r_{\mathrm{eff}}=v\}$ equals $\mathbf r_{\mathrm{eff}}\times v + t\,\mathbf r_{\mathrm{eff}}$, $t\in\mathbb R$ (since $\mathbf r_{\mathrm{eff}}\times(\mathbf r_{\mathrm{eff}}\times v)=-v$ for $\|\mathbf r_{\mathrm{eff}}\|=1$, $v\perp\mathbf r_{\mathrm{eff}}$). Minimizing $\mathbf h^{\mathsf T}G_{\mathrm{eff}}\mathbf h$ over $t$ gives
\[
\|v\|_*^2 = (\mathbf r_{\mathrm{eff}}\times v)^{\mathsf T}G_{\mathrm{eff}}(\mathbf r_{\mathrm{eff}}\times v) - \frac{\bigl[(\mathbf r_{\mathrm{eff}}\times v)^{\mathsf T}G_{\mathrm{eff}}\mathbf r_{\mathrm{eff}}\bigr]^2}{\mathbf r_{\mathrm{eff}}^{\mathsf T}G_{\mathrm{eff}}\mathbf r_{\mathrm{eff}}}.
\]
Evaluating on the orthonormal frame $\hat\theta=\partial_\theta\mathbf r_{\mathrm{eff}}$, $\hat\psi=\partial_\psi\mathbf r_{\mathrm{eff}}/\sin\theta$ — a direct computation, carried out at $\psi=0$ without loss of generality since $G_{\mathrm{eff}}$ and $\mathbf r_{\mathrm{eff}}^{\mathsf T}G_{\mathrm{eff}}\mathbf r_{\mathrm{eff}}$ are invariant under simultaneous rotation of $(\mathbf r_{\mathrm{eff}},\mathbf h)$ about $Z_{\mathrm{eff}}$ — gives $\|\hat\theta\|_*^2=q$, cross term $0$, and $\|\hat\psi\|_*^2 = q/(q\sin^2\theta+\cos^2\theta)$, from which~\eqref{eq:bell-induced-metric} follows since the $\psi$-component of $v$ has magnitude $\sin\theta\,\dot\psi$.
\end{proof}

By definition of $\|\cdot\|_*$, for any fixed trajectory $\mathbf r_{\mathrm{eff}}(t)$ the cheapest control driving it satisfies $\tfrac12\mathbf h(t)^{\mathsf T}G_{\mathrm{eff}}\mathbf h(t) = \tfrac12\|\dot{\mathbf r}_{\mathrm{eff}}(t)\|_*^2$ pointwise; since cost and constraint are both pointwise in time, minimizing $\Acal_T$ jointly over $(\mathbf r_{\mathrm{eff}},\mathbf h)$ is equivalent to minimizing the Riemannian energy $\tfrac12\int_0^T\|\dot{\mathbf r}_{\mathrm{eff}}\|_*^2\,dt$ over curves $\mathbf r_{\mathrm{eff}}(\cdot)$ alone.

\begin{theorem}[Global optimality within the invariant $\mathfrak{su}(2)$ subsystem]
\label{thm:bell-global}
Among all admissible Hamiltonians of the restricted form $H(t)=\tfrac12\mathbf h(t)\cdot(B',C',A')$ — i.e.\ within the invariant $\mathfrak{su}(2)$-generated control subsystem of Section~\ref{sec:bell-example}, not the full $15$-dimensional space of two-qubit Hamiltonians — the trajectory~\eqref{eq:bell-optimal-control} is the unique global minimizer of $\Acal_T$ transporting $\rho_0$ to $\rho_1$.
\end{theorem}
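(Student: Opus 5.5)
The plan is to reduce the statement to a comparison of lengths for the surface-of-revolution metric~\eqref{eq:bell-induced-metric}. Along the way I would track equality cases so as to obtain uniqueness. I would not integrate the full geodesic equations; Clairaut's relation serves only as the structural reason the metric is diagonal and $\psi$-independent.

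\emph{Step 1 (reduction to curves).} For any admissible control $\mathbf h$ with trajectory $\mathbf r_{\mathrm{eff}}(t)$, the remark following Proposition~\ref{prop:bell-clairaut} gives $\mathbf h(t)^{\mathsf T}G_{\mathrm{eff}}\mathbf h(t)\geq\|\dot{\mathbf r}_{\mathrm{eff}}(t)\|_*^2$ pointwise. The fiber $\mathbf r\times v+t\mathbf r$ is an affine line and $G_{\mathrm{eff}}\succ0$, so the minimization over $t$ is strictly convex. Hence equality holds iff $\mathbf h(t)$ equals the unique fiber minimizer $\mathbf h_{\min}(\mathbf r,\dot{\mathbf r})$ for a.e.\ $t$. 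Combining this with Cauchy--Schwarz yields
\[
\Acal_T[\mathbf h]\;\geq\;\tfrac12\int_0^T\|\dot{\mathbf r}_{\mathrm{eff}}\|_*^2\,\dd t\;\geq\;\frac{L_*(\mathbf r_{\mathrm{eff}})^2}{2T}.
\]
Equality in the second inequality holds iff $\|\dot{\mathbf r}_{\mathrm{eff}}\|_*$ is constant.

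\emph{Step 2 (length lower bound).} I would write the curve in the spherical coordinates of Proposition~\ref{prop:bell-clairaut}. On the open set $\{0<\theta<\pi\}$, \eqref{eq:bell-induced-metric} gives
\[
\|\dot{\mathbf r}\|_*=\sqrt{q\dot\theta^2+f(\theta)\dot\psi^2}\geq\sqrt q\,|\dot\theta|,\qquad f(\theta)=\frac{q\sin^2\theta}{q\sin^2\theta+\cos^2\theta}>0.
\]
At the poles $\psi$ is undefined. There I would use instead that $\theta(t)$ is a Lipschitz function on $S^2$ with $\|\dot{\mathbf r}\|_*\ge\sqrt q|\dot\theta|$ a.e.; this holds because $\|\hat\theta\|_*^2=q$ and the cross term vanishes. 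Since $\theta(0)=0$ and $\theta(T)=\pi/2$, we get $L_*\geq\sqrt q\int_0^T|\dot\theta|\,\dd t\geq\sqrt q\,\pi/2$. Therefore $\Acal_T\geq q\pi^2/(8T)$, which is attained by~\eqref{eq:bell-optimal-control} according to~\eqref{eq:bell-cost}.

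\emph{Step 3 (uniqueness, the main obstacle).} Equality throughout forces three conditions:
\begin{enumerate}[label=(\alph*)]
\item $\dot\theta\geq0$ a.e., with total variation exactly $\pi/2$. So $\theta$ is nondecreasing and never overshoots $\pi/2$.
\item $f(\theta)\dot\psi^2=0$ a.e.\ on $\{\theta>0\}$.
\item The speed is constant, and $\mathbf h=\mathbf h_{\min}$ a.e.
\end{enumerate}
The delicate point is the pole. By (c) the speed is the constant $\tfrac{\pi}{2T}\sqrt q>0$, and by (a) $\theta$ is nondecreasing, so the curve leaves $\theta=0$ immediately and stays in $(0,\pi/2]$ for all $t>0$. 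On that region $f>0$, so $\dot\psi=0$ and $\psi$ is constant. The endpoint $\mathbf r_1=(1,0,0)$ then forces $\psi\equiv0$. Constant speed gives $\theta=\omega t$, so $\mathbf r_{\mathrm{eff}}=\mathbf r^*_{\mathrm{eff}}$. Finally, the fiber minimizer along the meridian is $(0,\omega,0)$ by the computation $E(\theta,0)=\alpha_y$ of Lemma~\ref{lem:qubit-meridian-geodesic} (with $\alpha_y=q$), which identifies $\mathbf h=\mathbf h^*$ a.e. I expect this bookkeeping — the coordinate singularity at the pole together with the a.e.\ equality cases — to be the only genuinely careful part of the proof.
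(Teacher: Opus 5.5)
Your proof is correct, but it does not follow the paper's argument.

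\textbf{The paper's route.} The paper works through necessary conditions. It takes an energy minimizer for \eqref{eq:bell-induced-metric} and uses the Euler--Lagrange equation in the cyclic variable $\psi$ to get the Clairaut integral $F(\theta)\dot\psi\equiv\mathrm{const}$. It argues that this constant vanishes because the curve starts at the pole, where $F\to0$, and deduces that the minimizer is a meridian. Only then does it minimize $\tfrac12\int_0^Tq\dot\theta^2\,\dd t$ by Cauchy--Schwarz.

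\textbf{Your route.} You give a calibration argument. Because $g_{\theta\theta}=q$ is constant and the cross term vanishes, $\sqrt q\,\theta$ is a distance function from the pole. Hence $\|\dot{\mathbf r}_{\mathrm{eff}}\|_*\ge\sqrt q\,|\dot\theta|$ along every admissible curve, and you read uniqueness off the three equality cases.

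\textbf{What each approach buys.} Your argument has three advantages.
\begin{itemize}
\item It needs no a priori existence of a minimizer. The paper's phrase ``along any energy minimizer'' tacitly relies on Theorem~\ref{thm:orbit-geometry}(ii) to make its uniqueness statement non-vacuous, and it also needs enough regularity to write Euler--Lagrange equations.
\item It handles the pole cleanly: monotonicity of $\theta$ plus constant positive speed rules out lingering at, or returning to, the pole, where $\psi$ could otherwise jump.
\item It yields $d_G(\rho_0,\rho_1)=\tfrac\pi2\sqrt q$ directly as a length bound.
\end{itemize}
The paper's argument instead exhibits an explicit conserved quantity tied to the axial symmetry, which is the mechanism it stresses in Remark~\ref{rem:bell-vs-qubit}. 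Your argument never actually invokes Clairaut's relation. It uses only the diagonal form with constant $g_{\theta\theta}$, which is itself a consequence of that symmetry, so your opening description of Clairaut's role is slightly off.

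\textbf{Two points to tighten.}
\begin{itemize}
\item At times when $\mathbf r_{\mathrm{eff}}(t)$ sits at a pole, the frame computation $\|\hat\theta\|_*^2=q$ is unavailable. The correct justification there is that $\theta(t)$ is absolutely continuous with $\dot\theta=0$ a.e.\ on the level sets $\{\theta=0\}$ and $\{\theta=\pi\}$, so the bound holds trivially.
\item The identification $\mathbf h_{\min}=(0,\omega,0)$ along the meridian follows most directly from the computation in the proof of Proposition~\ref{prop:bell-clairaut}. There $(\mathbf r_{\mathrm{eff}}\times\hat\theta)^{\mathsf T}G_{\mathrm{eff}}\mathbf r_{\mathrm{eff}}=0$, which forces the optimal fiber parameter to be zero.
\end{itemize}
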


\begin{proof}
By~\eqref{eq:bell-induced-metric}, $F(\theta):=q\sin^2\theta/(q\sin^2\theta+\cos^2\theta)$ vanishes only at $\theta=0,\pi$ and is strictly positive on $(0,\pi)$. Since $\psi$ does not appear explicitly in~\eqref{eq:bell-induced-metric}, the Euler--Lagrange equation for $\psi$ along any energy minimizer gives the first integral $F(\theta(t))\,\dot\psi(t)\equiv\text{const}$ on every interval on which $0<\theta(t)<\pi$. This constant is well defined despite the coordinate singularity at the pole: a finite-energy minimizer is continuous, starts at $\theta(0)=0$, and has $F(\theta(t))\to0$ as $t\downarrow0$ along any interval leaving the pole, while $F(\theta)\dot\psi$ is the conserved angular momentum of the smooth surface-of-revolution metric. Hence the conserved constant must be $0$. As $F(\theta)>0$ for $\theta\in(0,\pi)$, $\dot\psi(t)=0$ wherever $\theta(t)\neq0,\pi$; since the minimizer reaches $\theta=\pi/2$ and depends continuously on $t$, $\psi(t)$ is constant, equal to the value $\psi=0$ dictated by the target. The curve is therefore a meridian, $\mathbf r_{\mathrm{eff}}(t)=(\sin\theta(t),0,\cos\theta(t))$, along which~\eqref{eq:bell-induced-metric} reduces to $ds^2=q\,d\theta^2$. The energy functional becomes $\tfrac12\int_0^T q\dot\theta(t)^2\,dt$ with $\theta(0)=0$, $\theta(T)=\pi/2$, whose unique minimizer — by Cauchy--Schwarz, $\int_0^T\dot\theta^2\,dt \geq (\int_0^T\dot\theta\,dt)^2/T$ with equality iff $\dot\theta$ is constant — is $\theta(t)=\pi t/(2T)$, exactly $\mathbf r^*_{\mathrm{eff}}(t)$.
\end{proof}

\begin{remark}
\label{rem:bell-vs-qubit}
Theorem~\ref{thm:bell-global} is strictly stronger than what could be established for the fully anisotropic single-qubit problem of Section~\ref{sec:qubit-example} (cf.\ Section~\ref{ssec:qubit-scope}): the coincidence $\alpha_{B'}=\alpha_{C'}$, forced here by both entangling generators sharing the same Pauli weight, is precisely what turns the induced metric into one of revolution and makes Clairaut's relation available. Global optimality of Proposition~\ref{prop:qubit-optimal} in the fully anisotropic case $\alpha_x,\alpha_y,\alpha_z$ pairwise distinct remains open for exactly this reason.
\end{remark}

\subsection{Interpretation: entanglement has an irreducible cost}
\label{ssec:bell-interpretation}

The optimal trajectory never uses the local generator $A'$: $h_{A'}(t)\equiv0$ throughout, and by Theorem~\ref{thm:bell-global} this holds for the true minimizer among all controls generated by $\{A',B',C'\}$, not merely within a restricted family. Increasing $\alpha_{A'}$ therefore has \emph{no effect at all} on $\Acal_T^{\min}$ or $d_G(\rho_0,\rho_1)$: the entangling weight $q$ alone controls the complexity of preparing the Bell state from a product state, exactly as the physical intuition of Section~\ref{ssec:qubit-interpretation} anticipated. This is the mechanism we generalize to $N$ qubits next.

\subsection{Cost of a direct GHZ preparation path for \texorpdfstring{$N$}{N} qubits}
\label{sec:ghz-example}

We generalize the construction above to $N\geq2$ qubits, $\M=M_2(\mathbb C)^{\otimes N}$, and transport from $|0\rangle^{\otimes N}$ to the Greenberger--Horne--Zeilinger (GHZ) state $(|0\rangle^{\otimes N}+|1\rangle^{\otimes N})/\sqrt2$, along the direct analogue of the path just constructed for the Bell state. As the title indicates, and as Section~\ref{ssec:ghz-scope} makes precise, what follows is an \emph{exact cost for one specific restricted path}, not a complexity lower bound.

\subsubsection{The subalgebra for general \texorpdfstring{$N$}{N}}
\label{ssec:ghz-subalgebra}

Set
\begin{equation}
\label{eq:ghz-generators}
A' = Z\otimes I^{\otimes(N-1)}, \qquad B' = X^{\otimes N}, \qquad C' = Y\otimes X^{\otimes(N-1)}.
\end{equation}
$A'$ has Pauli weight $1$; $B'$ and $C'$ both have Pauli weight $N$.

\begin{lemma}
\label{lem:ghz-su2}
$[A',B']=2iC'$, $\;[B',C']=2iA'$, $\;[C',A']=2iB'$.
\end{lemma}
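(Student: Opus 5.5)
The plan is to follow the proof of Lemma~\ref{lem:bell-su2} verbatim, computing each commutator factor by factor with the Pauli relations $[Z,X]=2iY$, $[X,Y]=2iZ$, $[Y,Z]=2iX$ and $X^2=Y^2=Z^2=I$. The one general fact needed is this: if two tensor products $P_1\otimes Q$ and $P_2\otimes Q'$ have second factors that commute, and $QQ'=Q'Q$ is the common product, then
\[
[P_1\otimes Q,\,P_2\otimes Q'] = [P_1,P_2]\otimes QQ'.
\]
This holds because $(P_1\otimes Q)(P_2\otimes Q')-(P_2\otimes Q')(P_1\otimes Q)=P_1P_2\otimes QQ'-P_2P_1\otimes Q'Q$. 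In all three cases the first tensor factor is a single qubit and the remaining $N-1$ factors are $I^{\otimes(N-1)}$ or $X^{\otimes(N-1)}$. These commute with each other, so the identity applies directly.

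The three commutators then go as follows.
\begin{itemize}
\item For $[A',B']$, the tail factors are $I^{\otimes(N-1)}$ and $X^{\otimes(N-1)}$, whose product is $X^{\otimes(N-1)}$. This gives $[Z,X]\otimes X^{\otimes(N-1)}=2iY\otimes X^{\otimes(N-1)}=2iC'$.
\item For $[B',C']$, both tails are $X^{\otimes(N-1)}$, and their product is $(X^2)^{\otimes(N-1)}=I^{\otimes(N-1)}$. This gives $[X,Y]\otimes I^{\otimes(N-1)}=2iZ\otimes I^{\otimes(N-1)}=2iA'$.
\item For $[C',A']$, the tails are $X^{\otimes(N-1)}$ and $I^{\otimes(N-1)}$. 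This gives $[Y,Z]\otimes X^{\otimes(N-1)}=2iX^{\otimes N}=2iB'$.
\end{itemize}

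There is no real obstacle here. The only point that needs care is justifying the tensor-commutator identity: the tail operators must commute, so that no additional anticommutator term appears. This holds because every tail is built from $I$ and $X$ alone. At $N=2$ the three identities reduce exactly to Lemma~\ref{lem:bell-su2}, which is a useful consistency check.
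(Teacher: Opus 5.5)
Your proof is correct and follows the paper's argument: the paper also computes each commutator as $[P_1,P_2]\otimes(\text{product of the tails})$, using the Pauli relations and $X^2=I$ on sites $2,\dots,N$. Your explicit identity $[P_1\otimes Q,P_2\otimes Q']=[P_1,P_2]\otimes QQ'$ for $QQ'=Q'Q$ states precisely what the paper leaves implicit, and its wording there is loose (e.g.\ ``differ only on site~1'').
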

\begin{proof}
$A',B'$ differ only on site $1$: $[A',B']=[Z,X]\otimes X^{\otimes(N-1)}=2iY\otimes X^{\otimes(N-1)}=2iC'$. $B',C'$ agree as $X$ on sites $2,\dots,N$, and $X^2=I$ there: $[B',C']=[X,Y]\otimes I^{\otimes(N-1)}=2iZ\otimes I^{\otimes(N-1)}=2iA'$. Finally $C',A'$ act as commuting single-site operators on every site (site $1$: $[Y,Z]$; sites $2,\dots,N$: $X$ commuting trivially with $I$), giving $[C',A']=[Y,Z]\otimes X^{\otimes(N-1)}=2iX\otimes X^{\otimes(N-1)}=2iB'$.
\end{proof}

\begin{lemma}
\label{lem:ghz-invariant}
$V=\operatorname{span}\{|0\rangle^{\otimes N},|1\rangle^{\otimes N}\}$ is invariant under $A',B',C'$, and on $V$, with $|0\rangle^{\otimes N}=|0\rangle_{\mathrm{eff}}$, $|1\rangle^{\otimes N}=|1\rangle_{\mathrm{eff}}$, these act exactly as $Z_{\mathrm{eff}}=A'$, $X_{\mathrm{eff}}=B'$, $Y_{\mathrm{eff}}=C'$.
\end{lemma}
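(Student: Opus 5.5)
The proof is a direct computation on the two basis vectors $|0\rangle^{\otimes N}$ and $|1\rangle^{\otimes N}$. It mirrors Lemma~\ref{lem:bell-invariant}, which is the case $N=2$.

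First I would record the single-site actions:
\[
Z|0\rangle=|0\rangle,\quad Z|1\rangle=-|1\rangle,\quad X|0\rangle=|1\rangle,\quad X|1\rangle=|0\rangle,\quad Y|0\rangle=i|1\rangle,\quad Y|1\rangle=-i|0\rangle.
\]
Since each generator in~\eqref{eq:ghz-generators} is a tensor product of single-site operators, its action on a product basis vector factorizes site by site.

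The three generators then act as follows.
\begin{itemize}
\item[] For $A'=Z\otimes I^{\otimes(N-1)}$, only site $1$ matters. This gives $A'|0\rangle^{\otimes N}=|0\rangle^{\otimes N}$ and $A'|1\rangle^{\otimes N}=-|1\rangle^{\otimes N}$.
\item[] For $B'=X^{\otimes N}$, every site is flipped. This gives $B'|0\rangle^{\otimes N}=|1\rangle^{\otimes N}$ and $B'|1\rangle^{\otimes N}=|0\rangle^{\otimes N}$.
\item[] For $C'=Y\otimes X^{\otimes(N-1)}$, site $1$ contributes the phase $\pm i$ and a flip, and the remaining $N-1$ sites are flipped by $X$. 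This gives $C'|0\rangle^{\otimes N}=i|1\rangle^{\otimes N}$ and $C'|1\rangle^{\otimes N}=-i|0\rangle^{\otimes N}$.
\end{itemize}

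Every image lies in $V$, so $V$ is invariant under $A'$, $B'$ and $C'$. Comparing the resulting $2\times2$ matrices in the ordered basis $(|0\rangle_{\mathrm{eff}},|1\rangle_{\mathrm{eff}})$ with those of $Z$, $X$ and $Y$ shows that they coincide exactly. This yields $Z_{\mathrm{eff}}=A'|_V$, $X_{\mathrm{eff}}=B'|_V$ and $Y_{\mathrm{eff}}=C'|_V$.

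There is no real obstacle. The only point to check is that the phase in $C'$ comes solely from site $1$: the factor $X^{\otimes(N-1)}$ contributes no phase, so the sign convention matches $Y$ and not $-Y$.

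Finally, as a consistency check, the restricted commutators reproduce Lemma~\ref{lem:ghz-su2}. This is automatic, because restriction to an invariant subspace preserves commutators.
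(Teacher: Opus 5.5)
Your proposal is correct and takes the same approach as the paper, which simply states that the verification is the computation of Lemma~\ref{lem:bell-invariant} carried out site by site across all $N$ factors. Your basis-vector calculation makes that explicit, including the one detail the paper leaves implicit: the phase of $C'$ comes only from site $1$, so $C'|_V=Y_{\mathrm{eff}}$ rather than $-Y_{\mathrm{eff}}$.
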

\begin{proof}
Identical computation to Lemma~\ref{lem:bell-invariant}, applied factor-by-factor across all $N$ sites.
\end{proof}

\subsubsection{The generic result}
\label{ssec:ghz-result}

With $\alpha_P=q^{|P|-1}$ as in Section~\ref{ssec:bell-target}, $\alpha_{A'}=1$ and $\alpha_{B'}=\alpha_{C'}=q^{N-1}$: the axial symmetry $\alpha_{B'}=\alpha_{C'}$ that drove Theorem~\ref{thm:bell-global} holds for every $N$. Since $|0\rangle^{\otimes N}$ and the GHZ state are, respectively, the pole and the $X_{\mathrm{eff}}$-eigenstate of the effective qubit — identical boundary data to Section~\ref{sec:bell-example}, with $q$ replaced by $q^{N-1}$ — Proposition~\ref{prop:bell-clairaut} and Theorem~\ref{thm:bell-global} apply verbatim.

\begin{theorem}[GHZ transport, generic $N$]
\label{thm:ghz-generic}
For every $N\geq2$, within the transport problem generated by $\{A',B',C'\}$ of~\eqref{eq:ghz-generators}, the trajectory
\[
\mathbf h^*(t) = \Bigl(0,\frac{\pi}{2T},0\Bigr), \qquad H^*_{\mathrm{opt}} = \frac{\pi}{4T}\,Y\otimes X^{\otimes(N-1)},
\]
is the unique global minimizer of $\Acal_T$ transporting $\rho_0=|0\rangle^{\otimes N}\langle0|^{\otimes N}$ to the GHZ state $\rho_1$, with
\begin{equation}
\label{eq:ghz-cost}
\Acal_T^{\min} = \frac{q^{N-1}\pi^2}{8T}, \qquad d_G^{\{A',B',C'\}}(\rho_0,\rho_1) = \frac{\pi}{2}\,q^{(N-1)/2}.
\end{equation}
\end{theorem}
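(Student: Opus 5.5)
The plan is to reduce Theorem~\ref{thm:ghz-generic} to the Bell case and reuse the argument of Proposition~\ref{prop:bell-clairaut} and Theorem~\ref{thm:bell-global} with $q$ replaced by $Q:=q^{N-1}$.

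\emph{Step 1: algebraic reduction.} By Lemma~\ref{lem:ghz-su2}, the triple $\{A',B',C'\}$ of~\eqref{eq:ghz-generators} satisfies the $\mathfrak{su}(2)$ relations. By Lemma~\ref{lem:ghz-invariant}, $V=\operatorname{span}\{|0\rangle^{\otimes N},|1\rangle^{\otimes N}\}$ is invariant, and on $V$ these generators act as $Z_{\mathrm{eff}},X_{\mathrm{eff}},Y_{\mathrm{eff}}$. Both endpoints are supported on $V$, and every control $H(t)=\tfrac12\mathbf h(t)\cdot(B',C',A')$ leaves $V$ invariant. Hence the Liouville--von Neumann flow~\eqref{eq:liouville} keeps $\rho_t$ supported on $V$. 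Writing $\rho_t$ through its effective Bloch vector, the dynamics is exactly~\eqref{eq:bloch-eq}. The endpoints are $\mathbf r_0=(0,0,1)$ and $\mathbf r_1=(1,0,0)$, since the GHZ state is the $+1$ eigenvector of $X_{\mathrm{eff}}=B'$.

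\emph{Step 2: cost identification.} The Pauli weights give $\alpha_{A'}=1$ and $\alpha_{B'}=\alpha_{C'}=Q$, so $G_{\mathrm{eff}}=\operatorname{diag}(Q,Q,1)$. The action~\eqref{eq:qubit-action} depends only on $\mathbf h$, so the optimization problem is literally the Bell problem with parameter $Q$. The proof of Proposition~\ref{prop:bell-clairaut} never used the specific value of $q$, only $q>0$ and axial symmetry. It therefore gives the induced metric
\[
ds^2=Q\,d\theta^2+\frac{Q\sin^2\theta}{Q\sin^2\theta+\cos^2\theta}\,d\psi^2 .
\]

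\emph{Step 3: pointwise reduction and optimality.} The pointwise reduction preceding Theorem~\ref{thm:bell-global} replaces the control problem by minimizing Riemannian energy on $S^2$. The Clairaut argument of Theorem~\ref{thm:bell-global} then applies verbatim: the conserved quantity vanishes because the curve starts at the pole, so the minimizer is the meridian $\psi\equiv0$. On the meridian the metric is $ds^2=Q\,d\theta^2$, and Cauchy--Schwarz gives the unique minimizer $\theta(t)=\pi t/(2T)$. In this parametrization $\mathbf h^*=(0,\pi/(2T),0)$, so $H^*=\tfrac{\pi}{4T}C'$.

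\emph{Step 4: values.} The action is $\tfrac12 Q(\pi/2T)^2 T=Q\pi^2/(8T)$. The length is $\tfrac{\pi}{2}\sqrt Q$, and~\eqref{eq:energy-length} relates the two, giving~\eqref{eq:ghz-cost}.

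The main obstacle I expect is uniqueness at the level of controls rather than trajectories. The fiber $\{\mathbf h:\mathbf h\times\mathbf r=v\}$ is a line, but the $G_{\mathrm{eff}}$-minimizer on it is unique because $G_{\mathrm{eff}}>0$. So the optimal $\mathbf h$ is determined by the optimal curve. I would check that this minimizer has zero $\mathbf r$-component along the meridian, which holds since $G_{\mathrm{eff}}\mathbf r\perp\mathbf r\times\dot{\mathbf r}$ there.

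The cost and distance statements must be read inside the $\{A',B',C'\}$ subsystem. They say nothing about the full $\mathfrak{su}(2^N)$ control problem, which is why the result is only an upper bound in the sense of Section~\ref{ssec:ghz-scope}.
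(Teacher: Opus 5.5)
Your proposal is correct and follows the paper's argument exactly: the paper likewise notes that $\alpha_{B'}=\alpha_{C'}=q^{N-1}$ preserves the axial symmetry and that the endpoints are again the pole and the $X_{\mathrm{eff}}$-eigenstate, so Proposition~\ref{prop:bell-clairaut} and Theorem~\ref{thm:bell-global} apply verbatim with $q$ replaced by $q^{N-1}$. Your extra check is a useful explicit justification of control-level uniqueness, which the paper leaves implicit: the minimizer on the fiber is unique because $G_{\mathrm{eff}}>0$, and along the meridian it has no $\mathbf r$-component because $G_{\mathrm{eff}}\mathbf r\perp\mathbf r\times\dot{\mathbf r}$.
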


\subsubsection{Scope: an upper bound for the full complexity distance}
\label{ssec:ghz-scope}

Theorem~\ref{thm:ghz-generic} solves the transport problem exactly within the $3$-dimensional subalgebra spanned by $\{A',B',C'\}$. The full algebra $\mathfrak{su}(2^N)$ has dimension $4^N-1$ and contains every other Pauli string of weight $2,\dots,N-1$, excluded from~\eqref{eq:ghz-generators}. Allowing these additional directions can only lower the achievable cost, so~\eqref{eq:ghz-cost} gives a rigorous \emph{upper bound} for the true noncommutative complexity distance of Definition~\ref{def:Wdist}, penalized by the full Pauli-weight metric $\alpha_P=q^{|P|-1}$ on all of $\mathfrak{su}(2^N)$:
\begin{equation}
\label{eq:ghz-upper-bound}
d_G(\rho_0,\rho_1) \;\leq\; \frac{\pi}{2}\,q^{(N-1)/2}.
\end{equation}
Whether~\eqref{eq:ghz-upper-bound} is tight — i.e.\ whether cheaper paths exist using intermediate-weight Pauli strings — is a genuine open question that we do not resolve here.

\subsubsection{Interpretation: an exponential upper bound under Pauli-weight penalties}
\label{ssec:ghz-interpretation}

For $q>1$, the exact cost~\eqref{eq:ghz-cost} of the direct $N$-body preparation path grows exponentially in $N$, giving, via~\eqref{eq:ghz-upper-bound}, an exponentially growing \emph{upper bound} on the true complexity distance $d_G(\rho_0,\rho_1)$. We stress again, since this is easy to overstate: this is not a proof that preparing an $N$-qubit GHZ state genuinely requires exponential complexity, since~\eqref{eq:ghz-upper-bound} could in principle be beaten by paths through intermediate-weight Pauli strings (Section~\ref{ssec:ghz-scope}); no matching lower bound is established here. The result is best read as an exactly computed data point — rather than a general claim about GHZ-state complexity — that is at least qualitatively consistent with the expectation, familiar from Nielsen-complexity and circuit-complexity approaches to entangled-state preparation~\cite{NielsenScience2006,Nielsen2006QIC,BrownSusskind2019}, that genuinely multipartite entanglement carries a weight-dependent cost. Unlike the heuristic arguments typical of that literature, Theorem~\ref{thm:ghz-generic} identifies an exactly solvable sub-problem — isomorphic in closed form to the single-qubit problem of Section~\ref{sec:qubit-example} — for which the cost of this specific, restricted preparation path is computed exactly rather than estimated, within the noncommutative transport formalism of Sections~\ref{sec:continuity}--\ref{sec:complexity}.

\section{Discussion}
\label{sec:discussion-main}

We close with the dissipative (Lindblad) sector, kept deliberately brief since it is a proposal rather than a closed result (Section~\ref{sec:lindblad}), and with a precise list of open questions for each of the four main results (Section~\ref{sec:discussion}).

\subsection{Lindblad dynamics as an entropy gradient flow}
\label{sec:lindblad}

When the dynamics of observables is generated by a quantum Markov semigroup, $\Lcal_t = \Lcal$ (time-independent generator, in GKSL form), equation~\eqref{eq:dual-density} reads
\begin{equation}
\label{eq:lindblad}
\dot\rho_t = \Lcal_*(\rho_t), \qquad
\Lcal_*(\rho) = -i[H,\rho] + \sum_k \Bigl( L_k \rho L_k^* - \tfrac12\{L_k^* L_k,\rho\} \Bigr).
\end{equation}

\begin{theorem}[Carlen--Maas, Wirth; informal statement]
\label{thm:gradient-flow}
Let $\Lcal_*$ be the dual generator of a quantum Markov semigroup satisfying, with respect to a faithful reference state $\sigma$, the specific GNS-symmetry (detailed-balance) assumptions of Carlen and Maas in finite dimensions~\cite{CarlenMaas2017,CarlenMaas2020} — not an arbitrary notion of detailed balance among the several inequivalent ones in the literature (GNS-, KMS-, or BKM-symmetry all give, in general, different transport geometries). Then there exist a differential calculus $(\Hcal,\partial)$, induced by a Dirichlet form associated with $\Lcal$ in the sense of Cipriani--Sauvageot~\cite{CiprianiSauvageot2003}, and a transport metric $W_\partial$ built as in Section~\ref{sec:action}, such that $(\rho_t)_{t\geq0}$ solves~\eqref{eq:lindblad} if and only if it is the metric gradient flow of the relative entropy $\Ent_\sigma$ with respect to $W_\partial$:
\begin{equation}
\label{eq:gradient-flow}
\dot\rho_t = -\grad_{W_\partial} \Ent_\sigma(\rho_t).
\end{equation}
\end{theorem}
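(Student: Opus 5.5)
The plan is to reconstruct the Carlen--Maas argument in the finite-dimensional setting $\M=M_n(\mathbb C)$ with $\tau=\operatorname{Tr}$, working on the interior $\mathfrak P_+=\{\rho>0,\ \tau(\rho)=1\}$. There $W_\partial$ is induced by a genuine Riemannian metric, so the identity~\eqref{eq:gradient-flow} reduces to a pointwise algebraic identity.

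\emph{Step 1 (structure of the generator).} By GNS-symmetry with respect to $\sigma$, I would invoke Alicki's structure theorem in the form used in~\cite{CarlenMaas2017}. It gives $\Lcal_*(\rho)=\sum_{j}\bigl(e^{-\omega_j/2}V_j\rho V_j^*-\tfrac12\{e^{\omega_j/2}V_j^*V_j,\rho\}\bigr)$, where:
\begin{itemize}
	\item each $V_j$ is an eigenoperator of the modular group, $\sigma V_j\sigma^{-1}=e^{-\omega_j}V_j$;
	\item the index set is closed under $V_j\mapsto V_j^*$, with $\omega_{j^*}=-\omega_j$;
	\item the Hamiltonian part vanishes, or commutes with $\sigma$ and is absorbed, since GNS-symmetry forces $\Lcal$ to be self-adjoint on $L^2(\sigma)$.
\end{itemize}
This is where the specific notion of detailed balance enters, and I would state it as the only input from the hypotheses.

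\emph{Step 2 (calculus and mobility).} Set $\Hcal=\bigoplus_j M_n(\mathbb C)$ and $\partial_j a=[V_j,a]$, so that $\partial^*$ is the direct sum of the adjoint commutators. Replace the plain logarithmic mean~\eqref{eq:rhohat-def} by its tilted version
\[
\rhohat_j(v)=\int_0^1 e^{\omega_j(1/2-s)}\rho^s v\rho^{1-s}\,\dd s,
\]
which is exactly the modular-frequency modification anticipated in Remark~\ref{rem:modular}.

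\emph{Step 3 (chain rule).} This is the key computation. One shows
\[
\rhohat_j\bigl(\partial_j(\log\rho-\log\sigma)\bigr)=e^{-\omega_j/2}V_j\rho-e^{\omega_j/2}\rho V_j .
\]
The proof uses the identity $\int_0^1\rho^s[A,\log\rho]\rho^{1-s}\,\dd s=[A,\rho]$, which follows by diagonalizing $\rho$ and using $\Lambda(x,y)(\log x-\log y)=x-y$ from~\eqref{eq:rhohat-matrix}. One also uses the intertwining $[V_j,\log\sigma]=\omega_jV_j$, which turns the tilt into the weights $e^{\mp\omega_j/2}$. Summing over $j$ and pairing $j$ with $j^*$ then yields
\[
\Lcal_*(\rho)=-\partial^*\bigl(\rhohat_\rho\,\partial(\log\rho-\log\sigma)\bigr).
\]
This is the continuity equation~\eqref{eq:continuity} with velocity $v=-\partial(\log\rho-\log\sigma)$. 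I expect this bookkeeping, namely getting the signs and $e^{\pm\omega/2}$ factors to match the GKSL form exactly, to be the main obstacle.

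\emph{Step 4 (identification as a gradient).} On $\mathfrak P_+$ the tangent vectors are $\dot\rho=\partial^*\rhohat_\rho\partial u$, with metric $g_\rho(\dot\rho_1,\dot\rho_2)=\langle\partial u_1,\rhohat_\rho\partial u_2\rangle$. The differential of $\Ent_\sigma$ is $\tau((\log\rho-\log\sigma)\dot\rho)=g_\rho\bigl(\partial^*\rhohat_\rho\partial(\log\rho-\log\sigma),\dot\rho\bigr)$. Hence
\[
\grad\Ent_\sigma=\partial^*\rhohat_\rho\partial(\log\rho-\log\sigma),
\]
and Step 3 gives~\eqref{eq:gradient-flow}.

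\emph{Step 5 (the ``if and only if'').} Both sides of~\eqref{eq:gradient-flow} are locally Lipschitz ODEs on $\mathfrak P_+$, and $e^{t\Lcal_*}$ maps faithful states to faithful states. Uniqueness of ODE solutions therefore gives the equivalence for faithful data. For non-faithful initial data I would pass to the metric (EVI) formulation of Wirth~\cite{Wirth2018} and use continuity up to the boundary, which is why the statement is flagged as informal.
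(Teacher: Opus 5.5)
You are following the Carlen--Maas argument itself, which is all the paper relies on: Theorem~\ref{thm:gradient-flow} is given no proof of its own and is deferred to \cite{CarlenMaas2017,CarlenMaas2020} and \cite{Wirth2018}. Your Step~2 also rightly sharpens the phrase ``built as in Section~\ref{sec:action}''. With the untilted mean~\eqref{eq:rhohat-def} the chain rule absorbs $\log\rho$ but not $\log\sigma$, so as soon as some $\omega_j\neq0$ the mobility must carry modular tilts.

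However, the bookkeeping you flagged is wrong as written, and Step~3 fails under your own definitions.

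First, your Step~1 generator is not trace-preserving. Its trace is $\sum_j(e^{-\omega_j/2}-e^{\omega_j/2})\operatorname{Tr}(V_j^*V_j\rho)$. For $\sigma=\operatorname{diag}(p,1-p)$ with $p\neq\tfrac12$, and the pair $V=|0\rangle\langle1|$, $V^*$, this equals $(e^{\omega/2}-e^{-\omega/2})(\rho_{00}-\rho_{11})$. The anticommutator term must carry $e^{-\omega_j/2}$, just as the jump term does.

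Second, with $\sigma V_j\sigma^{-1}=e^{-\omega_j}V_j$, the $(a,b)$ entry of $\partial_j(\log\rho-\log\sigma)$ in an eigenbasis of $\rho$ is $(\log\lambda_b-\log\lambda_a-\omega_j)(V_j)_{ab}$. Your tilt $e^{\omega_j(1/2-s)}$ multiplies it by $\bigl(e^{-\omega_j/2}\lambda_a-e^{\omega_j/2}\lambda_b\bigr)/\bigl(\log\lambda_a-\log\lambda_b-\omega_j\bigr)$, and the logarithmic factors do not cancel. For $\omega_j=1$, $(\lambda_a,\lambda_b)=(0.2,0.8)$ you get $\approx0.194$, not the claimed $e^{-1/2}(0.8)-e^{1/2}(0.2)\approx0.155$.

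The correct tilt is $e^{\omega_j(s-1/2)}$, i.e.\ $\rhohat_j=\Lambda(e^{\omega_j/2}L_\rho,e^{-\omega_j/2}R_\rho)$. The scalar identity $\Lambda(x,y)(\log x-\log y)=x-y$, applied to $x=e^{\omega_j/2}\lambda_a$ and $y=e^{-\omega_j/2}\lambda_b$ (where $\log x-\log y=-(\log\lambda_b-\log\lambda_a-\omega_j)$), then gives exactly $e^{-\omega_j/2}V_j\rho-e^{\omega_j/2}\rho V_j$. Even so, with $\partial_j^*=[V_j^*,\cdot\,]$ and the $j\leftrightarrow j^*$ pairing,
\[
-\partial^*\bigl(\rhohat_\rho\,\partial(\log\rho-\log\sigma)\bigr)=\sum_j e^{-\omega_j/2}\bigl(2V_j\rho V_j^*-\{V_j^*V_j,\rho\}\bigr),
\]
which is twice the corrected Step~1 generator. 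So either rescale $V_j\mapsto V_j/\sqrt2$, or use the normalization $\Lcal A=\sum_j\bigl(e^{-\omega_j/2}V_j^*[A,V_j]+e^{\omega_j/2}[V_j,A]V_j^*\bigr)$, whose dual matches the display exactly.

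Two smaller points.
\begin{enumerate}
\item In Alicki's normal form the Hamiltonian term is simply absent. A term $-i[H,\cdot\,]$ with non-scalar $H$ and $[H,\sigma]=0$ is skew in $L^2(\sigma)$ and conserves $\Ent_\sigma$, so it cannot be ``absorbed'' into a gradient flow.
\item Step~4's claim that every traceless self-adjoint $\dot\rho$ has the form $\partial^*\rhohat_\rho\partial u$ needs ergodicity, $\ker\partial=\mathbb C\mathbf 1$ (cf.\ Remark~\ref{rem:Wdist-pseudometric}). Without it, the metric and the gradient live only on the accessible leaf through $\rho$.
\end{enumerate}
With these corrections, your Steps~1--5 reproduce the cited proof.
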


See~\cite{CarlenMaas2014,CarlenMaas2017,CarlenMaas2020,CarlenMaas2020Corr} for the complete construction in the finite-dimensional $C^*$-algebra case, and \cite{Wirth2018} for the extension to general quantum Dirichlet forms on noncommutative $L^2$ spaces; see also the exposition in \cite{Carlen2024Lectures}. It is worth distinguishing conceptually between:
\[
\text{continuity equation \eqref{eq:continuity}} \;\longleftrightarrow\; \text{a family of \emph{admissible curves} in } \Mstar,
\]
\[
\text{Lindblad equation \eqref{eq:lindblad}} \;\longleftrightarrow\; \text{one \emph{particular gradient flow} with respect to } W_\partial.
\]
That is, \eqref{eq:continuity} describes the kinematic constraint shared by every admissible transport, whereas \eqref{eq:lindblad}, under detailed balance, selects the curve of steepest descent of $\Ent_\sigma$ within that family.

\subsubsection{A dilation-motivated complexity weight: scope and an explicit bound}
\label{ssec:lindblad-complexity}

This subsection is a proposal, not a closed result, and we keep it brief accordingly. Theorem~\ref{thm:gradient-flow} fixes the bimodule $(\Hcal,\partial)$ associated with a detailed-balance GKSL generator $\Lcal$, but leaves the complexity metric $G_\rho$ of Section~\ref{sec:complexity} unspecified on it ($W_\partial$ itself is the unpenalized case $G_\rho\equiv\mathrm{Id}$). In the Carlen--Maas construction, $\Hcal$ decomposes, on commutator-type generators, as a direct sum indexed by the Lindblad operators $\{L_k\}_{k=1}^m$ of~\eqref{eq:lindblad}, with $\partial_ka=[L_k,a]$ up to the symmetrization required by detailed balance~\cite{CarlenMaas2017,CiprianiSauvageot2003}. Rather than assigning weights to this basis ad hoc, as Sections~\ref{sec:qubit-example}--\ref{sec:bell-example} do for the Hamiltonian sector (e.g.\ the Pauli-weight rule~\eqref{eq:bell-weights}), we propose the following dilation-motivated alternative.

\begin{definition}[Dilation-derived complexity for a GKSL generator]
\label{def:lindblad-complexity}
Assume $L_k\neq0$ for every $k=1,\dots,m$ (discarding any $L_k=0$, which contributes no dynamics). For $\Lcal$ as in~\eqref{eq:lindblad}, define $G:=\operatorname{diag}(g_H,g_1,\dots,g_m)$ on $\Hcal=\Hcal_H\oplus\bigoplus_{k=1}^m\Hcal_k$ by $g_H:=1$, $g_k:=\|L_k\|_{op}^2>0$, so that $\sum_kg_k=\Gamma$, the dissipator scale of~\cite[eq.~(93)]{AcevedoFalco2026}. Each jump direction is thereby charged in proportion to its own contribution to the system-coupling content of a weak-coupling dilation of $\Lcal$ (Postulate~(P4) of~\cite[Def.~7]{AcevedoFalco2026}), rather than to an externally imposed notion of locality; strict positivity of every $g_k$ gives the ellipticity condition of Definition~\ref{def:compatible-complexity}.
\end{definition}

\begin{remark}[Representation-dependence of $\{L_k\}$]
\label{rem:gksl-representation}
This weight is tied to a fixed representation $\{L_k\}$ — the canonical generators singled out by the Cipriani--Sauvageot/Carlen--Maas construction of Theorem~\ref{thm:gradient-flow} — since $\{L_k\}$ is determined only up to $L_k\mapsto\sum_lU_{kl}L_l+c_kI$, under which neither $\|L_k\|_{op}^2$ nor $\Gamma=\sum_k\|L_k\|_{op}^2$ need be invariant. It is representation-independent only after restricting to isotropic blocks of equal weight, where this residual unitary freedom acts trivially.
\end{remark}

\begin{proposition}[Dilation-based bound on the cost of a Lindblad trajectory]
\label{prop:lindblad-bound}
Let $\Lcal$ be time-independent as in~\eqref{eq:lindblad}, and let $D^{(\lambda)}$ be a weak-coupling surrogate dilation of the associated semigroup $(\Lambda_t)_{t\geq0}$ in the sense of~\cite[eq.~(90), eq.~(96)]{AcevedoFalco2026}. Then, in the Hilbert--Schmidt specialization $\hat\Omega=\mathrm{Id}$, for every $t\geq0$ and $\lambda\to0$,
\begin{equation}
\label{eq:lindblad-bound}
G^{\mathrm{imp}}_{\mathrm{HS}}\bigl(\Lambda_t;D^{(\lambda)}\bigr) = \frac{t}{\sqrt{d_{\mathrm{tot}}^2-1}}\,\|H\|_{\mathrm{HS}} \;+\; O\bigl(\lambda t\sqrt\Gamma\bigr), \qquad d_{\mathrm{tot}}:=\dim(\Hcal_S\otimes\Hcal_E),
\end{equation}
with explicit two-sided bounds given by~\cite[Cor.~3]{AcevedoFalco2026}.
\end{proposition}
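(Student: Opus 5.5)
The plan is to reduce the statement to the cited corollary of~\cite{AcevedoFalco2026} and to verify only the two structural inputs it needs: the dilation geometry is the Hilbert--Schmidt one, and the generator splits into a Hamiltonian part and a dissipative part of size controlled by $\Gamma$.

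\emph{Step 1: set up the implementation cost.} I would first recall the definition of $G^{\mathrm{imp}}_{\mathrm{HS}}(\Lambda_t;D^{(\lambda)})$. In~\cite[eq.~(90), eq.~(96)]{AcevedoFalco2026} it is the unitary geometric complexity $G_{\hat\Omega}$ of~\eqref{eq:bridge-Gomega}, evaluated on the dilating unitary $U^{(\lambda)}_t=\exp(-it\,H^{(\lambda)}_{\mathrm{tot}})$ on $\Hcal_S\otimes\Hcal_E$. The total Hamiltonian has the weak-coupling form
\[
H^{(\lambda)}_{\mathrm{tot}}=H\otimes I_E+\lambda\sum_k\bigl(L_k\otimes B_k^*+L_k^*\otimes B_k\bigr)+I_S\otimes H_E .
\]
With $\hat\Omega=\mathrm{Id}$, the normalized inner product gives the Hilbert--Schmidt norm divided by $\operatorname{Tr}\hat\Omega=d_{\mathrm{tot}}^2-1$. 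I would check this normalization in the cited source, since it produces the prefactor $1/\sqrt{d_{\mathrm{tot}}^2-1}$ in~\eqref{eq:lindblad-bound}.

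\emph{Step 2: upper bound along the one-parameter path.} The path $s\mapsto U^{(\lambda)}_s$, $s\in[0,t]$, has constant right-invariant velocity $-iH^{(\lambda)}_{\mathrm{tot}}$. Its normalized Hilbert--Schmidt length is therefore
\[
\frac{t}{\sqrt{d_{\mathrm{tot}}^2-1}}\,\bigl\|H^{(\lambda)}_{\mathrm{tot}}\bigr\|_{\mathrm{HS}} .
\]
I would then split this norm by the triangle inequality. The system term gives $\|H\otimes I_E\|_{\mathrm{HS}}$, which after the identification of~\cite{AcevedoFalco2026} (traceless part and environment normalization) becomes the $\|H\|_{\mathrm{HS}}$ term. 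The interaction term is bounded by $\lambda\sum_k\|L_k\|\,\|B_k\|$, and Cauchy--Schwarz with $g_k=\|L_k\|_{op}^2$ turns this into $O(\lambda\sqrt\Gamma)$. The environment term is the part excluded by the surrogate convention of eq.~(96), which I expect to be built so that it does not contribute.

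\emph{Step 3: matching lower bound.} I expect this to be the main obstacle, because geodesic distance is generally shorter than the one-parameter path. For it I would not rederive anything and would invoke~\cite[Cor.~3]{AcevedoFalco2026} directly. The bi-invariance of the HS metric on $SU(d_{\mathrm{tot}})$ makes one-parameter subgroups geodesics, minimizing for $t$ below the injectivity radius. Together with the two-sided estimate of Cor.~3, this gives equality up to the $O(\lambda t\sqrt\Gamma)$ error. Combining Steps 2 and 3 then yields~\eqref{eq:lindblad-bound}.

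Since the statement explicitly imports the bounds of that corollary, the proof is essentially a verification of its hypotheses. These are a time-independent $\Lcal$, nonzero $L_k$ as in Definition~\ref{def:lindblad-complexity}, and the identification of $\Gamma$ with its eq.~(93). The only genuinely new bookkeeping is expressing the interaction remainder through $\Gamma=\sum_k\|L_k\|_{op}^2$.
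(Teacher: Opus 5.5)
Your bottom line matches the paper. You treat the proposition as a transcription of the cited results and check only the notational dictionary, and the paper's entire justification is that the statement is \cite[Lemma~3, Cor.~3]{AcevedoFalco2026} under $H\leftrightarrow\hat H_S$, $L_k\leftrightarrow\hat L_\alpha$. The trouble is the independent argument you build around that citation in Steps~1 and~3. In Step~1 you identify $G^{\mathrm{imp}}_{\mathrm{HS}}(\Lambda_t;D^{(\lambda)})$ with the geodesic complexity $G_{\hat\Omega}(U^{(\lambda)}_t)=D_{\hat\Omega}(\hat I,U^{(\lambda)}_t)$. In Step~3 you try to get the lower bound from minimality of one-parameter subgroups of a bi-invariant metric below the cut time. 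That reading is incompatible with the statement. $D_{\hat\Omega}(\hat I,\cdot)$ is bounded by the diameter of the compact group $SU(d_{\mathrm{tot}})$, whereas the right-hand side of \eqref{eq:lindblad-bound} tends to $t\|H\|_{\mathrm{HS}}/\sqrt{d_{\mathrm{tot}}^2-1}$ as $\lambda\to0$ and is unbounded in $t$. Under your interpretation the claim would be false for large $t$ whenever $H\neq0$. Your Step~3, which by your own account works only up to the cut point, therefore cannot deliver the asserted \emph{for every $t\geq0$}.

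The only reading consistent with a formula linear in $t$ for all $t$ is that $G^{\mathrm{imp}}$ is the implementation cost of the given dilation. That is, it is the $\hat\Omega$-length of the path $s\mapsto U^{(\lambda)}_s$, $0\leq s\leq t$, not an infimum over paths. For a time-independent total Hamiltonian this length is exactly $t\,\|H^{(\lambda)}_{\mathrm{tot}}\|_{\mathrm{HS}}/\sqrt{d_{\mathrm{tot}}^2-1}$. Both sides of the estimate then follow from the triangle and reverse triangle inequalities applied to the system/interaction splitting, with your Cauchy--Schwarz step supplying $\sqrt\Gamma$. The \emph{main obstacle} of Step~3 does not exist and should be removed rather than patched. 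The leading-term matching in Step~2 is also not automatic. Note that $\|H\otimes I_E\|_{\mathrm{HS}}=\sqrt{d_E}\,\|H\|_{\mathrm{HS}}$, and a free term $I_S\otimes H_E$ would add in quadrature at order one in the traceless Hilbert--Schmidt norm, not at order $\lambda$. Whether $\hat H_S$ denotes the embedded operator, and whether the surrogate of eq.~(96) omits $H_E$, are exactly what the dictionary $H\leftrightarrow\hat H_S$ must settle. You have to check them against the definitions in \cite{AcevedoFalco2026} rather than assume them.
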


Proposition~\ref{prop:lindblad-bound} is a direct translation of~\cite[Lemma~3, Cor.~3]{AcevedoFalco2026} into the present notation ($H\leftrightarrow\hat H_S$, $L_k\leftrightarrow\hat L_\alpha$), giving an explicit a priori bound on the geometric cost of implementing a given Lindblad trajectory, with the leading term set by the Hamiltonian alone and the dissipative correction controlled by $\Gamma$.

\begin{remark}[Scope: a definition, not yet an existence theory]
\label{rem:lindblad-scope}
Definition~\ref{def:lindblad-complexity} is a proposed, representation-dependent choice of complexity weight, not an intrinsic classification of all possible Lindblad-sector metrics. In the finite-dimensional setting, if the associated fixed operator $G$ satisfies the uniform positivity hypothesis~\eqref{eq:G-fixed-bounds}, the cost-only variational problem with unchanged continuity equation falls under Theorem~\ref{thm:existence-fixed-G}; if, in addition, the chosen decomposition is calculus-compatible in the sense of Example~\ref{ex:weighted-derivations-vna}, the deformed-continuity problem of Section~\ref{sec:calculus-deformation} is covered by Theorem~\ref{thm:complexity-deformation} and Corollary~\ref{cor:deformed-geodesics}. What remains open is not finite-dimensional existence for a fixed positive representation, but rather the intrinsic problem of choosing $G$ canonically under the GKSL representation freedom, and reconciling such a choice with a Petz-class $G_\rho^{(f)}$ compatible with detailed-balance trajectories.
\end{remark}

\subsection{Open questions}
\label{sec:discussion}

We briefly indicate a few directions that lie beyond the scope of this paper and deserve separate treatment:

\begin{enumerate}[label=(\alph*)]
	\item \textbf{Existence and uniqueness of minimizers.} Theorem~\ref{thm:existence} settles existence for the unpenalized metric $W_\partial$ in finite dimensions, adapting the direct-method argument of Carlen--Maas~\cite{CarlenMaas2017,CarlenMaas2020}; the infinite-dimensional case requires weak compactness in $\M_*$ (the $\sigma(\M_*,\M)$ topology) in place of the finite-dimensional Sobolev embedding used in Step~4 of the proof, and is not addressed here directly, though Corollary~\ref{cor:deformed-geodesics} gives a conditional infinite-dimensional geodesic-existence statement for the calculus-deformed metric $W_{\partial_G}$ of Section~\ref{sec:calculus-deformation} by importing Wirth's Dirichlet-form machinery~\cite{Wirth2018}. For the penalized functional $\Acal_G$, Theorem~\ref{thm:existence-petz} settles existence for the Petz class of range-compatible density-dependent metrics, while Theorem~\ref{thm:existence-fixed-G} settles the finite-dimensional cost-only problem for every uniformly positive fixed physical weight, even when it is diagonal in a basis unrelated to that of $\rho$. What remains open at the existence level is therefore the genuinely infinite-dimensional case, degenerate fixed weights, non-range-regular alternative mobilities such as the arithmetic mean, and broad state-dependent weights not covered by the Petz class. Uniqueness of minimizers is not addressed at all and would require strict convexity together with an absence of conjugate points, as discussed for the specific example of Sections~\ref{ssec:qubit-scope}--\ref{ssec:qubit-jacobi}, where the question is reduced to an explicit Jacobi equation~\eqref{eq:jacobi-eq} and is consistently supported by the numerical scan of Conjecture~\ref{prop:qubit-jacobi-numerics}, but not yet proved, in the regime $\alpha_y=\min(\alpha_x,\alpha_y,\alpha_z)$.
	\item \textbf{Functional inequalities.} By analogy with~\cite{CarlenMaas2020}, one may expect noncommutative Ricci curvature bounds, logarithmic Sobolev inequalities, and spectral gap estimates associated with $W_\partial$, conditioned on the choice of $G_\rho$.
	\item \textbf{Non-semifinite case.} Extending $\rhohat$ via the modular operator (Remark~\ref{rem:modular}) to type III algebras, in a way compatible with~\eqref{eq:continuity}, remains to be worked out explicitly.
	\item \textbf{Choice of $G_\rho$.} Interpreting $G_\rho$ as a geometric complexity metric — in the sense of penalizing costly circuits or generators — suggests connections with notions of quantum computational complexity; Section~\ref{sec:bridge} and Section~\ref{ssec:lindblad-complexity} formalize this correspondence in the Hamiltonian and dissipative sectors respectively, via the dilation-based framework of~\cite{AcevedoFalco2026}, but a unified, existence-compatible choice of $G_\rho$ valid across both sectors remains open (Remarks~\ref{rem:bridge-local-min} and~\ref{rem:lindblad-scope}).
\end{enumerate}

\appendix
\section{Explicit curvature formula and numerical protocol for Conjecture~\ref{prop:qubit-jacobi-numerics}}
\label{app:jacobi-numerics}

For completeness and reproducibility, we record here the closed form of $K(\theta)$ from Proposition~\ref{prop:qubit-jacobi} and the numerical protocol underlying Conjecture~\ref{prop:qubit-jacobi-numerics}.

\subsection{Closed form of \texorpdfstring{$K(\theta)$}{K(theta)}}

Writing $S:=\sin^2\theta$ and $N(S):=-(\alpha_x-\alpha_y)(\alpha_x-\alpha_z)^2(\alpha_y-\alpha_z)\,S^2 - 2\alpha_z(\alpha_x-\alpha_z)\bigl(\alpha_x^2-\alpha_y^2+\alpha_y\alpha_z\bigr)\,S + \alpha_z^2\bigl(\alpha_x^2+\alpha_x\alpha_y-\alpha_x\alpha_z+\alpha_y^2-\alpha_y\alpha_z\bigr)$,
\begin{equation}
\label{eq:K-closed-form}
K(\theta) = \frac{N(S)}{\alpha_x\alpha_y\alpha_z\,\bigl(\alpha_z+S(\alpha_x-\alpha_z)\bigr)^2}.
\end{equation}
This was obtained by: (i) computing the exact induced cometric $E(\theta,\psi),F(\theta,\psi),C(\theta,\psi)$ via the Schur-complement formula of Section~\ref{ssec:qubit-jacobi} with a computer algebra system; (ii) expanding each in $\psi$ to the order needed for the curvature formula (second order for $E,F$, first order for the odd function $C$) at fixed $\theta$ — licit because the standard 2-dimensional curvature formula involves at most second-order derivatives of the metric coefficients, so it depends on the exact metric only through this Taylor data; (iii) computing the Christoffel symbols and the Riemann tensor component $R_{\theta\psi\theta\psi}$ of the resulting metric symbolically and setting $\psi=0$. The result was cross-checked three ways: it reduces to $K\equiv1$ when $\alpha_x=\alpha_y=\alpha_z$ (round sphere); it reproduces exactly, after simplification, the boundary values~\eqref{eq:K-boundary} already stated in Proposition~\ref{prop:qubit-jacobi}; and it agrees, to five significant figures, with an independent purely numerical computation of $K(\theta)$ via finite differences on the untruncated metric (central differences, step $10^{-5}$ for the metric and $10^{-3}$--$10^{-4}$ for the Christoffel symbols) at a generic interior point ($\theta=1$, $(\alpha_x,\alpha_y,\alpha_z)=(2,3,5)$).

\subsection{Numerical protocol for Conjecture~\ref{prop:qubit-jacobi-numerics}}

The scalar Jacobi equation~\eqref{eq:jacobi-eq}, with $K(\theta)$ as in~\eqref{eq:K-closed-form}, was integrated with \texttt{scipy.integrate.solve\_ivp} (explicit Runge--Kutta 4(5), relative tolerance $10^{-8}$, absolute tolerance $10^{-10}$, maximum step $10^{-2}$), starting from $\theta_0=10^{-8}$ (to avoid the coordinate singularity at the pole) with the exact, reproducible initial values $J(\theta_0)=\theta_0=10^{-8}$, $J'(\theta_0)=1$ — the leading term of the regular Frobenius solution $J(\theta)=\theta-\tfrac16\alpha_yK(0)\,\theta^3+O(\theta^5)$ of~\eqref{eq:jacobi-eq} matching $J(0)=0$, $J'(0)=1$, so that the $O(\theta_0^3)$ error made by using $J(\theta_0)=\theta_0$ instead of the exact regular solution is of order $10^{-24}$, far below the integrator tolerances below — and integrated to $\theta=\pi-10^{-6}$. A conjugate point was flagged by a zero of $J(\theta)$ in $(10^{-4},\pi-10^{-6})$ detected via \texttt{solve\_ivp}'s bidirectional event location (\texttt{direction=0}, so sign changes in either direction are located), excluding the trivial zero at the initial point; this standard event detector does not by itself certify even-order tangencies, so the sampled output was also inspected for near-minima of $|J|$ at the integration mesh. In every flagged case $J'\neq0$ at the detected root, consistent with a simple (transversal) conjugate point rather than a higher-multiplicity degeneracy. Two regimes were scanned on a systematic grid, $\alpha_y=1$ fixed and $\alpha_x/\alpha_y,\alpha_z/\alpha_y$ each ranging over $6$ log-spaced values in $[1.02,10^3]$ ($36$ triples $(\alpha_x,\alpha_y,\alpha_z)$ with $\alpha_y=\min$), and the reciprocal grid with $\alpha_x/\alpha_y,\alpha_z/\alpha_y\in[10^{-3},0.99]$ ($36$ triples with $\alpha_y=\max$): no conjugate point was found in any of the $36$ cases with $\alpha_y=\min$, while all $36$ cases with $\alpha_y=\max$ produced one, consistent with the isotropic limit producing a conjugate point exactly at $\theta=\pi$ as the anisotropy $\to0$.

\section*{Data availability statement}

No datasets were generated or analysed during the current study. The symbolic
and numerical computations supporting Appendix~\ref{app:jacobi-numerics} (the
closed form of $K(\theta)$ and the conjugate-point scan of
Conjecture~\ref{prop:qubit-jacobi-numerics}) were carried out with the scripts
described therein, which are openly available in the accompanying GitHub repository, \url{https://github.com/afalco/complexity-deformed-transport}.

\bibliographystyle{plain}
\bibliography{references}

\end{document}